 \definecolor{BLACK}{gray}{0}
 \definecolor{WHITE}{gray}{1}
 \definecolor{RED}{rgb}{1,0,0}
 \definecolor{GREEN}{rgb}{0,1,0}
 \definecolor{BLUE}{rgb}{0,0,1}
 \definecolor{CYAN}{cmyk}{1,0,0,0}
 \definecolor{MAGENTA}{cmyk}{0,1,0,0}
 \definecolor{YELLOW}{cmyk}{0,0,1,0}
\theoremstyle{plain}
\newtheorem{thm}{\protect\theoremname}
\theoremstyle{plain}
\newenvironment{proof}[1][\protect\proofname]{\par
\normalfont\topsep6\p@\@plus6\p@\relax
\trivlist
\itemindent\parindent
\item[\hskip\labelsep
\scshape
#1]\ignorespaces
}{%
\endtrivlist\@endpefalse
}
\providecommand{\proofname}{Proof}
\theoremstyle{plain}
\theoremstyle{plain}
\newtheorem{cor}[thm]{\protect\corrolaryname}
\theoremstyle{plain}
\newtheorem{con}[thm]{\protect\conjecturename}
\definecolor{urlcolor}{rgb}{0,0,0.7}
\newcommand{\id}{\mathbbm{1}}
\newcommand{\CC}{\mathbbm{C}}
\newcommand\numberthis{\addtocounter{equation}{1}\tag{\theequation}}
\newcommand{\ran}{\rangle}
\newcommand{\lan}{\langle}
\DeclareMathOperator{\tr}{Tr}
\DeclareMathOperator{\diag}{diag}
\providecommand{\lemmaname}{Lemma}
\providecommand{\propositionname}{Proposition}
\providecommand{\theoremname}{Theorem}
\providecommand{\corrolaryname}{Corrolary}
\providecommand{\conjecturename}{Conjecture}
\begin{document}

\title{Sufficient separability criteria and linear maps}

\author{Maciej Lewenstein}
\email{maciej.lewenstein@icfo.es}
\affiliation{ICFO -- Institut de Ci\`encies Fot\`oniques, The Barcelona Institute
of Science and Technology, 08860 Castelldefels, Spain}
\affiliation{ICREA -- Instituci\'o Catalana de Recerca i Estudis Avan\c cats, Lluis
Companys 23, 08010 Barcelona, Spain}

\author{Remigiusz Augusiak}
\affiliation{ICFO -- Institut de Ci\`encies Fot\`oniques, The Barcelona Institute
of Science and Technology, 08860 Castelldefels, Spain}
\affiliation{Center for Theoretical Physics, Polish Academy of Sciences, Al. Lotnik\'ow 32/46, 02-668 Warsaw, Poland}

\author{Dariusz Chru\'sci\'nski}
\affiliation{Institute of Physics, Faculty of Physics, Astronomy and Informatics,   Nicolaus Copernicus University, Grudziadzka 5/7, 87--100 Torun, Poland}

\author{Swapan Rana}
\affiliation{ICFO -- Institut de Ci\`encies Fot\`oniques, The Barcelona Institute
of Science and Technology, 08860 Castelldefels, Spain}

\author{Jan Samsonowicz}
\affiliation{Faculty of Mathematics and Information Science,
Warsaw University of Technology, Pl. Politechniki 1, 00-61
Warszawa, Poland}

\begin{abstract}
We study families of positive and completely positive maps acting
on  a bipartite system  $\CC^M\otimes \CC^N$ (with $M\leq
N$). The maps have  a property that when applied to any state (of a given entanglement class) 
they result in a separable state, or more generally a state of another  certain entanglement
class (e.g., Schmidt number $\leq k$).  This allows us to derive  useful families of sufficient
separability criteria. Explicit examples of such
criteria have been constructed for arbitrary $M,N$, with a special emphasis
on $M=2$.  Our results can be viewed as
generalizations of the known facts that in the sufficiently close
vicinity of the completely depolarized state (the
normalized identity matrix), all states are separable (belong to
"weakly" entangled classes). Alternatively, some of our results
can be viewed as an entanglement classification for a certain family of states,
corresponding to mixtures of the completely polarized state with
pure state projectors, partially transposed and locally
transformed pure state projectors.
\end{abstract}

\pacs{03.65.Aa, 03.67.Hk}

\maketitle

\section{Introduction}

Entanglement is a property of density operators identified with
quantum mechanical states of a  multipartite system. Such
operators act on the corresponding Hilbert space, which has a {\it
fixed} tensor structure, corresponding to different parties
(Alice, Bob, Charlie,...). Notably, entanglement is the most
fundamental resource for quantum information processing
\cite{4Horodeckis.RMP.2009}. For this reasons characterization of
entanglement and related quantum correlations is one of the most
important tasks in quantum information science. Yet, it is a very
challenging task, since even for bipartite systems the problem of
determining if a given state is entangled or not is NP-hard \cite{Gurvits.JCSS.2003} 
from the computational point of view. Although in low
dimensions there exist efficient numerical codes checking
entanglement using semi-definite programming \cite{Doherty+2.PRL.2002},
this problem becomes untractable when the dimension of the
underlying tensor product Hilbert space grows. For this reason it
is extremely important to develop and have at our disposal
operational entanglement criteria that can be used in experiments.

So far, most of the known separability criteria are necessary
(that is sufficient for detecting entanglement), starting with the
prominent positive partial transpose criterion, proposed by
Peres \cite{Peres.PRL.1996}, and also proven to be sufficient for 
two-qubits and one qubit-qutrit by Horodeckis
\cite{3Horodecki.PLA.1996}. In fact transposition $T$ is a
paradigmatic example of a {\it positive map}, i.e., a map that maps
states onto states. Transposition is not, however, completely
positive, i.e., its extensions by identity map to larger tensor
spaces do not correspond to positive maps. This is actually a
general fact, proven by Horodeckis (cf.\cite{4Horodeckis.RMP.2009}): 
any positive map acting exclusively
on a space of Alice's (Bob's) operators, transforms a separable state
into a positive definite state. Conversely, a state is separable
if it remains positive definite under action of all positive map
on Alice's (Bob's) side. Several examples of positive maps have been
proposed and studied in the context of entanglement: cf.  the
celebrated reduction map \cite{M+PHorodecki.PRA.1999,Cerf+2.PRA.1999}, 
Breuer-Hall map
\cite{Breuer.PRL.2006,Hall.JPA.2006}, or Choi map \cite{Choi.LAA.1975}.
Note that in contrast to the Breuer-Hall and Choi maps, the reduction map
is decomposable, i.e., it does not allow to detect entanglement of
states with positive partial transpose. Also, note that positive
maps are not "physical" and cannot be realized directly in
experiments; nevertheless, it is possible to realize them
indirectly, using the, so called, structural approximations to
positive maps
\cite{Fiurasek.PRA.2002,Horodecki+Ekert.PRL.2002,Korbicz+4.PRA.2008,Czekaj,Chruscinski.JPA.2014}, 
or multi-copy entanglement witnesses \cite{Carteret,HAD}.

Another common approach to assure separability and detect
entanglement is based on the celebrated  entanglement  witnesses (cf.
\cite{4Horodeckis.RMP.2009,Lewenstein+3.PRA.2000,Lewenstein+3.PRA.2001}).
These are observables, whose average value is non-negative on all
separable states, whereas it is strictly negative on an entangled
state. Witnesses detect some entangled states only, but their
advantage is that they can be directly measured using local optimized measurements
\cite{Guhne+6.JMO.2003}. Generalizations of witnesses and
maps to the non-linear witnesses and maps are possible, but rare
and not so frequently used (cf. \cite{4Horodeckis.RMP.2009}).

Sufficient criteria of separability (i.e., necessary entnaglement
criteria) are not so common. They appeared very early in the theory of
entanglement in the studies of {\it robustness of entanglement}
\cite{Vidal.PRA.1999}. More importantly, sufficient sepaprability
criteria turn out to be crucial in the studies of volume of
separable states, which amounts to estimations of the size of the
ball around the separable states that contains separable states
only \cite{Zyczkowski+3.PRA.1998}. These ideas proved being
important  for the general aspects of quantum computing---indeed
they allowed Braunstein {\it et al.} to demonstrate that the high
temperature NMR "quantum computing" does not involve entanglement
\cite{Braunstein+5.PRL.1999}. This important problem has been
followed by many researchers---"classical" results were obtained
for instance  by  Barnum and
Gurvits \cite{Gurvits+Barnum.PRA.2002}; Later, in  the series of
works by Szarek {\it et al.}
\cite{Szarek.PRA.2005,Aubrun+Szarek.PRA.2006} powerful estimations
of the volume of separable states were made for multipartite
systems.

Here we present alternative and general approach to derive
sufficient criteria for separability and more generally of
belonging to a certain class of entanglement, quantum correlations
etc. Alternatively, we will derive conditions when a state of a
certain form (corresponding to a mixture of the completely
polarized state with pure state projectors, partially transposed
and locally transformed pure state projectors etc.) belongs to a
certain entanglement/correlation  class. Our approach is somewhat
related to that of Barnum and Gurvits: we consider families of
maps acting on global states of a bi- or multiparty system, having
the property that when  applied to a state (a state of a certain
entanglement class) they produce  a separable state (a state of
another certain entanglement class).

The paper is organized as follows. In Section \ref{Preliminaries}
we present our notation, definitions, and preliminary facts. In
Section \ref{General} we explain the general theorem, on which we
then base our sufficient criteria, and sketch the previously known
results -- in particular we remind the reader the results of
Vidal, Barnum and Gurvits. Main results for $2\otimes N$ and
$M\otimes N$ systems are presented in Sections~\ref{Sec:2xN} and
\ref{Sec:MxN}, respectively. We focus mainly on sufficient
separability criteria, but also discuss some results that go
beyond this paradigm. We close the paper with the
conclusions and outlook in Section~\ref{Conclusion}.

\section{\label{Preliminaries} Preliminaries}

We consider states of a  bipartite system of
Alice and Bob, described by density operators $\rho$ acting on the
Hilbert space 
${\cal H}={\cal H}_A\otimes{\cal H}_B=
\CC^M\otimes \CC^N$ with $M\leq N$. The space of all bounded
operators acting on it is denoted by ${\cal B}({\cal H}_A\otimes{\cal H}_B)$,
while the set of states, that is, operators from ${\cal B}({\cal H}_A\otimes{\cal H}_B)$ 
obeying $\rho\ge 0$, $\rho =\rho^\dag$
and ${\rm Tr}(\rho)=1$, as $R$ (also denoted as $\Sigma_M$, see
below). Note that $R$ is convex and compact. However, the
normalization condition of the states and, consequently, the
compactness of $R$ will only be assumed if necessary. The
transposed operator $\rho$ will be denoted as $\rho^T$, whereas
partially transposed as $\rho^{T_A}$($\rho^{T_B}$). In the following we will
consider also various classes of states such as:
\begin{itemize}

\item Separable states $\Sigma$, also denoted as $\Sigma_1$, i.e.,
states that admit the following decomposition
\begin{equation}
\sigma=\sum_k^K p_k |\Psi_k\rangle_1 \langle \Psi_k|_1,
\label{sigma}
\end{equation}
where $|\Psi_k\rangle_1= |e_k\rangle \otimes |f_k\rangle$ are
simple tensors, i.e., product also know as vectors with Schmidt
rank 1, i.e., having only one non-vanishing Schmidt coefficient
\cite{Peres.S.1995}.

\item States with Schmidt number $n\le M$, $\Sigma_n$, i.e., the states that admit
the decomposition
\begin{equation}
\sigma=\sum_k^K p_k |\Psi_k\rangle_n \langle \Psi_k|_n,
\label{sigma}
\end{equation}
where $|\Psi_k\rangle_n= \sum_{l=1}^n \lambda_l|e_{kl}\rangle
\otimes |f_{kl}\rangle$ are vectors with Schmidt rank $\le n$, and
at least one of them has Schmidt rank equal to $n$
\cite{Terhal+Horodecki.PRAR.2000,Sanpera+2.PRAR.2001}.

\item PPT-operators $W_{PPT}$, i.e., those that have a positive patrial transpose, $w^{T_A}\ge
0$; in particular, we will talk about PPT-states, $R_{PPT}$, i.e.,
$\rho\in R$ and $\rho\in W_{PPT}$, i.e., $\rho^{T_A}\ge 0$.

\item Pre-witnesses of entangled states $W$, also denoted as $W_1$, i.e.,
observables $W=W^\dag$ such that for every $w\in W$ and every
$\sigma \in \Sigma$, we have ${\rm Tr}(w\sigma)\ge 0$. Note that
the genuine witnesses additionally must not be positive definite,
i.e., $w \notin R=\Sigma_M$ \cite{4Horodeckis.RMP.2009,Lewenstein+3.PRA.2000}

\item Pre-witnesses of entangled states of Schmidt number $n$, $W_n$, i.e.,
observables $W=W^\dag$ such that for every $w\in W_n$ and every
$\sigma_n \in \Sigma_n$, we have ${\rm Tr}(w_n\sigma_n)\ge 0$.
Note that the genuine witnesses additionally must detect  a state
from $\Sigma_{n+1}$, there must exist a $\sigma_{n+1}\in
\Sigma_{n+1}$, such that ${\rm Tr}(w_{n+1}\sigma_n) < 0$
\cite{Terhal+Horodecki.PRAR.2000,Sanpera+2.PRAR.2001}.

\item Decomposable entanglement pre-witnesses, i.e., pre-witnesses
that admit a decomposition $W=P+ Q^{T_A}$, where $P, Q \ge 0$.
Witnesses that do not admit such decomposition are termed
non-decomposable (cf.
\cite{Lewenstein+3.PRA.2000,Lewenstein+3.PRA.2001}).

\end{itemize}

Consequently, we will consider various classes of linear maps \[
\Lambda\colon{\cal B}({\cal H}_A\otimes{\cal H}_B) \to {\cal
B}({\cal H}_A\otimes{\cal H}_B)\] that preserve hermiticity. We
will pay particular attention to those maps that map a certain subset
of $S\subset {\cal B}({\cal H}_A\otimes{\cal H}_B)$ to another
certain subset of $S'\subset {\cal B}({\cal H}_A\otimes{\cal
H}_B)$, i.e., for every $w \in S$ we have $\Lambda(w)\in S'$. We
will typically consider families of maps $\Lambda_{\bf p}$,
parameterized  by a set of parameters ${\bf p}=(\alpha, \beta,
\gamma, \ldots)$, and assume that the map is invertible for almost
all values of the parameters.  We will denote by $R(\rho)$,
$K(\rho)$, and $r(\rho)={\rm dim} R(\rho)$ the range, the kernel
and the rank of $\rho$, respectively. Complex conjugation will be
denoted according to the  "physical tradition" as $*$, so that the
complex conjugate of a vector $|e\rangle$ is $|e^*\rangle$, while
any perpendicular vector by $|e^{\perp}\rangle$. Because of the
reasons that will become clear below we will mostly work with even
$M$ and $N$.  In such case it is straightforward to implement in an
easy way, the so called, Breuer-Hall unitary operators $V$,
$VV^{\dag}=\id$, such that for all vectors $|e\rangle$,
\begin{equation}
\label{breuer-hall}
 V|e^*\rangle=|e^{\perp}\rangle.
\end{equation}

\section{\label{General} General Theorems}

Our approach is mainly based on the following simple observation.

\begin{thm}
\label{thm:general} Let  $S, S'$ are convex and compact subsets of
${\cal B}({\cal H}_A\otimes{\cal H}_B)$, and let $\Lambda_{\bf p}\colon S\to S'$
be a family of maps, invertible for almost all $\bf p$. Let
${\cal P}_{SS'}$ denote a subset of the parameters set, and the
maps have the property that for every $w\in S$ we have that
$\Lambda_{\bf p}(w)\in S'$, provided ${\bf p}\in {\cal P}_{SS'}$.
Then $\Lambda^{-1}_{\bf p}(\sigma) \in S$ $\Rightarrow$
$\sigma \in S'$.
\end{thm}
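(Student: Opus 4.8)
The plan is to prove the implication by a direct composition argument, since the statement is essentially the defining property of $\Lambda_{\bf p}$ read through its inverse. First I would fix a parameter value ${\bf p}\in\mathcal{P}_{SS'}$ for which $\Lambda_{\bf p}$ is invertible; such ${\bf p}$ exist because invertibility is assumed to hold for almost all parameters, so $\Lambda^{-1}_{\bf p}$ is well defined on the range of $\Lambda_{\bf p}$. With ${\bf p}$ fixed in this way, both the mapping property $\Lambda_{\bf p}(S)\subseteq S'$ and the inverse $\Lambda^{-1}_{\bf p}$ are simultaneously available, which is all that the argument needs.

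Next, assuming the hypothesis $\Lambda^{-1}_{\bf p}(\sigma)\in S$, I would introduce the shorthand $w:=\Lambda^{-1}_{\bf p}(\sigma)$. This is legitimate precisely because $\sigma$ lies in the domain of $\Lambda^{-1}_{\bf p}$ (equivalently, in the range of $\Lambda_{\bf p}$), which is already implicit in the assumption that $\Lambda^{-1}_{\bf p}(\sigma)$ is meaningful. By construction $w\in S$, so the defining property of the family, namely that $\Lambda_{\bf p}$ maps $S$ into $S'$ whenever ${\bf p}\in\mathcal{P}_{SS'}$, yields $\Lambda_{\bf p}(w)\in S'$. Applying $\Lambda_{\bf p}$ to the definition of $w$ and using $\Lambda_{\bf p}\circ\Lambda^{-1}_{\bf p}=\mathrm{id}$ gives $\Lambda_{\bf p}(w)=\sigma$, whence $\sigma\in S'$, which is exactly the desired conclusion.

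The only point requiring care, and the closest thing to an obstacle, is the bookkeeping around invertibility and domains: one must restrict to those ${\bf p}\in\mathcal{P}_{SS'}$ for which the inverse exists, and one must ensure that $\sigma$ belongs to the range of $\Lambda_{\bf p}$ so that $\Lambda^{-1}_{\bf p}(\sigma)$ is genuinely defined. Both are guaranteed by the standing hypotheses, so no genuine difficulty arises. I would also note that the convexity and compactness of $S$ and $S'$ play no role in this particular implication; they are structural assumptions that become relevant only when the abstract scheme is specialized to concrete classes of states and witnesses in the later sections, where the substantive work lies in exhibiting explicit maps $\Lambda_{\bf p}$ together with parameter regions $\mathcal{P}_{SS'}$ that realize the mapping hypothesis.
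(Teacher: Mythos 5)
Your proposal is correct and is exactly the paper's argument, which the authors compress into the single observation $\Lambda_{\bf p}\bigl[\Lambda^{-1}_{\bf p}(\sigma)\bigr]=\sigma$; you simply spell out the bookkeeping (fixing ${\bf p}\in\mathcal{P}_{SS'}$ with an inverse, setting $w:=\Lambda^{-1}_{\bf p}(\sigma)\in S$, and applying the mapping hypothesis). Your side remark that convexity and compactness of $S,S'$ are not used in this implication is also accurate.
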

\begin{proof}
Note that $\Lambda_{\bf p}\left[\Lambda^{-1}_{\bf p}(\sigma)\right]=\sigma$. 
\end{proof}
Now let $S$ be the set of all $M\otimes N$ states and $S'\subset S$ be the separable states. This general theorem can be applied to derive sufficient separability criterion, provided the choice of $\Lambda_{\bf p}$ is such that
\begin{itemize}
	\item[i)]  we can easily
	check that $\Lambda_{\bf p}^{-1}(S)\subset S$,
	\item[ii)] we can prove the
	assumption that $\Lambda_{\bf p}(S)\subset S'$.
\end{itemize}   Clearly, the difficulty of deriving the sufficient separability criteria
is hidden in the condition ii).

Below we provide some examples (of sufficient separability criteria) known from the literature, to illustrate concrete use of Theorem \ref{thm:general}. Let us consider the simple reduction-type map $\Lambda_{\alpha}(\rho):=\tr(\rho)\id+\alpha\rho$. For condition ii), we have to find the range of the parameter $\alpha$ such that $\Lambda_{\alpha}(\rho)$ is separable. Once this is done, $\Lambda_{\alpha}^{-1}(\rho)\geq 0$ would be a sufficient separability criterion.  Separability of $\Lambda_{\alpha}(\rho)$ is well known and given by the following result.

\begin{thm}
\label{thm:barnum} \cite{Vidal.PRA.1999,Gurvits+Barnum.PRA.2002} Let $\Lambda_\alpha(\rho)=
{\rm Tr}(\rho)\id
+ \alpha \rho $ be the family of maps, and
$-1\le\alpha \le 2$. Then $\rho\ge 0$ $\Rightarrow$ $\Lambda_\alpha(\rho)=:\sigma \in \Sigma$,
 i.e. $\sigma$ is separable.
\end{thm}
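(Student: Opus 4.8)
The plan is to reduce the statement to two extreme values of $\alpha$ and a single pure input, build an explicit separable decomposition at the positive extreme, and isolate the negative extreme as the genuinely hard point. First I would use that $\Lambda_\alpha$ is linear in $\rho$ and that $\Sigma$ is convex: writing $\rho=\sum_k p_k|\psi_k\rangle\langle\psi_k|$ with $\sum_k p_k=1$ gives $\Lambda_\alpha(\rho)=\sum_k p_k\Lambda_\alpha(|\psi_k\rangle\langle\psi_k|)$ with the same weights, so it suffices to treat a pure input $\rho=P=|\psi\rangle\langle\psi|$. For fixed $P$ the assignment $\alpha\mapsto\id+\alpha P$ is affine, and every $\alpha\in[-1,2]$ is a convex combination of the values $-1$, $0$ and $2$; since $\id$ (the case $\alpha=0$) is trivially separable, convexity of $\Sigma$ reduces everything to $\alpha=2$ and $\alpha=-1$. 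Finally, if $|\psi\rangle=\sum_{i=1}^r\sqrt{\lambda_i}\,|ii\rangle$ has Schmidt rank $r$, I restrict to the $r\otimes r$ support of $P$: on the orthogonal complement of that block $\id+\alpha P$ acts as the identity, which is separable, so only $\id_{r^2}+\alpha P$ requires work.

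For $\alpha=2$ I would exhibit a product decomposition directly. Averaging the product projectors $|a_{\vec\theta}\rangle\langle a_{\vec\theta}|\otimes|b_{\vec\theta}\rangle\langle b_{\vec\theta}|$ over the relative phases $\vec\theta$, with $|a_{\vec\theta}\rangle=\sum_i\lambda_i^{1/4}e^{i\theta_i}|i\rangle$ and $|b_{\vec\theta}\rangle=\sum_i\lambda_i^{1/4}e^{-i\theta_i}|i\rangle$, all phase-dependent cross terms cancel except the diagonal and the coherences $|ii\rangle\langle jj|$, and one finds that this average equals $P+D$ with $D=\sum_{i\neq k}\sqrt{\lambda_i\lambda_k}\,|ik\rangle\langle ik|$ a positive diagonal (hence separable) operator. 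Consequently $\id_{r^2}+2P=2(P+D)+(\id_{r^2}-2D)$, and the residual $\id_{r^2}-2D$ is diagonal in the product basis with entries $1-2\sqrt{\lambda_i\lambda_k}\ge 1-(\lambda_i+\lambda_k)\ge 0$ by the arithmetic--geometric mean inequality and $\sum_m\lambda_m=1$; being a nonnegative diagonal operator it is separable, so $\id_{r^2}+2P$ is a sum of two separable operators. The equality case $\lambda_i=\lambda_k=\tfrac12$ shows why the constant is sharp, the two-qubit maximally entangled state being extremal.

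The \emph{main obstacle} is the endpoint $\alpha=-1$, i.e.\ separability of $\id_{r^2}-P$, the projector onto the orthogonal complement of a single pure vector. The same phase-averaging overshoots here: flipping signs to produce $-P$ doubles the diagonal weights, and the analogous residual $\id-R$ loses positivity as soon as some $\lambda_i>\tfrac12$. I would instead reduce to the extremal maximally entangled input $|\Phi\rangle$, for which $\id_{r^2}-|\Phi\rangle\langle\Phi|$ is, up to normalization, an isotropic state of fidelity $0$ with $|\Phi\rangle$ and thus separable by the standard $U\otimes U^\ast$ twirl, and then argue that a general pure $|\psi\rangle$ is less entangled, e.g.\ via a local filtering $A\otimes\id$ together with positivity of the absorbed remainder. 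Controlling that remainder so as to preserve genuine separability---rather than merely a positive partial transpose, which certifies separability only in $2\otimes 2$ and $2\otimes 3$---is the delicate step, and is where I expect the real work of the proof to lie.
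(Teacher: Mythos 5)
Your reductions (linearity to pure inputs, affine interpolation in $\alpha$ between the endpoints $-1$ and $2$, restriction to the $r\otimes r$ Schmidt block, with the complement of that block being diagonal in a product basis) are all sound, and your $\alpha=2$ argument is complete and is essentially the paper's own construction: the paper averages the product vectors $|p(\phi,\psi)\rangle=A(1,ae^{i\phi},be^{i\psi})^{\otimes 2}$ with \emph{equal} phases on the two factors and writes $\Lambda_\alpha(P)=\sigma^{T_A}+D$, whereas your conjugate-phase choice produces $P+D$ directly and so dispenses with the partial transpose --- a minor but genuine streamlining. The sharpness observation at $\lambda_i=\lambda_k=\tfrac12$ also matches the paper's $1-\alpha\lambda_0\lambda_1\ge 0$ bookkeeping.

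The genuine gap is exactly where you place it: the endpoint $\alpha=-1$. You leave it as a sketch, and the sketched filtering route cannot close it as stated. Writing $|\psi\rangle=\sqrt{r\lambda_0}\,(A\otimes\id)|\Phi\rangle$ with the contraction $A=\diag\bigl(\sqrt{\lambda_i/\lambda_0}\bigr)$ ($\lambda_0$ the largest Schmidt weight), filtering the separable $\id-|\Phi\rangle\langle\Phi|$ and absorbing the positive remainder $(\id-AA^\dagger)\otimes\id$ certifies separability only of $\id-\tfrac{1}{r\lambda_0}P$, and $\tfrac{1}{r\lambda_0}<1$ whenever $|\psi\rangle$ is not maximally entangled; no convex combination with $\alpha=0$ can then push the coefficient back to $-1$, so this route proves the theorem only on $[-1/(r\lambda_0),\,2]$. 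The paper closes the endpoint by a completely different and elementary device (in its second, self-contained proof in Section V.A): for $\alpha=-1$ it exhibits the explicit finite decomposition
\begin{equation*}
\id_{r\otimes r}-P=\sum_{i<j}\sigma_{ij},\qquad
\sigma_{ij}=\ket{\varphi_{ij}}\!\bra{\varphi_{ij}}+|ij\rangle\langle ij|+|ji\rangle\langle ji|,
\end{equation*}
with $\ket{\varphi_{ij}}=\sqrt{\lambda_j}\,|ii\rangle-\sqrt{\lambda_i}\,|jj\rangle$ in your normalization $\sum_m\lambda_m=1$; each $\sigma_{ij}$ lives on a two-qubit subspace and is PPT, hence separable, and $-1<\alpha<0$ follows by convexity. (The proof printed immediately after the theorem instead settles $M=2$ by a PPT check on the $2\otimes 2$ support and cites the Vidal--Tarrach bound $\alpha\le 1/(\lambda_0\lambda_1)$ for $M\ge 3$.) Adopting the $\sigma_{ij}$ decomposition --- or the Vidal--Tarrach citation --- in place of your filtering sketch would make your write-up a complete proof.
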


\begin{proof}
It is enough to prove the theorem for pure states, $\rho=|\Psi\rangle\langle
\Psi|$. For $M=2$, $|\Psi\rangle$ has maximally Schmidt rank 2,
and, without loosing generality, we can assume that $|\Psi\rangle
=\lambda_0 |0\rangle\otimes |0\rangle + \lambda_1|1\rangle\otimes
|1\rangle$. It is then enough to check positivity and separability
of $\Lambda(|\Psi\rangle\langle \Psi|)$ on a $2\otimes 2$ space
spanned by $|0\rangle$, $|1\rangle$ in both Alice's and Bob's
spaces. PPT provides then necessary and sufficient condition
\cite{3Horodecki.PLA.1996}, and we easily get that indeed $-1\le
\alpha\le 2$. The condition $\alpha \ge -1$ actually follows
already from a simpler requirement of positivity of
$\Lambda_{\alpha}(\rho)$, as it appears in the definition of the reduction
map \cite{M+PHorodecki.PRA.1999,Cerf+2.PRA.1999}. 
%
%

To prove the theorem for $M\geq 3$ we exploit the fact, proven in Ref. \cite{Vidal.PRA.1999} (see also Ref. \cite{Czekaj}) 
that the (unnormalized) mixture of $\ket{\psi}$ and the maximally mixed state, 
%
$\mathbbm{1}+\alpha\ket{\psi}\!\bra{\psi}$,
is separable iff 
\begin{equation}
\alpha\leq \frac{1}{\lambda_0\lambda_1}.
\end{equation}
This implies that $\mathbbm{1}+\alpha\ket{\psi}\!\bra{\psi}$ is separable 
for any $\ket{\psi}$ if $\alpha\leq 2$.
\end{proof}
This leads to the following separability criteria.

\begin{thm}
\label{cor:barnum1} {\bf (Sufficient separability criterion 1)} If
$\Lambda^{-1}_\alpha(\sigma):= [\sigma -{\rm
Tr}(\sigma)\id/(2N+\alpha)]/\alpha \ge 0$ then $\sigma\in \Sigma$,
i.e. $\sigma$ is separable.
\end{thm}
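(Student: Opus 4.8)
The plan is to read this statement as the direct specialization of the general inversion principle in Theorem~\ref{thm:general} to the reduction-type family $\Lambda_\alpha$, using Theorem~\ref{thm:barnum} to supply the crucial containment hypothesis. Concretely, I would take $S$ to be the cone of positive (hermitian) operators on $\CC^2\otimes\CC^N$ and $S'=\Sigma$ the separable states. Theorem~\ref{thm:barnum} asserts precisely that $\Lambda_\alpha(S)\subseteq S'$ for $-1\le\alpha\le 2$, which is exactly condition~ii) of the scheme; so all that remains is to make the inverse map (condition~i)) explicit and then invoke the general theorem.

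The one genuine computation is to invert $\Lambda_\alpha$. Writing $\sigma=\Lambda_\alpha(\rho)=\tr(\rho)\,\id+\alpha\rho$ and taking the trace on $\CC^2\otimes\CC^N$, where $\tr(\id)=2N$, gives $\tr(\sigma)=(2N+\alpha)\tr(\rho)$, hence $\tr(\rho)=\tr(\sigma)/(2N+\alpha)$. Substituting back and solving for $\rho$ yields
\begin{equation}
\rho=\frac{1}{\alpha}\left[\sigma-\frac{\tr(\sigma)}{2N+\alpha}\,\id\right]=:\Lambda^{-1}_\alpha(\sigma),
\end{equation}
which is exactly the operator appearing in the statement; this step needs only $\alpha\neq 0$ and $2N+\alpha\neq 0$, both guaranteed in the admissible range. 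Since $\sigma$ and $\id$ are hermitian, $\Lambda^{-1}_\alpha(\sigma)$ is automatically hermitian, so the hypothesis $\Lambda^{-1}_\alpha(\sigma)\ge 0$ is literally the statement $\Lambda^{-1}_\alpha(\sigma)\in S$.

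With these pieces in place the conclusion is immediate from Theorem~\ref{thm:general}: if $\Lambda^{-1}_\alpha(\sigma)\ge 0$, then $\sigma=\Lambda_\alpha\!\left[\Lambda^{-1}_\alpha(\sigma)\right]$ is the image under $\Lambda_\alpha$ of a positive operator, and hence separable by Theorem~\ref{thm:barnum}. I do not expect a serious obstacle, since the real content sits in Theorem~\ref{thm:barnum}; the points that require care are bookkeeping. First, the criterion is only meaningful for $\alpha$ in the range $-1\le\alpha\le 2$ where separability of $\Lambda_\alpha(\rho)$ is guaranteed, and the sign of $\alpha$ must be tracked when converting $\Lambda^{-1}_\alpha(\sigma)\ge0$ into a spectral bound on $\sigma$; in practice one would optimize over this interval to make the test as strong as possible. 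Second, the cone $S$ is not compact as literally demanded in Theorem~\ref{thm:general}, but this is harmless: the argument uses only invertibility of $\Lambda_\alpha$ together with the inclusion $\Lambda_\alpha(S)\subseteq S'$, and one may restrict to the normalized (compact) slice without loss, separability being invariant under positive rescaling.
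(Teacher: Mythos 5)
Your proposal is correct and is essentially the paper's own (implicit) argument: the paper derives this criterion directly by combining Theorem~\ref{thm:general} with the containment $\Lambda_\alpha(S)\subseteq\Sigma$ from Theorem~\ref{thm:barnum}, and your trace computation inverting $\Lambda_\alpha$ on $\CC^2\otimes\CC^N$ (using $\tr(\id)=2N$) matches the inverse stated in the theorem. Your added bookkeeping remarks --- restricting $\alpha$ to $[-1,2]\setminus\{0\}$ and noting that compactness of $S$ is inessential since only invertibility and the inclusion are used --- are sound and, if anything, make explicit what the paper leaves tacit.
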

 In particular, for the extreme values of $\alpha$, we get the following separability results. 
\begin{cor}
\label{cor:barnum2} If $\sigma -{\rm
Tr}(\sigma)\id/(2N+2) \ge 0$ then $\sigma\in \Sigma$, i.e.
$\sigma$ is separable.
\end{cor}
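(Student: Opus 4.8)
The plan is to obtain this statement as the extremal specialization of the criterion in Theorem~\ref{cor:barnum1}, taking the largest admissible value $\alpha=2$. Recall that for the reduction-type family $\Lambda_\alpha(\rho)=\tr(\rho)\id+\alpha\rho$ on $\CC^2\otimes\CC^N$, Theorem~\ref{cor:barnum1} asserts that $\Lambda_\alpha^{-1}(\sigma)=[\sigma-\tr(\sigma)\id/(2N+\alpha)]/\alpha\ge 0$ implies $\sigma\in\Sigma$. The present corollary is simply the boundary case of this family at the right endpoint of the interval $[-1,2]$.

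First I would substitute $\alpha=2$ into the inverse map, so that $2N+\alpha=2N+2$ and
\begin{equation}
\Lambda_2^{-1}(\sigma)=\frac{1}{2}\left[\sigma-\frac{\tr(\sigma)}{2N+2}\,\id\right].
\end{equation}
Since the prefactor $1/2$ is strictly positive, multiplying by it preserves positive semidefiniteness; hence the hypothesis $\sigma-\tr(\sigma)\id/(2N+2)\ge 0$ is exactly equivalent to $\Lambda_2^{-1}(\sigma)\ge 0$. Invoking Theorem~\ref{cor:barnum1} with $\alpha=2$ then delivers $\sigma\in\Sigma$ at once.

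The only point requiring a moment's care is that $\alpha=2$ is a legitimate value to plug in: it lies in the range $[-1,2]$ on which Theorem~\ref{thm:barnum} guarantees separability of $\Lambda_\alpha(\rho)$ for $\rho\ge 0$, and at this value the map remains invertible (one has $\alpha\neq 0$ and $2N+\alpha=2N+2\neq 0$, so no division by zero occurs and $\Lambda_2^{-1}$ is well defined), so the hypotheses of Theorem~\ref{thm:general} as implemented in Theorem~\ref{cor:barnum1} are met. I expect no substantive obstacle here: the entire analytic content---the separability of $\tr(\rho)\id+\alpha\rho$ for $\rho\ge 0$ and $\alpha\le 2$---is already supplied by Theorem~\ref{thm:barnum}, and this corollary merely records the strongest (largest-$\alpha$) conclusion that criterion affords.
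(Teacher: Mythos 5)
Your proof is correct and is essentially the paper's own route: the paper obtains Corollary~\ref{cor:barnum2} precisely by specializing the criterion of Theorem~\ref{cor:barnum1} to the extreme value $\alpha=2$, where $2N+\alpha=2N+2$ and the positive prefactor $1/2$ makes the hypothesis equivalent to $\Lambda_2^{-1}(\sigma)\ge 0$. Your explicit check that $\alpha=2$ lies in the admissible interval $[-1,2]$ of Theorem~\ref{thm:barnum} and that $\Lambda_2$ is invertible is exactly the (implicit) justification the paper relies on.
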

\begin{cor}
\label{cor:barnum3} If ${\rm
Tr}(\sigma)\id/(2N-1)-\sigma \ge 0$ then $\sigma\in \Sigma$, i.e.
$\sigma$ is separable.
\end{cor}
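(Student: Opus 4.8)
The plan is to recognize Corollary~\ref{cor:barnum3} as nothing more than the lower-endpoint case $\alpha=-1$ of the sufficient separability criterion~1 (Theorem~\ref{cor:barnum1}), exactly as Corollary~\ref{cor:barnum2} is its upper-endpoint case $\alpha=2$. First I would recall the logic behind criterion~1: Theorem~\ref{thm:barnum} guarantees that the reduction-type family $\Lambda_\alpha(\rho)={\rm Tr}(\rho)\id+\alpha\rho$ sends every positive operator on $\CC^2\otimes\CC^N$ to a separable state for all $\alpha\in[-1,2]$. Applying Theorem~\ref{thm:general} with $S$ the $2\otimes N$ states and $S'$ the separable states then asserts that positivity of the preimage, $\Lambda_\alpha^{-1}(\sigma)\ge 0$, certifies $\sigma\in\Sigma$. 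Since $\alpha=-1$ sits at the admissible boundary of $[-1,2]$, criterion~1 applies verbatim at this value.

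Next I would simply substitute $\alpha=-1$ into the explicit inverse recorded in Theorem~\ref{cor:barnum1}. With $2N+\alpha=2N-1$ in the denominator and an overall prefactor $1/\alpha=-1$, one obtains
\begin{equation}
\Lambda_{-1}^{-1}(\sigma)=-\left[\sigma-\frac{{\rm Tr}(\sigma)\,\id}{2N-1}\right]=\frac{{\rm Tr}(\sigma)\,\id}{2N-1}-\sigma .
\end{equation}
Hence the hypothesis $\Lambda_{-1}^{-1}(\sigma)\ge 0$ is \emph{exactly} the stated condition ${\rm Tr}(\sigma)\,\id/(2N-1)-\sigma\ge 0$, and criterion~1 immediately delivers $\sigma\in\Sigma$.

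There is no genuine obstacle here; the corollary is a one-line specialization of an already-proved criterion. The single place where care is needed — and the only spot where a sign error could creep in — is the division by the negative value $\alpha=-1$, which negates the bracketed operator and thereby turns the naive expression $\sigma-{\rm Tr}(\sigma)\,\id/(2N-1)$ into its negative ${\rm Tr}(\sigma)\,\id/(2N-1)-\sigma$. Everything else is routine substitution, so I would present the argument as the $\alpha=-1$ endpoint companion of Corollary~\ref{cor:barnum2} rather than as an independent proof.
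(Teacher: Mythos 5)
Your proposal is correct and matches the paper exactly: the paper presents Corollary~\ref{cor:barnum3} with no separate argument, precisely as the $\alpha=-1$ endpoint specialization of Theorem~\ref{cor:barnum1} (just as Corollary~\ref{cor:barnum2} is the $\alpha=2$ endpoint). Your substitution, including the sign flip from dividing by $\alpha=-1$ that turns $\sigma-{\rm Tr}(\sigma)\id/(2N-1)$ into ${\rm Tr}(\sigma)\id/(2N-1)-\sigma$, is exactly the intended one-line derivation.
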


It is amazing that these  simple criteria are strong enough to detect some states outside the "separable ball" around identity \cite{Gurvits+Barnum.PRA.2002} (see Section~\ref{Subsection:examples.2.N} for an explicit example and detailed discussion). Of course, not all states from the separable ball are detected. The reason for this weakness is that Barnum and Gurvits  have estimated the radius of the separable ball using a stronger result, namely: if the operator norm of a Hermitian matrix $X$ is bounded by 1, then $\id+ X$ is separable. In contrast, our simple criteria depend on the spectrum, not on the norm, and are thereby independent. Nonetheless, we will see in the next sections that with other choices of $\Lambda_{\bf p}$'s, we could derive further stronger sufficient separability criteria.

\section{\label{Sec:2xN}Main Results for $\CC^2\otimes\CC^N$}

\subsection{Reduction- and Breuer-Hall-like maps}

Let us start by presenting generalizations of Theorem~\ref{thm:barnum} 
derived in a similar spirit as the
generalization of the reduction map by the Breuer-Hall map
\cite{Breuer.PRL.2006,Hall.JPA.2006}. Note that Breuer-Hall's construction does not work
in the qubit case---Breuer-Hall's map identically vanishes for qubits. In our
case when we consider maps acting on the composite space of Alice
and Bob this restriction will not apply. In the two-dimensional
space there exist (up to a phase) a single unitary,
$\sigma_2$, with the property that for every $|e\rangle$,
$\sigma_2|e^*\rangle=|e^{\perp}\rangle$. For simplicity we will
consider the case when Bob's space has even dimension; in such a
case it is easy to construct analogous Breuer-Hall unitary
operator $V$, such that $V|f^*\rangle=|f^{\perp}\rangle$.

Below we will denote, for brevity: \[\tilde\rho_A=\sigma_2\rho^{T_A}\sigma_2,\]
\[\tilde\rho_B=V\rho^{T_B}V^\dag,\] and
\[\tilde\rho=\sigma_2V\rho^{T}V^\dag\sigma_2.\]
We can now prove the following result about a generalized $\Lambda_p$.

\begin{thm}
\label{thm:lewenstein1} Let $\Lambda_{\alpha,\beta}(\rho)= {\rm
Tr}(\rho)\id + \alpha \rho + \beta \tilde\rho_A$ be a family of
maps. Let $\alpha \ge {\rm max}[-1,\beta/2-1]$ and $\beta \ge
{\rm max}[-1,\alpha/2-1]$. Then $\rho\ge 0$ $\Rightarrow$
$\Lambda_{\alpha,\beta}(\rho)=:\sigma\in \Sigma$, i.e., $\sigma$ is
separable.
\end{thm}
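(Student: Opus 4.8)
The plan is to follow the architecture of the proof of Theorem~\ref{thm:barnum}, reducing to pure states and the two-qubit PPT criterion, while carrying the extra Breuer--Hall-type term $\tilde\rho_A=\sigma_2\rho^{T_A}\sigma_2$ throughout. Since $\Lambda_{\alpha,\beta}$ is linear in $\rho$ and the separable cone $\Sigma$ is convex, it suffices to prove the claim for $\rho=\ket{\Psi}\!\bra{\Psi}$ pure and (by homogeneity) normalized. For $M=2$ every $\ket{\Psi}$ has Schmidt rank at most two, so I would bring it to the canonical form $\ket{\Psi}=\lambda_0\ket{00}+\lambda_1\ket{11}$ with $\lambda_0^2+\lambda_1^2=1$ by a local unitary $U_A\otimes U_B$. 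The one point that must be checked carefully here -- and which did not arise for $\beta=0$ -- is that $\Lambda_{\alpha,\beta}$ is covariant under such local unitaries, i.e.\ that $\tilde\rho_A$ transforms like $\rho$. This follows from the $SU(2)$ identity $\sigma_2 U_A^{*}\sigma_2=U_A$ (together with the fact that $U_B$ acts on Bob and commutes with both $\sigma_2$ and the partial transposition on Alice), so that separability of $\Lambda_{\alpha,\beta}(\rho)$ is a local-unitary invariant and the Schmidt form is indeed without loss of generality.

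Next I would restrict attention to the four-dimensional block $\CC^2_A\otimes\mathrm{span}\{\ket{0}_B,\ket{1}_B\}$. Both $\rho$ and $\tilde\rho_A$ are supported there, while on the orthogonal complement $\sigma=\Lambda_{\alpha,\beta}(\rho)$ equals $\id$, which is manifestly separable; hence separability of $\sigma$ is equivalent to separability of its $2\otimes 2$ restriction. A direct computation gives $\tilde\rho_A=\lambda_0^2\ket{10}\!\bra{10}+\lambda_1^2\ket{01}\!\bra{01}-\lambda_0\lambda_1(\ket{00}\!\bra{11}+\ket{11}\!\bra{00})$, so the restricted operator $\id+\alpha\rho+\beta\tilde\rho_A$ splits into two $2\times 2$ blocks: a $\{\ket{00},\ket{11}\}$ block with diagonal $(1+\alpha\lambda_0^2,\,1+\alpha\lambda_1^2)$ and off-diagonal $(\alpha-\beta)\lambda_0\lambda_1$, and a diagonal $\{\ket{01},\ket{10}\}$ block $(1+\beta\lambda_1^2,\,1+\beta\lambda_0^2)$. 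Since we are in $2\otimes 2$, I would invoke the Horodecki theorem \cite{3Horodecki.PLA.1996} that PPT is necessary and sufficient for separability, noting that the partial transposition $T_A$ merely exchanges the coherences between the two blocks.

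The conditions $\sigma\ge 0$ and $\sigma^{T_A}\ge 0$ then reduce to positivity of the diagonal entries, giving $\alpha\ge -1$ and $\beta\ge -1$, together with the two determinant inequalities $(1+\alpha\lambda_0^2)(1+\alpha\lambda_1^2)\ge(\alpha-\beta)^2\lambda_0^2\lambda_1^2$ and $(1+\beta\lambda_0^2)(1+\beta\lambda_1^2)\ge(\alpha-\beta)^2\lambda_0^2\lambda_1^2$. Writing $t=\lambda_0^2\lambda_1^2\in[0,1/4]$ and using $\lambda_0^2+\lambda_1^2=1$, the first becomes $1+\alpha\ge(\beta^2-2\alpha\beta)\,t$, whose worst case is $t=1/4$; the resulting quadratic $\beta^2-2\alpha\beta-4-4\alpha\le 0$ factors with roots $\beta=-2$ and $\beta=2\alpha+2$, and since $\beta\ge -1$ the binding constraint is $\beta\le 2\alpha+2$, i.e.\ $\alpha\ge\beta/2-1$. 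By the $\alpha\leftrightarrow\beta$ symmetry of the second inequality one obtains $\beta\ge\alpha/2-1$. Combining these with $\alpha,\beta\ge -1$ reproduces exactly the hypotheses $\alpha\ge\max[-1,\beta/2-1]$ and $\beta\ge\max[-1,\alpha/2-1]$, completing the proof. I expect the main obstacle to be the covariance argument of the first paragraph, which legitimizes the Schmidt-form reduction for the $\tilde\rho_A$ term, and the correct bookkeeping of how $T_A$ moves the off-diagonal elements between the two blocks; once the explicit $4\times 4$ form is in hand, the remaining determinant optimization over $t\in[0,1/4]$ is routine.
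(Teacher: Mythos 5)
Your proposal is correct and follows essentially the same route as the paper's proof: reduction to pure states in Schmidt form, restriction to the $2\otimes 2$ block spanned by $|0\rangle,|1\rangle$ on both sides, and the two-qubit PPT criterion, with your determinant analysis (worst case $\lambda_0^2\lambda_1^2=1/4$, factorization into roots $\beta=-2$ and $\beta=2\alpha+2$) supplying exactly the "some algebra" the paper omits and reproducing the stated region $\alpha\ge\max[-1,\beta/2-1]$, $\beta\ge\max[-1,\alpha/2-1]$. Your explicit covariance check $\sigma_2 U_A^{*}\sigma_2=U_A$, which justifies the Schmidt-form reduction in the presence of the $\tilde\rho_A$ term, is a welcome addition that the paper leaves implicit in its "without any loss of generality".
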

\begin{proof}
The proof is very similar to that of Theorem \ref{thm:barnum} as above. 
It is enough to prove it for pure
states, $\rho=|\Psi\rangle\langle \Psi|$. Without any loss of generality, 
take $|\Psi\rangle =\lambda_0 |0\rangle\otimes
|0\rangle + \lambda_1|1\rangle\otimes |1\rangle$. It is enough to
check then positivity and separability of
$\Lambda(|\Psi\rangle\langle \Psi|)$ on a $2\otimes 2$ space
spanned by $|0\rangle$, $|1\rangle$ in both Alice's and Bob's
spaces, where PPT provides then necessary and sufficient
condition. After some algebra we obtain the desired condition,
represented graphically in Fig.~\ref{Fig:2xn.alpha.beta.range},
where we mark the range of $\alpha,\beta$ for which
$\Lambda_{\alpha,\beta}(\rho)$ is separable for any $\rho$.
\begin{figure}[h]
    \begin{center}
        \includegraphics[width=6cm,height=6cm]{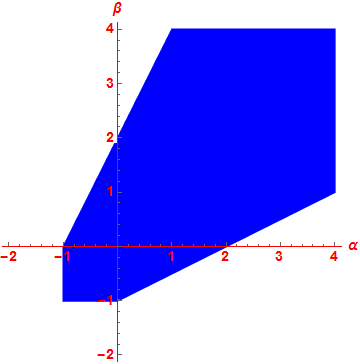}\caption{The range of $\alpha,\beta$ such that for any 
        state $\rho$ acting on $\CC^2\otimes \CC^N$, $\Lambda_{\alpha,\beta}(\rho)$ is separable.}
        \label{Fig:2xn.alpha.beta.range}
    \end{center}
\end{figure}
\end{proof}
The general Theorem~\ref{thm:general} then immediately implies the following separability result.
\begin{thm}
\label{thm:lewenstein2} {\bf (Sufficient separability criterion 2)}
Let $\rho =\Lambda_{\alpha,\beta}^{-1}(\sigma)\ge 0 $, i.e.,
\begin{equation}\label{MainResult:1}
\frac{1}{\alpha^2-\beta^2}\left[\alpha \sigma -\beta
\tilde\sigma_A - \frac{(\alpha-\beta)}{2N+\alpha +\beta} {\rm
Tr}(\sigma) \id\right]\ge 0.
\end{equation}
Then $\sigma$ is separable.
\end{thm}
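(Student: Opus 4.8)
The plan is to apply Theorem~\ref{thm:general} directly, with $S$ the cone of all $M\otimes N$ states (here $M=2$) and $S'$ the set of separable states $\Sigma$, exactly as outlined in conditions i) and ii) following Theorem~\ref{thm:general}. The heavy lifting---verifying that $\Lambda_{\alpha,\beta}(\rho)\in\Sigma$ for every $\rho\ge 0$ whenever $\alpha\ge\max[-1,\beta/2-1]$ and $\beta\ge\max[-1,\alpha/2-1]$---has already been done in Theorem~\ref{thm:lewenstein1}. So the only thing left for this corollary-style statement is to invert the map explicitly and check that the stated operator inequality \eqref{MainResult:1} is precisely the condition $\Lambda_{\alpha,\beta}^{-1}(\sigma)\ge 0$. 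Once that algebraic identity is confirmed, separability of $\sigma$ follows immediately: by hypothesis $\rho:=\Lambda_{\alpha,\beta}^{-1}(\sigma)\ge 0$, so $\rho\in S$, and then Theorem~\ref{thm:lewenstein1} gives $\sigma=\Lambda_{\alpha,\beta}(\rho)\in\Sigma$.

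First I would compute the inverse of $\Lambda_{\alpha,\beta}$ in closed form. Write $\Lambda_{\alpha,\beta}(\rho)={\rm Tr}(\rho)\id+\alpha\rho+\beta\tilde\rho_A$, where $\tilde\rho_A=\sigma_2\rho^{T_A}\sigma_2$. The key structural observation is that the map $\rho\mapsto\tilde\rho_A$ is an involution on Hermitian operators (since $\sigma_2^2=\id$ and $(\rho^{T_A})^{T_A}=\rho$), so the operators $\id$, $\rho$, and $\tilde\rho_A$ close into a small linear system under $\Lambda_{\alpha,\beta}$. Applying $\Lambda_{\alpha,\beta}$ to $\sigma$ and to $\tilde\sigma_A$ and taking traces (noting ${\rm Tr}(\tilde\rho_A)={\rm Tr}(\rho)$ and that $\id$ is fixed by the tilde operation up to the appropriate dimensional factor) yields a coupled pair of equations that I would solve for $\rho$ in terms of $\sigma$ and $\tilde\sigma_A$. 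I expect the solution to take the form $\rho=\tfrac{1}{\alpha^2-\beta^2}\bigl[\alpha\sigma-\beta\tilde\sigma_A-c\,{\rm Tr}(\sigma)\id\bigr]$ for a scalar $c$ that must be determined by demanding consistency of the trace on both sides; matching the trace should pin down $c=(\alpha-\beta)/(2N+\alpha+\beta)$, reproducing \eqref{MainResult:1}. The prefactor $1/(\alpha^2-\beta^2)$ is exactly where invertibility can fail, namely on the measure-zero set $\alpha=\pm\beta$, consistent with the ``invertible for almost all $\bf p$'' hypothesis of Theorem~\ref{thm:general}.

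The main obstacle---though it is more bookkeeping than genuine difficulty---is getting the trace-matching constant right, because the identity $\id$ on $\CC^2\otimes\CC^N$ has trace $2N$ and interacts with the ${\rm Tr}(\rho)\id$ term in a way that shifts the naive inverse. Concretely, when one applies $\Lambda_{\alpha,\beta}$ to the candidate $\rho$ and demands that the $\id$-component recombine correctly, the factor $2N+\alpha+\beta$ arises from the eigenvalue of $\Lambda_{\alpha,\beta}$ on the ``trace direction'' (the identity gets mapped with coefficient $2N+\alpha+\beta$, since $\tilde{\id}_A=\id$ and ${\rm Tr}(\id)=2N$). I would verify this by substituting the proposed $\rho$ back into $\Lambda_{\alpha,\beta}$ and checking that all terms collapse to $\sigma$. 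With the inverse confirmed, the statement $\rho=\Lambda_{\alpha,\beta}^{-1}(\sigma)\ge 0$ is literally inequality \eqref{MainResult:1}, and separability of $\sigma$ is then an immediate consequence of Theorem~\ref{thm:lewenstein1} via the general inversion principle of Theorem~\ref{thm:general}, completing the proof.
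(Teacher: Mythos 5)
Your proposal is correct and matches the paper's route: the paper states this result as an immediate consequence of Theorem~\ref{thm:general} combined with Theorem~\ref{thm:lewenstein1}, and your explicit inversion (using that $\rho\mapsto\tilde\rho_A$ is an involution fixing $\id$, so that $\alpha\sigma-\beta\tilde\sigma_A=(\alpha^2-\beta^2)\rho+(\alpha-\beta)\mathrm{Tr}(\rho)\id$ with $\mathrm{Tr}(\sigma)=(2N+\alpha+\beta)\mathrm{Tr}(\rho)$) is exactly the computation the paper leaves implicit. Your derivation of the constant $(\alpha-\beta)/(2N+\alpha+\beta)$ and the failure of invertibility only at $\alpha=\pm\beta$ are both correct, so nothing is missing.
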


Noticeably, the contributions from $\alpha$ and $\beta$ compensate
each other, diminishing in effect off-diagonal parts of $\sigma$
in the Alice space.  In fact, for $\alpha=\beta$ these off-diagonal
parts strictly vanish and $\sigma$ is a sum of two simple tensor
product matrices---in this special case, however, the map ceases
to be invertible. The criterion derived above is
particularly strong at the boundary of the parameter region
represented in Fig.~\ref{Fig:2xn.alpha.beta.range}, i.e., for
$\beta=\alpha/2-1$ or $\alpha=\beta/2-1$.
\begin{cor}
\label{cor:lewenstein2} {\bf (Sufficient separability criterion 3)}
Let
\begin{equation}\label{MainResult:2a}
\left[\alpha\sigma -(\alpha/2-1) \tilde\sigma_A -
\frac{(1+\alpha/2)}{2N+ 3\alpha/2-1} {\rm Tr}(\sigma)
\id\right]\ge 0
\end{equation}
or
\begin{equation}\label{MainResult:2b}
\left[\beta\tilde\sigma_A -(\beta/2-1) \sigma -
\frac{(1+\beta/2)}{2N+ 3\beta/2-1} {\rm Tr}(\sigma) \id\right]\ge
0.
\end{equation}
Then $\sigma$ is separable.
\end{cor}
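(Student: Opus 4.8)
The plan is to obtain this corollary as a direct specialization of Sufficient separability criterion~2 (Theorem~\ref{thm:lewenstein2}) to the two boundary edges of the admissible parameter region depicted in Fig.~\ref{Fig:2xn.alpha.beta.range}, so that no new positivity input is needed. Recall that criterion~2 asserts separability of $\sigma$ whenever $\Lambda_{\alpha,\beta}^{-1}(\sigma)=\frac{1}{\alpha^2-\beta^2}\bigl[\alpha\sigma-\beta\tilde\sigma_A-\frac{\alpha-\beta}{2N+\alpha+\beta}\tr(\sigma)\id\bigr]\ge 0$, valid for any $(\alpha,\beta)$ in that region with $\alpha\neq\pm\beta$. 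The first edge is $\beta=\alpha/2-1$ and the second is the symmetric edge $\alpha=\beta/2-1$; I would treat them in turn.

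First I would substitute $\beta=\alpha/2-1$ into the bracketed expression of criterion~2 and simplify the three coefficients. A short computation gives $\alpha-\beta=1+\alpha/2$ and $\alpha+\beta=3\alpha/2-1$, hence $2N+\alpha+\beta=2N+3\alpha/2-1$, and the bracket becomes exactly $\alpha\sigma-(\alpha/2-1)\tilde\sigma_A-\frac{1+\alpha/2}{2N+3\alpha/2-1}\tr(\sigma)\id$, i.e. the left-hand side of~\eqref{MainResult:2a}. It then remains only to discard the scalar prefactor $1/(\alpha^2-\beta^2)$ without altering the direction of the semidefinite inequality: since $\alpha^2-\beta^2=(1+\alpha/2)(3\alpha/2-1)$ is strictly positive on the relevant portion of the edge (where $\alpha+\beta=3\alpha/2-1>0$), multiplication by it preserves positive semidefiniteness, so~\eqref{MainResult:2a} is equivalent to $\Lambda_{\alpha,\beta}^{-1}(\sigma)\ge0$, and Theorem~\ref{thm:lewenstein2} yields separability.

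The second edge $\alpha=\beta/2-1$ proceeds identically, but with one extra bookkeeping point that I would flag. Here $\alpha-\beta=-(1+\beta/2)$ and $\alpha+\beta=3\beta/2-1$, so the bracket of criterion~2 equals $(\beta/2-1)\sigma-\beta\tilde\sigma_A+\frac{1+\beta/2}{2N+3\beta/2-1}\tr(\sigma)\id$, which is precisely the \emph{negative} of the left-hand side of~\eqref{MainResult:2b}. Simultaneously the prefactor $1/(\alpha^2-\beta^2)=1/[-(1+\beta/2)(3\beta/2-1)]$ carries a matching minus sign, so the two sign changes cancel and $\Lambda_{\alpha,\beta}^{-1}(\sigma)\ge0$ is again equivalent to the stated condition~\eqref{MainResult:2b} on the part of the edge where $3\beta/2-1>0$. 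Invoking Theorem~\ref{thm:lewenstein2} completes the argument.

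The only genuine subtlety, and the step I would be most careful about, is this sign bookkeeping for the scalar factor $\alpha^2-\beta^2=(\alpha-\beta)(\alpha+\beta)$, which is exactly what distinguishes the two edges: the minus sign in the $-\beta\tilde\sigma_A$ term flips the bracket on the second edge but not the first. One must confirm that on each edge the prefactor is positive precisely where the associated corollary condition is being asserted, i.e. restrict attention to the ``strong'' part of the boundary on which $\alpha+\beta>0$; off that part the direction of the inequality would reverse and the clean forms~\eqref{MainResult:2a}--\eqref{MainResult:2b} would cease to be the correct positivity condition. Beyond tracking these signs, the corollary is pure substitution into criterion~2.
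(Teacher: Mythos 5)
Your proposal is correct and takes essentially the same route as the paper, which presents Corollary~\ref{cor:lewenstein2} as the direct specialization of criterion~2, Eq.~\eqref{MainResult:1}, to the boundary edges $\beta=\alpha/2-1$ and $\alpha=\beta/2-1$ of the region in Fig.~\ref{Fig:2xn.alpha.beta.range}. Your sign bookkeeping for the prefactor $1/(\alpha^2-\beta^2)$ is the one point the paper leaves implicit, and your restriction to $\alpha+\beta>0$ is sound: where $\alpha+\beta<0$ the trace of the bracket in \eqref{MainResult:2a}--\eqref{MainResult:2b} is negative, so the hypothesis there is vacuous, while at $\alpha+\beta=0$ the map ceases to be invertible.
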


The above corollary has a particularly interesting limit, when both
$\alpha$ and $\beta$ tend to $\infty$.
\begin{cor}
\label{cor:lewenstein3} {\bf (Sufficient separability criterion 4)}
Let
\begin{equation}\label{MainResult:3a}
\sigma - \tilde\sigma_A/2 > 0
\end{equation}
or
\begin{equation}\label{MainResult:3b}
\tilde\sigma_A - \sigma/2 > 0.
\end{equation}
Then $\sigma$ is separable.
\end{cor}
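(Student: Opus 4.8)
The plan is to obtain Sufficient separability criterion~4 as the $\alpha,\beta\to\infty$ limit of the preceding Corollary~\ref{cor:lewenstein2}, which is itself the boundary case $\beta=\alpha/2-1$ of Sufficient separability criterion~2. First I would record that on this boundary the scalar prefactor $1/(\alpha^2-\beta^2)=1/[(\alpha-\beta)(\alpha+\beta)]$ appearing in \eqref{MainResult:1} is strictly positive for all large $\alpha$, since $\alpha-\beta=1+\alpha/2>0$ and $\alpha+\beta=3\alpha/2-1>0$. Multiplying \eqref{MainResult:1} through by this positive factor leaves the operator positivity unchanged and reproduces exactly the left-hand side of \eqref{MainResult:2a}. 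Hence, for every sufficiently large $\alpha$, the inequality \eqref{MainResult:2a} is an admissible instance of the construction of Theorem~\ref{thm:general}, and whenever it holds $\sigma$ is separable by Corollary~\ref{cor:lewenstein2}.

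Next I would rescale \eqref{MainResult:2a} by $1/\alpha$ and let $\alpha\to\infty$. The coefficient of $\tilde\sigma_A$ is $(\alpha/2-1)/\alpha\to 1/2$, while the coefficient of $\tr(\sigma)\id$ is $(1+\alpha/2)/[\alpha(2N+3\alpha/2-1)]\to 0$; hence the rescaled left-hand side converges in operator norm to the Hermitian operator $\sigma-\tilde\sigma_A/2$. The crucial step is to upgrade ``the limit is positive definite'' to ``the finite-$\alpha$ operator is positive''. Assuming the hypothesis \eqref{MainResult:3a}, namely $\sigma-\tilde\sigma_A/2>0$, I would invoke continuity of the smallest eigenvalue under operator-norm convergence: since $\lambda_{\min}(\sigma-\tilde\sigma_A/2)>0$ and the rescaled operators converge to $\sigma-\tilde\sigma_A/2$, their smallest eigenvalue is $>0$ for all $\alpha$ beyond some threshold. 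Equivalently, the cone of positive-definite matrices is open, so an entire neighbourhood of the limit consists of positive operators. For such an $\alpha$ the inequality \eqref{MainResult:2a} holds and Corollary~\ref{cor:lewenstein2} delivers the separability of $\sigma$.

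The second condition \eqref{MainResult:3b} follows by the identical argument applied to \eqref{MainResult:2b}: dividing by $\beta$ and letting $\beta\to\infty$ sends the coefficient of $\sigma$ to $1/2$ and that of $\tr(\sigma)\id$ to $0$, so the rescaled operator tends to $\tilde\sigma_A-\sigma/2$, and the same eigenvalue-continuity argument applies under the hypothesis $\tilde\sigma_A-\sigma/2>0$. (One may also note that the two cases are exchanged under the swap $\alpha\leftrightarrow\beta$ together with the involution $\rho\mapsto\tilde\rho_A$, which squares to the identity because $\sigma_2^2=\id$ and $\sigma_2^T=-\sigma_2$.)

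The only genuinely delicate point is this passage to the limit: strict positivity of $\sigma-\tilde\sigma_A/2$ is precisely what the open-cone (eigenvalue-continuity) argument requires in order to guarantee positivity of the finite-$\alpha$ operator, which is why the statement is phrased with strict inequalities ($>0$) rather than the non-strict ones used in criteria~2 and~3. Beyond this, I would only verify that the chosen large $\alpha$ lies in the admissible parameter region of Theorem~\ref{thm:lewenstein1} (on the boundary $\beta=\alpha/2-1$ this reduces to $\alpha\ge\alpha/4-3/2$, valid for all large $\alpha$) and that the denominator $2N+3\alpha/2-1$ remains positive, both of which are immediate.
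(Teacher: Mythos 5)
Your proof is correct and takes essentially the same route the paper intends: Corollary~\ref{cor:lewenstein3} is the $\alpha,\beta\to\infty$ limit of the boundary criterion in Corollary~\ref{cor:lewenstein2}, with strict positivity of $\sigma-\tilde\sigma_A/2$ (resp.\ $\tilde\sigma_A-\sigma/2$) needed exactly so that the open-cone (smallest-eigenvalue continuity) argument produces a finite admissible $\alpha$ (resp.\ $\beta$) at which Eq.~\eqref{MainResult:2a} (resp.\ Eq.~\eqref{MainResult:2b}) holds. The paper leaves this limiting step implicit, remarking only that the inequalities are ``valid only in the asymptotic sense and require strict positivity,'' and your write-up supplies precisely those missing details, including the positivity of the prefactor $1/(\alpha^2-\beta^2)$ on the boundary and the admissibility of the parameter point for large $\alpha$.
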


Note that the latter inequalities are valid only in the asymptotic
sense and require strict positivity. One should also note that in
the above results we used maps that
 explicitly involved partially transposed matrices. For this
reason it is useful to remind the reader some results of Ref.
\cite{Kraus+3.PRA.2000} that hold for $M=2$.

\begin{thm}
\label{thm:Karnas} \cite{Kraus+3.PRA.2000} If $\sigma=\sigma^{T_A}\ge 0$
then
 $\sigma$ is separable.
\end{thm}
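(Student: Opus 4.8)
The plan is to exploit the fact that $T_A$-invariance forces $\sigma$ to live in the $\sigma_y$-free sector, and then to read off a separable decomposition built entirely from \emph{real} qubit states on Alice's side. First I would write $\sigma$ in $2\times2$ block form with respect to Alice, $\sigma=\begin{pmatrix} A & B\\ B^\dagger & C\end{pmatrix}$, where $A,C\ge0$ and $B$ act on $\CC^N$. Since partial transposition on Alice swaps the off-diagonal blocks, the hypothesis $\sigma=\sigma^{T_A}$ is equivalent to $B=B^\dagger$, i.e.\ $B$ is Hermitian. Equivalently, in the Pauli basis of Alice, $\sigma=\id\otimes D+\sigma_z\otimes E+\sigma_x\otimes B$ with $D=(A+C)/2$ and $E=(A-C)/2$ Hermitian and, crucially, no $\sigma_y$ term -- this is exactly the content of $T_A$-invariance, since $T_A$ flips the sign of $\sigma_y$ only.

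Next I would recast separability as a moment problem. A real qubit state $\ket{\phi_\theta}=\cos\tfrac\theta2\ket{0}+\sin\tfrac\theta2\ket{1}$ has projector $\ket{\phi_\theta}\!\bra{\phi_\theta}=\tfrac12(\id+\cos\theta\,\sigma_z+\sin\theta\,\sigma_x)$, so writing $\sigma=\sum_k\ket{\phi_{\theta_k}}\!\bra{\phi_{\theta_k}}\otimes\rho_k$ with $\rho_k\ge0$ amounts to finding a positive operator-valued measure $d\mu(\theta)$ on the circle whose zeroth and first trigonometric moments are $2D$ and $2(E+iB)$. The key observation is that this is governed by a block Toeplitz matrix: applying the local Alice unitary $u$ that cyclically rotates the Pauli matrices ($\sigma_z\!\to\!\sigma_x\!\to\!\sigma_y\!\to\!\sigma_z$ under $u(\cdot)u^\dagger$), which preserves both positivity and separability, carries $\sigma$ to $(u\otimes\id)\sigma(u\otimes\id)^\dagger=\id\otimes D+\sigma_x\otimes E+\sigma_y\otimes B=\begin{pmatrix} D & E-iB\\ E+iB & D\end{pmatrix}$, which is precisely the truncated moment (Carath\'eodory--Toeplitz) matrix of the problem. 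Hence $\sigma\ge0$ is equivalent to the positivity of this block Toeplitz matrix.

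Finally I would invoke the matrix-valued trigonometric moment theorem: positivity of the order-one block Toeplitz matrix guarantees an \emph{atomic} positive operator-valued measure with the prescribed moments, i.e.\ finitely many angles $\theta_k$ and operators $\rho_k\ge0$ solving the system above. Substituting these back yields the explicit separable decomposition $\sigma=\sum_k\ket{\phi_{\theta_k}}\!\bra{\phi_{\theta_k}}\otimes\rho_k$, proving separability. The main obstacle is the moment-theoretic step: one must verify that $\sigma\ge0$ really is equivalent to Toeplitz positivity (the Pauli-rotation identity above is what makes this transparent) and that the representing measure can be taken finitely supported with positive operator weights. A self-contained alternative is to construct the atoms by hand at equally spaced angles and check $\rho_k\ge0$ directly from $\sigma\ge0$ using discrete orthogonality of $\{1,\cos\theta,\sin\theta\}$; the subtlety there is the factor-of-two Bloch contraction, which forces the weights to be a genuine solution of the moment equations rather than the naive sandwiches $(\bra{\phi_\theta}\otimes\id)\,\sigma\,(\ket{\phi_\theta}\otimes\id)$.
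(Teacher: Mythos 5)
Your argument is correct, but it is worth noting that the paper itself gives no proof of this theorem -- it is quoted from Kraus, Cirac, Karnas and Lewenstein \cite{Kraus+3.PRA.2000} -- so the fair comparison is with that original argument, which is elementary and constructive rather than moment-theoretic. There one writes, exactly as in your first step, $\sigma=\left(\begin{smallmatrix} A & B\\ B & C\end{smallmatrix}\right)$ with $B=B^\dagger$, applies the invertible local filter $\id\otimes A^{-1/2}$ on Bob's side (which preserves both separability and the condition $\sigma=\sigma^{T_A}$; singular $A$ is handled by restriction to supports or by continuity) to reach $A=\id$, diagonalizes $B=\sum_i b_i\ket{i}\bra{i}$ by a Bob unitary, and then splits $\sigma=\sum_i \left(\begin{smallmatrix}1 & b_i\\ b_i & b_i^2\end{smallmatrix}\right)_A\otimes\ket{i}\bra{i}_B+\ket{1}\bra{1}_A\otimes(C-B^2)$; the Schur complement of $\sigma\ge 0$ gives $C\ge B^2$, and each Alice factor is $(1+b_i^2)$ times a rank-one projector onto the real vector $\ket{0}+b_i\ket{1}$, so the decomposition is manifestly separable. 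Your route through the operator-valued trigonometric moment problem is genuinely different: the Pauli-rotation identity showing that $\sigma\ge 0$ \emph{is} the order-one block Carath\'eodory--Toeplitz condition is a nice structural insight, and it makes transparent that every $2\otimes N$ state with no $\sigma_y$-component on Alice's side is separable, with a natural opening toward higher-order truncations. The one step you must actually supply -- the order-one matricial moment theorem with a \emph{finitely atomic} representing measure -- can be closed self-containedly: positivity of $\left(\begin{smallmatrix} D & E-iB\\ E+iB & D\end{smallmatrix}\right)$ forces $E+iB=D^{1/2}KD^{1/2}$ with $\|K\|\le 1$ (ranges contained in the support of $D$); by the singular value decomposition any such contraction is an average of two unitaries, $K=\tfrac12(W_++W_-)$ with $W_\pm=U\diag(e^{\pm i\alpha_j})V^\dagger$, and inserting their spectral resolutions $W_\pm=\sum_k e^{i\theta_{\pm k}}P_{\pm k}$ yields at most $2N$ atoms $\rho_{\pm k}=D^{1/2}P_{\pm k}D^{1/2}\ge 0$ solving precisely your equations $\sum\rho_k=2D$, $\sum e^{i\theta_k}\rho_k=2(E+iB)$, whence $\sum\cos\theta_k\,\rho_k=2E$ and $\sum\sin\theta_k\,\rho_k=2B$ by Hermiticity of the weights. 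In short: the cited proof is shorter and fully explicit, yours is heavier but more structural and generalizable; and your closing remark about the factor-of-two Bloch contraction correctly diagnoses why the naive sandwiches $(\bra{\phi_\theta}\otimes\id)\sigma(\ket{\phi_\theta}\otimes\id)$ reproduce only $\id\otimes D+\tfrac12(\sigma_z\otimes E+\sigma_x\otimes B)$ and the moment equations must genuinely be solved.
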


\begin{cor}
\label{cor:Karnas} \cite{Kraus+3.PRA.2000} Let $\sigma\ge 0$ be a state.
If $\sigma + \sigma^{T_A}$ is of full rank and $\|(\sigma +
\sigma^{T_A})^{-1}\|\ \| \sigma - \sigma^{T_A}\| \le 1$, then
$\sigma$ is separable. This corollary implies that if $\sigma$ is
of full rank and is very close to $\sigma^{T_A}$ then
 $\sigma$ is separable. Here we use the operator norm
$\|A\| := \max_{\|\Psi\rangle\|=1} \| A |\Psi\rangle\|$.
\end{cor}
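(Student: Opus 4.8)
The plan is to reduce the claim to Theorem~\ref{thm:Karnas} by splitting $\sigma$ into its even and odd parts under the partial transposition $T_A$. I would set $P:=\sigma+\sigma^{T_A}$ and $Q:=\sigma-\sigma^{T_A}$, so that $\sigma=\tfrac12(P+Q)$ and $\sigma^{T_A}=\tfrac12(P-Q)$, with both $P$ and $Q$ Hermitian and obeying the symmetries $P^{T_A}=P$ and $Q^{T_A}=-Q$. The whole point of the norm hypothesis $\|P^{-1}\|\,\|Q\|\le1$ is to control the antisymmetric part $Q$ relative to $P$.

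First I would establish that $P>0$ and that $\sigma$ is PPT. Since $P$ is of full rank, the hypothesis reads $\|Q\|\le\|P^{-1}\|^{-1}=\min_i|\lambda_i(P)|$. If $P$ had a strictly negative eigenvalue $\lambda<0$ on a vector $\ket v$, then $0\le\langle v|\sigma|v\rangle=\tfrac12(\lambda+\langle v|Q|v\rangle)$ would force $\langle v|Q|v\rangle\ge|\lambda|\ge\|Q\|$, saturating all bounds to give $\langle v|\sigma|v\rangle=0$ and hence $\sigma^{T_A}\ket v=\lambda\ket v$, i.e.\ an NPT (entangled) state; this degenerate boundary case is excluded by the hypotheses, so in fact $P>0$. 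With $P>0$ the bound becomes $\|Q\|\le\lambda_{\min}(P)$, which together with $Q\le\|Q\|\,\id$ yields the operator inequalities $-P\le Q\le P$. In particular $\sigma^{T_A}=\tfrac12(P-Q)\ge0$, so $\sigma$ is PPT, and the symmetric part $A:=\tfrac12 P$ satisfies $A^{T_A}=A$ and $A\ge0$. Theorem~\ref{thm:Karnas} then applies directly to $A$ and yields a separable decomposition $A=\sum_k\ket{e_kf_k}\bra{e_kf_k}$.

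The remaining, and genuinely hard, step is to upgrade separability of the symmetric part $A$ to separability of $\sigma=A+B$ with $B:=\tfrac12 Q$, $B^{T_A}=-B$. A naive convex-combination argument fails, because $\sigma$ is generically \emph{not} a mixture of $T_A$-symmetric operators, and mere PPT is insufficient in $\CC^2\otimes\CC^N$ for $N\ge4$; the closeness condition $\|B\|\le\tfrac12\lambda_{\min}(P)$ must therefore enter in an essential way. My plan is to exploit the special structure of $M=2$, where $T_A$ is implemented by the spin flip $\sigma_2$ (cf.\ Eq.~\eqref{breuer-hall}): I would look for a local congruence $F=F_A\otimes\id$, with $\|F\|$ controlled by the norm bound, such that $F\sigma F^\dagger$ is $T_A$-symmetric and positive, whence Theorem~\ref{thm:Karnas} makes $F\sigma F^\dagger$ separable and the inverse local map restores separability of $\sigma$. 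Constructing such an $F$---equivalently, showing that the closeness hypothesis is precisely what allows the antisymmetric part $B$ to be gauged away by a local filtering---is where I expect the main difficulty to lie, since the failure of exactly this construction is what produces PPT entangled states lying further from the $T_A$-symmetric subspace.
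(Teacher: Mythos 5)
Your argument breaks in two places, one logical and one structural. First, the boundary case you dismiss as ``excluded by the hypotheses'' is not excluded, and it is fatal to your claim that $P:=\sigma+\sigma^{T_A}>0$. Take $\sigma=\ket{\Psi^-}\!\bra{\Psi^-}$ with $\ket{\Psi^-}=(\ket{01}-\ket{10})/\sqrt{2}$ in $\CC^2\otimes\CC^2$: then $P$ has spectrum $\{3/2,\,1/2,\,1/2,\,-1/2\}$, hence is of full rank but \emph{not} positive, with $\|P^{-1}\|=2$, while $Q:=\sigma-\sigma^{T_A}$ has spectrum $\{\pm 1/2\}$, so $\|P^{-1}\|\,\|Q\|=1$. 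All hypotheses of Corollary~\ref{cor:Karnas} as transcribed (with the non-strict inequality) are satisfied, yet $\sigma$ is maximally entangled. Your own chain of inequalities, pushed to equality, constructs exactly this configuration ($\sigma\ket{v}=0$, $\sigma^{T_A}\ket{v}=\lambda\ket{v}$ with $\lambda<0$); rather than being ruled out, it shows that at equality $P>0$ genuinely fails. Your derivation of $P>0$, of the PPT property, and of the applicability of Theorem~\ref{thm:Karnas} to $A=P/2$ is sound only under the strict reading $\|P^{-1}\|\,\|Q\|<1$, where the chain $0\le\lambda+\lan v|Q|v\ran\le\lambda+\|Q\|<\lambda+|\lambda|=0$ gives a genuine contradiction. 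You should have flagged that the statement needs the strict inequality (or a separate treatment of the equality case), not asserted exclusion.

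Second, and more importantly, the part of the corollary that actually requires work is absent: granting $P>0$, you obtain only that the $T_A$-symmetric part $A$ is separable, and the passage to $\sigma=A+B$ is left as a ``plan'' to find a local filter $F=F_A\otimes\id$ that symmetrizes $\sigma$. Beyond being unconstructed, this plan is generically impossible: the $T_A$-odd Hermitian operators on $\CC^2\otimes\CC^N$ (vanishing diagonal blocks, anti-Hermitian off-diagonal block in the $2\times N$ block decomposition) form a real vector space of dimension $N^2$, whereas $F_A$ carries at most $8$ real parameters, so for $N\ge 3$ no local filtering can map a generic $\sigma$ in the allowed neighborhood onto the $T_A$-invariant subspace; the failure you anticipate is structural, not merely technical. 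Note also that the paper itself offers no proof of this corollary---it is imported verbatim from Ref.~\cite{Kraus+3.PRA.2000}---and the derivation there does not gauge away the odd part by a local congruence; it produces the separable decomposition of $\sigma$ directly, by the same $2\times N$ techniques (subtraction of product vectors while controlling positivity and partial transposition) that underlie Theorem~\ref{thm:Karnas}. So what your proposal is missing is precisely the entire second half of the proof: the step from separability of $\sigma+\sigma^{T_A}$ to separability of $\sigma$ itself.
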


Note that the present results  are  clearly independent since they
involve matrices $\tilde\sigma_A$ and related ones. Nonetheless, explicit
examples showing the independence will be given in Section~\ref{Subsection:examples.2.N}.

Let us now present the strongest theorem of this section, which
involves Breuer-Hall unitary operators on the both sides of Alice
and Bob:

\begin{thm}
\label{thm:lewenstein2} Let ${\bf p} =\{\alpha, \beta, \gamma,
\delta\}$ and $\Lambda_{\bf p}(\rho)= {\rm Tr}(\rho)\id + \alpha
\rho + \beta \tilde\rho_A + \gamma \tilde\rho_B  + \delta
\tilde\rho$ be the family of maps; let there exist $0\le a, b\le
1$, $a+b\le 1$ such that the parameters fulfill the four conditions:
\begin{subequations}\label{Eq:cond.alpha-delta.a-b}
	\begin{align}
	\alpha &\ge \beta/2-a, \\
	\beta  &\ge \alpha/2-a, \\
	\gamma &\ge \delta/2-b, \\
    \delta &\ge \gamma/2-b,
	\end{align}
\end{subequations}
and $\alpha\ge -1$, $\beta\ge -1$, $\gamma\ge -1$, and $\delta\ge
-1$. Then $\rho\ge 0$ $\Rightarrow$ $\sigma = \Lambda_{\bf
p}(\rho)\in \Sigma$, i.e. $\sigma$ is separable.
\end{thm}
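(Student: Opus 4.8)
The plan is to mimic the proof of Theorem~\ref{thm:lewenstein1}, using that $\Lambda_{\bf p}$ is linear and that the separable operators form a closed convex cone: it then suffices to treat a pure $\rho=\ket{\Psi}\bra{\Psi}$ and conclude by taking nonnegative combinations. Since $M=2$ I would put $\ket{\Psi}=\lambda_0\ket{e_0}\ket{f_0}+\lambda_1\ket{e_1}\ket{f_1}$ in Schmidt form. Writing $\rho$ in $2\times2$ block form over Alice, the operation $\sigma_2(\cdot)^{T}\sigma_2$ on the qubit factor is the reduction map $\tr(\cdot)\id-(\cdot)$, so $\rho$ and $\tilde\rho_A$ are supported on $\CC^2\otimes\mathrm{span}(\ket{f_0},\ket{f_1})$, while $\tilde\rho_B$ and $\tilde\rho$ are supported on $\CC^2\otimes\mathrm{span}(\ket{f_0^\perp},\ket{f_1^\perp})$ with $\ket{f_i^\perp}=V\ket{f_i^*}$. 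The Breuer--Hall property \eqref{breuer-hall} gives $\braket{f_i|f_i^\perp}=0$, and the antisymmetry $V^T=-V$ it forces gives $\braket{f_1|f_0^\perp}=-\braket{f_0|f_1^\perp}$, so the relative position of the two Bob pairs is governed by a single overlap $c=\braket{f_0|f_1^\perp}$.

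Next I would split the trace term as $\tr(\rho)\id=a\,\tr(\rho)\id+b\,\tr(\rho)\id+(1-a-b)\tr(\rho)\id$ and group $a\,\tr(\rho)\id+\alpha\rho+\beta\tilde\rho_A$ with the $A$-twisted pair and $b\,\tr(\rho)\id+\gamma\tilde\rho_B+\delta\tilde\rho$ with the $B$-twisted pair, the leftover $(1-a-b)\tr(\rho)\id\ge0$ being manifestly separable. Repeating the $2\otimes2$/PPT computation of Theorem~\ref{thm:lewenstein1} on the $A$-pair with reduced identity weight $a$ reproduces exactly the coherence conditions $\alpha\ge\beta/2-a$ and $\beta\ge\alpha/2-a$, and the mirror computation with $V$ on Bob gives $\gamma\ge\delta/2-b$ and $\delta\ge\gamma/2-b$. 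Crucially, the budgets $a,b$ constrain only the off-diagonal (coherence) balance, while the diagonal positivity always draws on the full identity; this is why the positivity requirements come out as the uniform $\alpha,\beta,\gamma,\delta\ge-1$ rather than $\ge-a,\ge-b$.

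The main obstacle is that the two grouped pieces need \emph{not} be separately positive once the identity weight is reduced below $1$, so this budget split is only a heuristic that produces the correct conditions; rigor requires analyzing the combined operator on its (at most) $2\otimes4$ support. In the orthogonal limit $c=0$, Bob's relevant space splits as $\mathrm{span}(\ket{f_0},\ket{f_1})\oplus\mathrm{span}(\ket{f_0^\perp},\ket{f_1^\perp})$ and $\sigma$ becomes a direct sum of two $2\otimes2$ operators, each separable by Theorem~\ref{thm:lewenstein1} with the \emph{full} identity available on its block, so there only $a,b\le1$ is used. The binding configuration is maximal overlap $|c|=1$, where $\ket{f_0^\perp},\ket{f_1^\perp}$ collapse onto $\ket{f_1},\ket{f_0}$, Bob's relevant space is just $\mathrm{span}(\ket{f_0},\ket{f_1})$, and both coherences are forced into the single sector $\{\ket{e_0}\ket{f_0},\ket{e_1}\ket{f_1}\}$ with combined off-diagonal weight $(\alpha-\beta+\gamma-\delta)\lambda_0\lambda_1$. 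Here PPT is necessary and sufficient, and demanding that the relevant $2\times2$ minors stay nonnegative for all $\lambda_0,\lambda_1$ is precisely what forces $a+b\le1$ — the two coherence budgets must share the single unit of identity. This is where I expect all the real work, and it is exactly the hypothesis assumed (one checks that relaxing to $a+b>1$ already destroys positivity of that minor).

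For the intermediate overlaps $0<|c|<1$ the support is genuinely $2\otimes4$, so PPT alone no longer certifies separability; there I would either invoke Theorem~\ref{thm:Karnas}, which for $M=2$ upgrades $\sigma=\sigma^{T_A}\ge0$ to separability, or argue that these cases are strictly less restrictive than the maximal-overlap $2\otimes2$ problem, since the two coherences then occupy partly distinct Bob modes and the identity is shared less acutely, so the conditions already secured at $|c|=1$ continue to hold. Reassembling the pure-state conclusion over the convex cone then yields separability of $\sigma=\Lambda_{\bf p}(\rho)$ for every $\rho\ge0$.
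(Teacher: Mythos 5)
Your proposal follows essentially the same route as the paper's proof: reduction to pure states in Schmidt form, the identity-budget split $\id=a\id+b\id+(1-a-b)\id$ with Theorem~\ref{thm:lewenstein1} applied to $a\id+\alpha\rho+\beta\tilde\rho_A$ and to the Bob-twisted pair $b\id+\gamma\tilde\rho_B+\delta\tilde\rho$ (the latter is legitimate because $\tilde\rho\ge 0$ and $\tilde\rho_B=\widetilde{(\tilde\rho)}_A$), the bounds $\alpha,\beta,\gamma,\delta\ge-1$ extracted from the product-state configuration, and a case analysis of the Breuer--Hall geometry on Bob's side. Your single overlap $c=\braket{f_0|f_1^\perp}$ is in fact a cleaner parametrization of the paper's trichotomy: $|c|=1$ is the paper's case i) (where $a+b=1$ reproduces the exact two-qubit conditions of the corollary), $c=0$ is case ii) (where indeed $a=b=1$ becomes available), and $0<|c|<1$ is case iii). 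You are also more explicit than the paper about the genuine weak point: the rescaled application of Theorem~\ref{thm:lewenstein1} rigorously yields only $\alpha,\beta\ge-a$ and $\gamma,\delta\ge-b$ rather than the uniform $\ge-1$ of the statement, and the paper itself does not close this gap --- it obtains $\ge-1$ from the necessity analysis and concedes that its estimate for case iii) is ``very conservative.''

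One concrete caution on your proposed completions for the intermediate overlap. The appeal to Theorem~\ref{thm:Karnas} fails as stated: that result requires $\sigma=\sigma^{T_A}$, whereas here $\Lambda_{\bf p}(\rho)^{T_A}=\id+\alpha\rho^{T_A}+\beta\,\sigma_2\rho\sigma_2+\cdots\neq\Lambda_{\bf p}(\rho)$ in general, and the norm condition of Corollary~\ref{cor:Karnas} would have to be verified, not assumed. Your alternative --- that intermediate $|c|$ is strictly less binding than $|c|=1$ --- is asserted rather than proven, and is precisely what the paper admits it could not establish (``case iii), which probably is not the most demanding, but we were not able to find weaker sufficient conditions''). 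So your attempt reconstructs the paper's argument at the paper's own level of rigor, with the open step honestly flagged rather than resolved; the only step that would actively fail is the Karnas fallback.
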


\begin{proof}
The proof is similar to the proof of Theorem~\ref{thm:lewenstein1},  
but more complex and technical. Again it is enough to prove for
pure states, $\rho=|\Psi\rangle\langle \Psi|$ and, without loosing
generality, take $|\Psi\rangle =\lambda_0 |0\rangle\otimes
|0\rangle + \lambda_1|1\rangle\otimes |1\rangle$. But, now we have
to consider three cases: i) when the Breuer-Hall unitary $V$ acts
as $\sigma_2$ on the Bob's subspace spanned by $|0\rangle$ and
$|1\rangle$; ii) when $V$ transforms $|0\rangle$ and $|1\rangle$
to $|2\rangle$ and $|3\rangle$; iii) when $V$ transforms
$|0\rangle$ and $|1\rangle$ to orthogonal vectors in a subspace
spanned by $|0\rangle$ and  $|2\rangle$, and $|1\rangle$ and
$|3\rangle$, respectively.

The conditions $\alpha\ge -1$, $\beta\ge -1$, $\gamma\ge -1$, and
$\delta\ge -1$ follow again from the analysis of the case when
$\rho$ is a projector on a product state, say
$|0\rangle\otimes|0\rangle$, so that $\tilde\rho_A$ is a projector
on $|1\rangle\otimes|0\rangle$, $\tilde\rho_B$ on
$|0\rangle\otimes|\tilde 1\rangle$, and $\tilde\rho$ on
$|1\rangle\otimes|\tilde 1\rangle$, where $|\tilde 1\rangle$ is a
vector orthogonal to $|0\rangle$. To derive the conditions
in Eq.~\eqref{Eq:cond.alpha-delta.a-b}, we split $\id= a\id + b\id + (1-a-b)\id$ 
and apply the result of Theorem \ref{thm:lewenstein1} to $a\id + \alpha\rho
+\beta\tilde \rho_A$, and $(1-a)\id + \gamma\tilde\rho_B
+\delta\tilde\rho$. Clearly, this estimate of the region of
parameters, where $\Lambda_{\bf p}(\rho)$ is separable, is very
conservative, and probably can be improved significantly.
\end{proof}

The conditions above  are sufficient for separability and
correspond to the case iii), which probably is not the most
demanding, but we were not able to find weaker sufficient
conditions, i.e., the largest allowed regions of the parameters.
The case i) leads to the less demanding restrictions
that follow from the above conditions, when we set $a+b=1$.
Obviously, this case is of interests {\it per se}, since it
corresponds to exact conditions in the case of two qubits, so that
we present it as a separate corollary below. Finally, the case
ii) is the least restrictive -- indeed it allows to exceed the
restriction $a+b\le 2$ and reach $a=b=1$.

\begin{cor}
\label{cor:lewenstein4} Let  $\Lambda_{\bf p}(\rho)= {\rm
Tr}(\rho)\id + \alpha \rho + \beta \tilde\rho_A + \gamma
\tilde\rho_B  + \delta \tilde\rho$ be the family of maps acting on
${\cal H}={\cal H}_A\otimes{\cal H}_B=\CC^2\otimes \CC^2$, with $V=\sigma_2$ (the second Pauli matrix) acting in the
Bob's space. Let $s=\alpha+\gamma$, $\tilde s=\beta + \delta$, and
let the parameters fulfill the conditions:
\begin{subequations}\label{Eq:2x2cond1}
	\begin{align}
	s &\ge \max[-1,\tilde s/2-1], \\
	\tilde{s} &\ge {\rm max}[-1,s/2-1],
	\end{align}
\end{subequations}
and $\alpha\ge -1$, $\beta\ge -1$, $\gamma\ge -1$, and $\delta\ge -1$.
Then if $\rho\ge 0$, $\sigma = \Lambda_{\bf
p}(\rho)\in \Sigma$, i.e. $\sigma$ is separable.
\end{cor}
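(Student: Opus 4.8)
The plan is to mimic the proofs of Theorems~\ref{thm:barnum} and~\ref{thm:lewenstein1}: reduce to a single pure--state calculation and then exploit that, for $\CC^2\otimes\CC^2$ with $V=\sigma_2$, the image $\sigma=\Lambda_{\bf p}(\rho)$ is block diagonal. First I would note that it suffices to treat pure inputs $\rho=\ket{\Psi}\!\bra{\Psi}$: $\Lambda_{\bf p}$ is linear, $\Sigma$ is convex, and every $\rho\ge 0$ is a convex combination of rank--one projectors. To put $\ket{\Psi}$ in the Schmidt form $\lambda_0\ket{00}+\lambda_1\ket{11}$ I would invoke local--unitary covariance of $\Lambda_{\bf p}$: since $\sigma_2$ is the Breuer--Hall unitary one has $\sigma_2\overline{U}\sigma_2=U$ for $U\in SU(2)$, and this identity propagates through the partial transposes to give $\Lambda_{\bf p}\big((U_A\otimes U_B)\rho(U_A\otimes U_B)^\dagger\big)=(U_A\otimes U_B)\Lambda_{\bf p}(\rho)(U_A\otimes U_B)^\dagger$; hence separability of the image is unchanged and the Schmidt form is without loss of generality. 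Finally, because we work in $\CC^2\otimes\CC^2$, separability is equivalent to positivity of the partial transpose \cite{3Horodecki.PLA.1996}, so it is enough to prove $\sigma\ge 0$ and $\sigma^{T_A}\ge 0$.

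Second, I would compute the four operators on the Schmidt state using the $2\times 2$ identity $\sigma_2 M^T\sigma_2=(\tr M)\id-M$, equivalently the cancellation relations $\rho+\tilde\rho_A=\id_A\otimes\rho_B$ and $\rho+\tilde\rho_B=\rho_A\otimes\id_B$. The decisive structural fact is that all of $\rho,\tilde\rho_A,\tilde\rho_B,\tilde\rho$ preserve the two orthogonal subspaces $\mathrm{span}\{\ket{00},\ket{11}\}$ and $\mathrm{span}\{\ket{01},\ket{10}\}$, so that $\sigma$ and $\sigma^{T_A}$ are direct sums of $2\times 2$ blocks. On these blocks the dependence on the four parameters collapses onto just two independent linear combinations: at the maximally entangled vector $\lambda_0=\lambda_1$ the four operators coincide in two pairs, and the two surviving weights are exactly $s$ and $\tilde s$. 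Thus $\sigma$ is effectively reduced to a two--parameter family of the type treated in Theorem~\ref{thm:lewenstein1}, with the role of $(\alpha,\beta)$ there played by $(s,\tilde s)$.

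Third, I would read the positivity conditions off block by block. The diagonal entries all have the form $\lambda_0^2(1+\cdot)+\lambda_1^2(1+\cdot)$, which are nonnegative for every Schmidt pair precisely when each of $\alpha,\beta,\gamma,\delta$ is $\ge -1$; this is the origin of the four individual constraints. The only substantive conditions are the two $2\times 2$ determinants, one from $\sigma$ on $\mathrm{span}\{\ket{00},\ket{11}\}$ and one from $\sigma^{T_A}$ on $\mathrm{span}\{\ket{01},\ket{10}\}$, whose off--diagonal entries have equal magnitude because $T_A$ merely transports the off--diagonal between the two subspaces. I would then show that these determinants are worst at maximal entanglement, where they become exactly the two inequalities $s\ge\max[-1,\tilde s/2-1]$ and $\tilde s\ge\max[-1,s/2-1]$ of Theorem~\ref{thm:lewenstein1} in the combined weights; the bound $-1$ inside the maxima is inherited from that theorem. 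Invoking Theorem~\ref{thm:lewenstein1} with $(\alpha,\beta)\to(s,\tilde s)$ then yields separability.

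The step I expect to be the real obstacle is the determinant analysis: verifying that the extremum over the Schmidt coefficients is attained at $\lambda_0=\lambda_1$, so that $\lambda_0^2\lambda_1^2$ sits at its maximum value $1/4$, and that at that point the two determinant inequalities are genuinely the Theorem~\ref{thm:lewenstein1} conditions for $s$ and $\tilde s$ rather than some weaker or stronger pair. The cancellation identities $\rho+\tilde\rho_A=\id_A\otimes\rho_B$ and $\rho+\tilde\rho_B=\rho_A\otimes\id_B$ are what make this tractable, since they force the cross terms of the two blocks to coincide up to sign and push both determinants into the required form.
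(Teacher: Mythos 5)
Your strategy is essentially the paper's own: reduce to pure states in Schmidt form, observe that $\sigma$ and $\sigma^{T_A}$ split into $2\times 2$ blocks on $\mathrm{span}\{\ket{00},\ket{11}\}$ and $\mathrm{span}\{\ket{01},\ket{10}\}$, use PPT as the if-and-only-if separability criterion in $2\otimes 2$, and extract the bounds $\alpha,\beta,\gamma,\delta\ge -1$ from product states and the remaining two conditions from the maximally entangled case. Your added justifications fill in details the paper merely asserts and are sound: the covariance argument via $\sigma_2\overline{U}\sigma_2=U$ legitimizes the Schmidt form, and the step you flag as the obstacle does go through --- writing a diagonal entry as $a\lambda_0^2+d\lambda_1^2$ with $a=1+\alpha\ge 0$, $d=1+\delta\ge 0$, the block determinant equals $ad+t(1-t)\bigl[(a-d)^2-c\bigr]$ with $t=\lambda_0^2$ and $c$ the squared off-diagonal coefficient, which is affine in $t(1-t)$ and hence minimized at $\lambda_0=\lambda_1$ whenever $c>(a-d)^2$.

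One concrete correction, however: the ``two surviving weights'' at maximal entanglement are \emph{not} $s=\alpha+\gamma$ and $\tilde s=\beta+\delta$. At $\lambda_0=\lambda_1$ the operators coincide in the pairs $\tilde\rho=\rho$ and $\tilde\rho_A=\tilde\rho_B=\frac{1}{2}\id-\rho$ (a direct consequence of your own cancellation identities), so the merged weights are $\alpha+\delta$ and $\beta+\gamma$. This agrees with the paper's displayed matrix, whose off-diagonal entry is $(\alpha-\beta+\delta-\gamma)\lambda_0\lambda_1$ and whose diagonal entries pair $\alpha$ with $\delta$ and $\beta$ with $\gamma$; positivity of $\Lambda_{\bf p}(\rho)$ and its partial transpose at the maximally entangled point then reads $2+(\alpha+\delta)\ge|\alpha+\delta-\beta-\gamma|$ and $2+(\beta+\gamma)\ge|\alpha+\delta-\beta-\gamma|$, i.e., the conditions of Eq.~\eqref{Eq:2x2cond1} with $s=\alpha+\delta$ and $\tilde s=\beta+\gamma$. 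The definitions $s=\alpha+\gamma$, $\tilde s=\beta+\delta$ in the corollary contradict the matrix in the paper's own proof and appear to be a typo, and your phrase ``the surviving weights are exactly $s$ and $\tilde s$'' silently reproduces it; with the pairing corrected your argument is complete and matches the paper's. Note finally that the maximally entangled point alone only forces $s,\tilde s\ge -2$, so the $-1$ inside the maxima is not ``inherited from Theorem~\ref{thm:lewenstein1}'' at that point but follows from the individual bounds $\alpha,\beta,\gamma,\delta\ge-1$ --- a harmless extra restriction since the claim is one of sufficiency.
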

\begin{proof}
The proof is straightforward, since,  PPT provides the if and only
if criterion for separability in the case 2 by 2. We take
$\rho=|\Psi\rangle\langle \Psi|$ with $|\Psi\rangle =\lambda_0
|0\rangle\otimes |0\rangle + \lambda_1|1\rangle\otimes |1\rangle$,
and simply write down the matrices $\Lambda_{\bf p}(\rho)$ and
$\Lambda_{\bf p}(\rho)^{T_A}$ and check positive definiteness. We
observe that it is enough to consider the "extreme" cases, when
either $\lambda_0=0$ or $\lambda_1=0$, or when $|\Psi\rangle$ is
maximally entangled, i.e. $\lambda_0=\lambda_1=1/\sqrt{2}$. The
latter case gives us conditions for $s$ and $\tilde s$, whereas
the former the simple lower bounds by -1 on all parameters.
To be more explicit the conditions $\alpha\ge -1$, $\beta\ge -1$, $\gamma\ge -1$, and
$\delta\ge -1$ follow from the analysis of the case when $\rho$ is
a projector on a product state, say $|0\rangle\otimes|0\rangle$,
so that $\tilde\rho_A$ is a projector on
$|1\rangle\otimes|0\rangle$, $\tilde\rho_B$ on
$|0\rangle\otimes|1\rangle$, and $\tilde\rho$ on
$|1\rangle\otimes|1\rangle$. Let us explicitly write $\Lambda_{\bf
p}(\rho)= {\rm Tr}(\rho)\id + \alpha \rho + \beta \tilde\rho_A +
\gamma \tilde\rho_B  + \delta \tilde\rho$, for
$\rho=|\Psi\rangle\langle \Psi|$ with $|\Psi\rangle =\lambda_0
|0\rangle\otimes |0\rangle + \lambda_1|1\rangle\otimes |1\rangle$:
\begin{widetext}\[
 \Lambda_{\bf
p}(\rho)= \begin{pmatrix}
1+\alpha\lambda_0^2+\delta\lambda_1^2 & 0 & 0 & (\alpha-\beta+\delta-\gamma)\lambda_0\lambda_1 \\
0 & 1+\gamma\lambda_0^2+\beta\lambda_1^2 & 0 & 0 \\
0 & 0 & 1+\beta\lambda_0^2+\gamma\lambda_1^2 & 0  \\
(\alpha-\beta+\delta-\gamma)\lambda_0\lambda_1 & 0 & 0 &
1+\delta\lambda_0^2+\alpha\lambda_1^2
\end{pmatrix}.
\]
\end{widetext} It is easy to observe that the positivity of
$\Lambda_{\bf p}(\rho)$ and its partial transpose (i.e. its
separability),  gives the strongest restrictions on the parameters
when $|\Psi\rangle$ is maximally entangled, i.e. $\lambda_0=\lambda_1=1/\sqrt{2}$. Direct
inspection then leads to the conditions in Eq.~\eqref{Eq:2x2cond1}.
\end{proof}
Note that the proof of the Theorem \ref{thm:lewenstein1} is just
the same,  when we set $\gamma=\delta=0$, while  the proof of 
the Theorem~\ref{thm:lewenstein2}, as discussed above, is more involved.

\subsection{\label{Subsection:Andolike.2.N}Ando-like maps}

So far we were mainly generalizing the Breuer-Hall type maps in the context of our general approach. Let us now present a similar separability criterion based on another family of positive maps, the so called Ando's maps \cite{Tanahasi.CMB.1988,Osaka.LAA.1991-93,Ha.PRIMS.1998}.
For $k=1,2,\dotsc,d-1$, the map $\Lambda_{k;\alpha}\colon{\cal B}({\CC}^d)\to {\cal B}({\CC}^d)$ is defined by
\[\Lambda_{k;\alpha}(\rho):=(d-k)\,\epsilon(\rho)+\sum_{l=1}^k\epsilon\left(S^l\rho S^{l\dagger}\right)+\alpha\rho,
\]
where $S|i\rangle=|i+1\rangle$ is a unitary shift modulo $d$, and $\epsilon(X):=\sum_{i=0}^{d-1}|i\rangle\langle i|X|i\rangle\langle i|$. It is known that $\Lambda_{k;\alpha}$ is positive for $\alpha\geq -1$ \cite{Tanahasi.CMB.1988,Osaka.LAA.1991-93,Ha.PRIMS.1998}. Also note that $\Lambda_{d-1;\alpha}=\id\tr(\rho)+\alpha\rho$ is the generalized reduction map already considered in Theorem~\ref{thm:barnum}.

The Ando-type maps defined above do not consider the internal tensor-product structure of the space ${\cal B}({\CC}^d)$. To make it transparent, we define for $\rho\in\mathcal{B}\left(\CC^2\otimes\CC^N\right)$  and $-1\le k\le N-1$,\begin{eqnarray}
\Lambda_{k,\alpha}^{2\times N}(\rho)&:=&(N-k-1)\,\epsilon(\rho)+\sum_{m=0}^{N-1}\epsilon\left(S_B^m\rho S_B^{m\dagger}\right)\nonumber\\
&+&\sum_{m=0}^{k}\epsilon\left(S_A S_B^m\rho S_A^{\dagger}S_B^{m\dagger}\right)+\alpha\rho,
\end{eqnarray}
where
 \[\epsilon(X):=\sum_{i=0, j=0}^{1,N-1}|i,j\rangle\langle i,j|X|i,j\rangle\langle i,j|,\]
 and $S_A$, $S_B$ are the unitary shift operator on the Alice and Bob spaces, respectively.  
The case $k=-1$ is particularly simple, since the unitary shift is applied on Bob's side only. For simplicity, we denote it differently,
$\Lambda_{-1}^{2\times N}(.)\equiv\Lambda^{2\times N}(.)$, and discuss below its separability properties.  Its generalization for $M\ge 3$ is presented in Section~\ref{SubSec:Ando.arbitary.M.N}.

\begin{thm}
	\label{thm:Darek.Andos.map1} For $\rho\in\mathcal{B}\left(\CC^2\otimes\CC^N\right)$, \[\sigma:=\Lambda^{2\times N}(\rho)=N\,\epsilon(\rho)+\sum_{m=0}^{N-1}\epsilon\left(S_B^m\rho S_B^{m\dagger}\right)\nonumber+\alpha\rho\]
	is separable if $-2N/(3N-1)\le \alpha\le 1$.
\end{thm}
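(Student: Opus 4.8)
The plan is to use linearity and convexity to reduce to pure inputs, rewrite $\sigma$ in a transparent qubit-block form, and then certify separability with the $M=2$ tools recalled just above (Theorem~\ref{thm:Karnas} and Corollary~\ref{cor:Karnas}), handling the two signs of $\alpha$ separately.

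First I would reduce to pure states: since $\Lambda^{2\times N}$ is linear and the set of separable states is convex, it suffices to prove the claim for $\rho=\ket{\Psi}\!\bra{\Psi}$. Writing $\ket{\Psi}=\ket{0}\ket{u}+\ket{1}\ket{v}$ (Alice is a qubit), I would simplify the middle term of the map. Because the sum runs over \emph{all} Bob shifts, $\sum_{m=0}^{N-1}\epsilon\!\left(S_B^m\rho S_B^{m\dagger}\right)=\diag(\rho_A)\otimes\id_B$, so that $\sigma=\sigma_{\mathrm d}+\alpha\rho$, where $\sigma_{\mathrm d}:=N\epsilon(\rho)+\diag(\rho_A)\otimes\id_B$ is a nonnegative \emph{diagonal} operator. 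In block form over Alice this reads
\begin{equation}
\sigma=\begin{pmatrix} P & \alpha\ket{u}\!\bra{v} \\ \alpha\ket{v}\!\bra{u} & Q \end{pmatrix},\quad P=N\,\diag(\ket{u}\!\bra{u})+\|u\|^2\id+\alpha\ket{u}\!\bra{u},
\end{equation}
with $Q$ the analogous operator built from $\ket{v}$. The point of this rewriting is that $\sigma_{\mathrm d}$ is manifestly separable, so all the difficulty sits in the rank-one coherence $\alpha\rho$, i.e.\ in the single off-diagonal block $\alpha\ket{u}\!\bra{v}$.

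For $\alpha\ge0$ I would absorb the coherence into the diagonal budget $N\epsilon(\rho)$ and show that the hardest configuration is the computational-basis aligned state ($\ket{u},\ket{v}$ proportional to distinct basis vectors), for which $\sigma$ decouples into a trivial diagonal part plus a $2\otimes2$ block whose PPT test returns the upper bound $\alpha\le1$. For $\alpha\le0$, where $\sigma=\sigma_{\mathrm d}-|\alpha|\rho$, I would split $\sigma=\tfrac12(\sigma+\sigma^{T_A})+\tfrac12(\sigma-\sigma^{T_A})$ and invoke Corollary~\ref{cor:Karnas}: the antisymmetric piece $\sigma-\sigma^{T_A}$ is $\alpha$ times the coherence built from $K=\ket{u}\!\bra{v}-\ket{v}\!\bra{u}$, contributing $\|\sigma-\sigma^{T_A}\|=|\alpha|\,\|K\|$, while the diagonal boost inside $\sigma+\sigma^{T_A}$ controls $\lambda_{\min}(\sigma+\sigma^{T_A})$. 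Imposing $\|(\sigma+\sigma^{T_A})^{-1}\|\,\|\sigma-\sigma^{T_A}\|\le1$ and optimizing over all $\ket{u},\ket{v}$ is what produces the lower bound; the constant $3N-1$ is the output of this optimization. As with the analogous estimate in Theorem~\ref{thm:lewenstein1}, I expect this condition to be sufficient but somewhat conservative rather than sharp.

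The main obstacle is that $\Lambda^{2\times N}$ is tied to the computational (dephasing and shift) basis and is \emph{not} covariant under Bob unitaries, so one cannot use the Schmidt decomposition to bring $\ket{\Psi}$ to the canonical form $\lambda_0\ket{00}+\lambda_1\ket{11}$ as was done in the proofs of Theorems~\ref{thm:barnum} and \ref{thm:lewenstein1}. Because the Bob Schmidt vectors $\ket{u},\ket{v}$ are generically misaligned with the dephasing basis, the coherences of $\sigma$ spread over all $N$ Bob levels, and a bare partial-transpose test is no longer sufficient for separability when $N\ge4$. This is precisely why one must call on the genuine $M=2$ separability results instead of PPT, and why a worst-case optimization over the misaligned states — rather than a single canonical representative — is unavoidable in order to extract the constant $-2N/(3N-1)$.
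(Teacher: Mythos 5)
Your reduction to pure states and the rewriting $\sigma=\sigma_{\mathrm d}+\alpha\rho$ with $\sigma_{\mathrm d}=N\epsilon(\rho)+\diag(\rho_A)\otimes\id$ are both correct and match the structure the paper works with, but the heart of your argument fails for $\alpha<0$. Applying Corollary~\ref{cor:Karnas} to $\sigma$ itself cannot produce \emph{any} uniform lower bound on $\alpha$, because the two norms in that criterion scale differently as the state becomes unbalanced. Already for the aligned state $\ket{\Psi}=\epsilon\ket{00}+\sqrt{1-\epsilon^2}\,\ket{11}$ one has $\|\sigma-\sigma^{T_A}\|=|\alpha|\,\epsilon\sqrt{1-\epsilon^2}$, while $\sigma+\sigma^{T_A}$ restricted to $\mathrm{span}\{\ket{01},\ket{10}\}$ equals
\begin{equation*}
\begin{pmatrix} 2\epsilon^2 & \alpha\,\epsilon\sqrt{1-\epsilon^2}\\ \alpha\,\epsilon\sqrt{1-\epsilon^2} & 2(1-\epsilon^2)\end{pmatrix},
\end{equation*}
whose smallest eigenvalue is $O(\epsilon^2)$. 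Hence $\|(\sigma+\sigma^{T_A})^{-1}\|\,\|\sigma-\sigma^{T_A}\|\sim |\alpha|/(2\epsilon)\to\infty$, and the criterion certifies only $|\alpha|=O(\epsilon)$; your proposed ``optimization over all $\ket{u},\ket{v}$'' therefore returns the bound $0$, not $2N/(3N-1)$. The defect is structural: the smallest diagonal entries of $\sigma_{\mathrm d}$ in the Alice-$0$ block scale like $N_0=\|u\|^2$, while the coherence scales like $\sqrt{N_0N_1}$, so any certificate comparing a \emph{global} operator norm against a \emph{global} smallest eigenvalue degenerates as $N_0\to 0$. Your $\alpha\ge 0$ half also has a hole: the claim that ``the hardest configuration is the computational-basis aligned state'' is precisely the worst-case reduction that, as you yourself note in your final paragraph, is unavailable because the map is not covariant under Bob unitaries; no argument for it is given.

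The paper's proof is constructive on both sides of zero and is engineered to avoid exactly this scaling problem. From the explicit matrix of $\Lambda^{2\times N}(\ket{\psi}\!\bra{\psi})$ it subtracts the unnormalized separable states $\sigma_{i,j}$ of Eq.~\eqref{sigmaij}, each supported on a $2\otimes2$ subspace (where positivity plus PPT imply separability), thereby removing all cross coherences $\lambda_{0i}\lambda_{1j}^*$ with $i\neq j$; for $0\le\alpha\le1$ the remainder is manifestly separable, which is where the upper bound $\alpha\le 1$ comes from. For $\alpha<0$ it subtracts in addition the product states $|\alpha|\,\ket{e_i}\!\bra{e_i}\otimes\ket{i}\!\bra{i}$ to kill the remaining diagonal coherences, leaving $\ket{0}\!\bra{0}\otimes(D_0'+\alpha\ket{e}\!\bra{e})+\ket{1}\!\bra{1}\otimes(D_1'+\alpha\ket{f}\!\bra{f})$, and then checks positivity of each ``diagonal plus rank one'' block via the Lewenstein--Sanpera condition $\sum_{j}|\alpha||\lambda_{0j}|^2/\left[(3-2|\alpha|)|\lambda_{0j}|^2+(1-|\alpha|)N_0\right]\le 1$. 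Crucially, this condition is homogeneous of degree zero in the block weight $N_0$ (numerator and denominator both scale like $N_0$), so it survives the $N_0\to 0$ limit in which your norm-based certificate dies, and optimizing it over the $\lambda$'s (the worst case being equal weights) is what yields $-2N/(3N-1)\le\alpha$. To salvage your plan you would need to apply a Karnas-type bound blockwise \emph{after} a subtraction of this kind --- at which point you have essentially reproduced the paper's proof.
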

\begin{proof} 
As usual, we can  restrict ourselves to consider $\rho=|\psi\rangle\langle\psi|$, and without loosing generality assume that   $|\psi\rangle=
 |0\rangle|e\rangle + |1\rangle|f\rangle$, with $|e\rangle=\sum_{j=0}^{N-1}\lambda_{0j}|j\rangle$, 
 $|f\rangle=\sum_{j=0}^{N-1}\lambda_{1j}|j\rangle$, and $\sum_{j=0}^{N-1}(|\lambda_{0j}|^2 + |\lambda_{1j}|^2)=1$.
   One gets (we present the results for $N=3$,  but generalization for $N>3$ is straightforward)
\begin{widetext}
\begin{align*}
\Lambda^{2\times 3}(\rho) &=\left[ \begin{array}{ccc|ccc} 3|\lambda_{00}|^2 +\sum\limits_{j=0}^{2}|\lambda_{0j}|^2  & 0 & 0 &  \alpha\lambda_{00}\lambda_{10}^* & ... & \alpha\lambda_{00}\lambda_{12}^* \\ 0 &  3|\lambda_{01}|^2 +\sum\limits_{j=0}^{2}|\lambda_{0j}|^2  & 0 & ... & ...  & ...\\ 
0 &  0 & 3|\lambda_{02}|^2 +\sum\limits_{j=0}^{2}|\lambda_{0j}|^2 & \alpha\lambda_{02}\lambda_{10}^*& ... & \alpha\lambda_{02}\lambda_{12}^* \\ 
\hline \alpha\lambda_{00}^*\lambda_{10}& ... & \alpha\lambda_{02}^*\lambda_{10}& 3|\lambda_{10}|^2 + \sum\limits_{j=0}^{2}|\lambda_{1j}|^2& 0 & 0 \\ ... &... & ...&  0 &  3|\lambda_{11}|^2 + \sum\limits_{j=0}^{2}|\lambda_{1j}|^2 & 0 \\
\alpha\lambda_{00}^*\lambda_{12}& ... & \alpha\lambda_{02}^*\lambda_{12}& 0 & 0 & 3|\lambda_{12}|^2 + \sum\limits_{j=0}^{2}|\lambda_{1j}|^2 
\end{array} \right]\\
+&  \alpha|0\rangle\langle 0|\otimes |e\rangle\langle e| + \alpha |1\rangle\langle 1|\otimes |f\rangle\langle f|.
\end{align*}
\end{widetext}
The result is separable iff it is non-negative and PPT, but in the proof we will not use the PPT for the sake of generalization. Instead, we subtract from  $\Lambda^{2\times 3}(\rho)$ a set of explicitly separable (unnormalized) states $\sigma_{i,j}$ (with $i<j=0,1,2$), which are supported in $2\otimes 2$ (thus nonnegativity and PPT implying separability) and are of the form 
 \begin{widetext}
 \begin{equation}\label{sigmaij}
\sigma_{i,j }= \left[ \begin{array}{cc|cc}|\alpha|( |\lambda_{0i}|^2 +|\lambda_{0j}|^2)  & 0 &0 &\alpha \lambda_{0i}\lambda_{1j}^*\\
 0 & |\alpha|(|\lambda_{0i}|^2 +|\lambda_{0j}|^2 ) &\alpha \lambda_{0j}\lambda_{1i}^* &0\\
\hline
 0 & \alpha \lambda_{0j}^*\lambda_{1i} & |\alpha|(|\lambda_{1i}|^2 +|\lambda_{1j}|^2 ) & 0\\
 \alpha \lambda_{0i}^*\lambda_{1j} & 0 & 0 & |\alpha|(|\lambda_{1i}|^2 +|\lambda_{1j}|^2 )\end{array} \right] .
\end{equation}
\end{widetext}

The remainder  $R(\rho):=\Lambda^{2\times 3}(\rho) -\sum_{i<j=0}^2\sigma_{i,j}$ is given by 
\begin{equation}
R(\rho)=\left[\begin{array}{c|c}
D_0&X\\
\hline
X^\dagger&D_1
\end{array}\right]+\alpha |0\rangle\langle 0|\otimes |e\rangle\langle e| + \alpha |1\rangle\langle 1|\otimes |f\rangle\langle f|,
\end{equation}
\begin{align*}
\text{where~ } D_k&=\diag\left\{(3-|\alpha|)|\lambda_{ki}|^2 +(1-|\alpha|)N_k\right\}_{i=0}^2,\,k=0,1,\\
X&=\diag\left\{\alpha\lambda_{00}\lambda_{10}^*,\,\alpha\lambda_{01}\lambda_{11}^*,\,\alpha\lambda_{02}\lambda_{12}^*\right\},\\
N_0&=1-N_1=\sum_{j=0}^2|\lambda_{0j}|^2.
\end{align*}
For $0\le \alpha\le 1$, $R(\rho)$ is explicitly non-negative and separable. For $\alpha< 0$, we again subtract separable states
of the form $|\alpha|$ $|e_i\rangle\langle e_i|\otimes|i\rangle\langle i|$ ($i=0,1,2$), with  $|e_i\rangle=\lambda_{0i}|0\rangle-\lambda_{1i}|1\rangle$, and end up with 
\begin{equation}\label{Eq:def.R'}
R'(\rho)=\left[\begin{array}{c|c}
D_0'&\bm{0}\\
\hline
\bm{0}&D_1'
\end{array}\right]+\alpha |0\rangle\langle 0|\otimes |e\rangle\langle e| + \alpha |1\rangle\langle 1|\otimes |f\rangle\langle f|,
\end{equation}
\[\text{where~ } D_k'=\diag\left\{(3-2|\alpha|)|\lambda_{ki}|^2 +(1-|\alpha|)N_k\right\}_{i=0}^2,\,k=0,1.\]
Noticing that the RHS of Eq.~\eqref{Eq:def.R'} could be written as $|0\ran\lan 0|\otimes (D_0'+\alpha|e\ran\lan e|)+|1\ran\lan 1|\otimes (D_1'+\alpha|f\ran\lan f|)$,
$R'(\rho)$ is explicitly separable if $D_0'+\alpha|e\ran\lan e|\geq 0$ and $D_1'+\alpha|f\ran\lan f|\geq 0$.
 
After straightforward calculations using the results of Ref.~\cite{Lewenstein+Sanpera.PRL.1998}, we get that this requires
that for all choices of $\lambda$'s  we must have 
\[ \sum_{j=0}^{N-1}\frac{|\alpha||\lambda_{0j}|^2}{ (3-2|\alpha|)|\lambda_{0j}|^2   +(1-|\alpha|)N_0}\le 1.\]
This leads to  $-3/4 \le \alpha$, or generally $-2N/(3N-1)\le \alpha$. Numerical analysis and examples suggest that the regions of $\alpha$ assuring separability can be extended to $-1\le \alpha$, but we were not able to prove it.
\end{proof}
Noticing that the map $\Lambda^{2\times N}$ allows an inverse, 
the above theorem implies the following sufficient separability criterion. 
\begin{cor}
\label{cor:lewenstein4} {\bf (Sufficient separability criterion 5)}
Let $-2N/(3N-1)\le\alpha\le 1$ and 
\begin{equation*}
\frac{1}{\alpha}\left(\sigma -\frac{N}{N+\alpha}\epsilon(\sigma)-\frac{\alpha}{(N+\alpha)(2N+\alpha)}\sum_{m=0}^{N-1}\epsilon\left(S_B^m\sigma S_B^{m\dagger}\right)\right)\ge 0.
\end{equation*}
Then $\sigma$ is separable.
\end{cor}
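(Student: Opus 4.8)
The plan is to read this corollary as nothing more than the image under Theorem~\ref{thm:general} of the separability statement already established in Theorem~\ref{thm:Darek.Andos.map1}. Concretely, I would take $S$ to be the cone of positive semidefinite operators on $\CC^2\otimes\CC^N$ and $S'$ to be the separable set $\Sigma$. Theorem~\ref{thm:Darek.Andos.map1} asserts exactly that $\Lambda^{2\times N}(\rho)\in\Sigma$ whenever $\rho\ge 0$, provided $-2N/(3N-1)\le\alpha\le 1$, which is precisely hypothesis (ii) of the general scheme. Since $\Lambda^{2\times N}$ is invertible for $\alpha\neq 0$ (the stated range excludes only the single degenerate point $\alpha=0$, where the map discards all off-diagonal information), Theorem~\ref{thm:general} then gives $(\Lambda^{2\times N})^{-1}(\sigma)\ge 0\Rightarrow\sigma\in\Sigma$. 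The entire remaining content of the corollary is therefore to compute $(\Lambda^{2\times N})^{-1}$ explicitly and verify that it coincides with the operator displayed in the statement.

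To invert $\Lambda^{2\times N}$ I would split its argument into diagonal and off-diagonal parts in the product basis $\{|i,j\rangle\}$. Because $\epsilon$ annihilates every off-diagonal matrix element, and the dephasing in $\sum_{m}\epsilon(S_B^m\,\cdot\,S_B^{m\dagger})$ does the same, the map acts on the off-diagonal block simply as multiplication by $\alpha$; hence that part of $\rho$ is recovered as $(\sigma-\epsilon(\sigma))/\alpha$. On the diagonal the problem reduces to a small linear system: writing $d_{ij}=\langle i,j|\rho|i,j\rangle$ and $a_i=\sum_j d_{ij}$, a short calculation identifies the diagonal entries of $\sigma=\Lambda^{2\times N}(\rho)$ as $(N+\alpha)d_{ij}+a_i$, where the $a_i$ term comes from the observation that $\sum_{m=0}^{N-1}\epsilon(S_B^m\rho S_B^{m\dagger})$ is the diagonal operator whose $(i,j)$ entry equals $a_i$, constant in $j$ within each Alice block.

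The key simplification is that summing this relation over $j$ decouples the $a_i$: one gets $\sum_j\langle i,j|\sigma|i,j\rangle=(2N+\alpha)\,a_i$, so $a_i=\frac{1}{2N+\alpha}\sum_j\langle i,j|\sigma|i,j\rangle$, which is exactly $1/(2N+\alpha)$ times the diagonal produced by $\sum_{m=0}^{N-1}\epsilon(S_B^m\sigma S_B^{m\dagger})$. Back-substituting $d_{ij}=\frac{1}{N+\alpha}\bigl(\langle i,j|\sigma|i,j\rangle-a_i\bigr)$ and recombining the diagonal and off-diagonal pieces into operator form, the two $\epsilon(\sigma)$ contributions coalesce and one lands on $\frac{1}{\alpha}\left(\sigma-\frac{N}{N+\alpha}\epsilon(\sigma)-\frac{\alpha}{(N+\alpha)(2N+\alpha)}\sum_{m=0}^{N-1}\epsilon(S_B^m\sigma S_B^{m\dagger})\right)$, matching the statement. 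At this point I would record that the admissible range keeps $N+\alpha>0$ and $2N+\alpha>0$ (for $N\ge 2$), so the inverse is well defined.

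I expect the main, though modest, obstacle to be purely organizational: correctly pinning down $\sum_{m=0}^{N-1}\epsilon(S_B^m\rho S_B^{m\dagger})$ as the operator with diagonal $a_i$ via the action $S_B^{m\dagger}|i,j\rangle=|i,j-m\rangle$ (mod $N$), and then tracking the three competing coefficients so that the $\epsilon(\sigma)$ terms combine into the single factor $N/(N+\alpha)$ after pulling out $1/\alpha$. There is no analytic difficulty beyond this finite linear algebra, since positivity of $\sigma$ plays no role in the inversion step itself---the separability conclusion is delivered entirely by Theorem~\ref{thm:Darek.Andos.map1} through Theorem~\ref{thm:general}.
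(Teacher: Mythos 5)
Your proposal is correct and matches the paper's route exactly: the paper derives this corollary in one line by combining Theorem~\ref{thm:general} with Theorem~\ref{thm:Darek.Andos.map1} and ``noticing that the map $\Lambda^{2\times N}$ allows an inverse,'' and your explicit inversion (off-diagonal part scaled by $\alpha$, diagonal part solved via $a_i=\frac{1}{2N+\alpha}\sum_j\langle i,j|\sigma|i,j\rangle$) correctly reproduces the displayed formula, since $-\frac{1}{\alpha}+\frac{1}{N+\alpha}=-\frac{N}{\alpha(N+\alpha)}$. The only cosmetic slip is the remark that the stated range ``excludes'' $\alpha=0$ --- it does not, but the criterion's $1/\alpha$ factor makes that point vacuous anyway, just as in the paper's statement.
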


The methods used to prove Theorem~\ref{thm:Darek.Andos.map1} can readily be applied to prove the cases of other $k$'s. The next result is such a generalization.  
\begin{thm}
	\label{thm:Darek.Andos.map} For $\rho\in\mathcal{B}\left(\CC^2\otimes\CC^N\right)$, $\Lambda_{k}^{2\times N}(\rho)$ is separable if \begin{itemize}
		\item $k=0$ and $-(2N-1)/(3N-2)\leq\alpha\leq 1$
                     \item $k=1$ and $-(2N-2)/(3N-2)\leq\alpha\leq 1$ 
                     \item for arbitrary  $k$   and $-(2N-k-1)/(3N-2)\leq\alpha\leq 1$
		\item $k=N-1$ and $-1\leq\alpha\leq 2$ (reduction map)
	\end{itemize}
\end{thm}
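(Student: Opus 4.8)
The plan is to adapt the subtraction technique from the proof of Theorem~\ref{thm:Darek.Andos.map1} essentially verbatim, parametrizing the bookkeeping by $k$. As before, convexity reduces everything to a pure state $\rho=|\psi\rangle\langle\psi|$ with $|\psi\rangle=|0\rangle|e\rangle+|1\rangle|f\rangle$, $|e\rangle=\sum_j\lambda_{0j}|j\rangle$, $|f\rangle=\sum_j\lambda_{1j}|j\rangle$, normalized. Writing out $\Lambda_k^{2\times N}(\rho)$, the diagonal entries now carry the $k$-dependent weights coming from $(N-k-1)\epsilon(\rho)+\sum_{m=0}^{N-1}\epsilon(S_B^m\rho S_B^{m\dagger})+\sum_{m=0}^{k}\epsilon(S_A S_B^m\rho S_A^\dagger S_B^{m\dagger})$, while the off-diagonal (Alice-coherence) block remains $\alpha\,\lambda_{0i}\lambda_{1j}^*$ exactly as in the $k=-1$ case. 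The key structural point I would verify first is that the shift terms $S_A S_B^m$ contribute only to diagonal entries after the dephasing $\epsilon$, so they merely shift the diagonal weights and leave the single off-diagonal coupling untouched; this is what makes the whole $k$-family amenable to the same argument.

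Next I would perform the two-stage subtraction. First subtract the explicitly separable $2\otimes2$-supported states $\sigma_{i,j}$ of the form~\eqref{sigmaij} (with the coefficient $|\alpha|(|\lambda_{0i}|^2+|\lambda_{0j}|^2)$ etc.\ possibly rescaled to match the new diagonal weights), leaving a remainder $R(\rho)$ that is block-diagonal in the Alice coherence modulo the diagonal part, and is manifestly non-negative and separable for $0\le\alpha\le1$. For $\alpha<0$ subtract the further separable pieces $|\alpha|\,|e_i\rangle\langle e_i|\otimes|i\rangle\langle i|$ with $|e_i\rangle=\lambda_{0i}|0\rangle-\lambda_{1i}|1\rangle$, arriving at an $R'(\rho)$ of the form~\eqref{Eq:def.R'} that decomposes as $|0\rangle\langle0|\otimes(D_0'+\alpha|e\rangle\langle e|)+|1\rangle\langle1|\otimes(D_1'+\alpha|f\rangle\langle f|)$. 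The only difference from Theorem~\ref{thm:Darek.Andos.map1} is the precise coefficient $(3-2|\alpha|)$ of $|\lambda_{ki}|^2$ in $D_k'$: for general $k$ the diagonal weight of $|\lambda_{ki}|^2$ coming from $\epsilon(\rho)$ and the shift terms changes, so one gets $D_k'=\diag\{(c_k-2|\alpha|)|\lambda_{ki}|^2+(1-|\alpha|)N_k\}$ for an appropriate $k$-dependent constant $c_k$ (namely $c_k=2N-k-1$ or the analogous count).

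Then I would invoke the positivity lemma of Ref.~\cite{Lewenstein+Sanpera.PRL.1998} exactly as before: $D_k'+\alpha|e\rangle\langle e|\ge0$ holds for all admissible $\lambda$'s precisely when
\begin{equation*}
\sum_{j=0}^{N-1}\frac{|\alpha|\,|\lambda_{0j}|^2}{(c_k-2|\alpha|)|\lambda_{0j}|^2+(1-|\alpha|)N_0}\le1,
\end{equation*}
and the worst case over normalized $\lambda$'s yields the stated threshold $\alpha\ge-(2N-k-1)/(3N-2)$. The cases $k=0,1$ are then just specializations, and $k=N-1$ recovers the reduction-map bound of Theorem~\ref{thm:barnum}, where the known exact range $-1\le\alpha\le2$ applies. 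The main obstacle I anticipate is purely combinatorial rather than conceptual: correctly tracking how the two families of shift terms ($S_B^m$ on Bob alone, and $S_A S_B^m$ mixing the sides) redistribute weight across the diagonal, so as to pin down $c_k$ and confirm that the off-diagonal coupling stays exactly $\alpha\lambda_{0i}\lambda_{1j}^*$; once the correct diagonal bookkeeping is in place, optimizing the resulting rational inequality to extract the clean denominator $3N-2$ is routine.
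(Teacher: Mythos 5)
Your overall strategy coincides with the paper's: reduce to pure states, subtract the $2\otimes 2$-supported separable pieces $\sigma_{i,j}$ of Eq.~\eqref{sigmaij}, settle $0\le\alpha\le1$ directly, and for $\alpha<0$ subtract product states $|\alpha|\,|e_i\rangle\langle e_i|\otimes|i\rangle\langle i|$ and invoke the positivity condition of Ref.~\cite{Lewenstein+Sanpera.PRL.1998}; you also correctly note that all shift terms become purely diagonal after $\epsilon$ and that $k=N-1$ reduces to Theorem~\ref{thm:barnum}. But two concrete steps, as you state them, would not deliver the advertised constant. First, for $k\ge0$ you keep the $k=-1$ subtraction vector $|e_i\rangle=\lambda_{0i}|0\rangle-\lambda_{1i}|1\rangle$, whereas the paper explicitly swaps it to $|e_i\rangle=\lambda_{0i}|1\rangle-\lambda_{1i}|0\rangle$. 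The swap is not cosmetic: for $k\ge0$ the $(0,i)$ diagonal entry carries the extra weight $|\lambda_{1i}|^2$ contributed by the $S_AS_B^m$ terms, and the swapped vector makes the second-stage subtraction consume precisely that spare weight instead of eroding the coefficient of $|\lambda_{0i}|^2$. With your unswapped choice that coefficient drops to $(N-k-1)-(N-1)|\alpha|$, and the worst-case optimization yields only $|\alpha|\le(2N-k-1)/(3N-1)$, strictly weaker than the claimed $(2N-k-1)/(3N-2)$ (and the degradation worsens as $k$ grows).

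Second, your ansatz $D_k'=\diag\{(c_k-2|\alpha|)|\lambda_{ki}|^2+(1-|\alpha|)N_k\}$ mis-tracks the first-stage subtraction. Summing the $\sigma_{i,j}$ over all pairs removes $|\alpha|\left((N-2)|\lambda_{0i}|^2+N_0\right)$ from each $(0,i)$ diagonal entry, so the surviving coefficient of $|\lambda_{0j}|^2$ is $(N-k-1)-(N-2)|\alpha|$; the ``$-2|\alpha|$'' of Theorem~\ref{thm:Darek.Andos.map1} is the $N=3$ instance of this, not a universal constant. Moreover the paper's condition retains the shifted-in weights, e.g.\ for $k=0$,
\begin{equation*}
\sum_{j=0}^{N-1}\frac{|\alpha|\,|\lambda_{0j}|^2}{\left((N-1)-(N-2)|\alpha|\right)|\lambda_{0j}|^2+(1-|\alpha|)\left(N_0+|\lambda_{1j}|^2\right)}\le 1,
\end{equation*}
whose worst case (all $|\lambda_{1j}|^2=0$, equal $|\lambda_{0j}|^2$) gives $(3N-2)|\alpha|\le 2N-1$, i.e.\ the stated $5/7$ at $N=3$. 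Plugging your form instead gives, at the symmetric worst case, $|\alpha|\le(c_k+N)/(2N+2)$, which with your suggested $c_k=2N-k-1$ equals $(3N-k-1)/(2N+2)\neq(2N-k-1)/(3N-2)$ (it even exceeds $1$ at $N=3$, $k=0$). So the ``routine optimization'' you defer does not rescue the stated bound: the diagonal bookkeeping you flagged as a technicality is exactly where the proof content lies, and both the subtraction vector and the coefficient structure must be corrected before the computation yields the theorem's thresholds.
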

\begin{proof}
The case $k=N-1$ follows from the results above. We provide first the proof for $k=0$ and $N=3$ -- generalization 
to $k=0$ and $N>3$ is straightforward. We follow exactly the steps from the proof of 
Theorem~\ref{thm:Darek.Andos.map1} and first subtract the matrices $\sigma_{i,j}$ defined in Eq. (\ref{sigmaij})
from $\Lambda_0^{2\times 3}(\rho)$ to obtain
\begin{equation}\label{Rbis}
R''(\rho)=\left[\begin{array}{c|c}
 	D_0''&X\\
 	\hline
 	X^\dagger&D_1''
 \end{array}\right]
+\alpha|0\rangle\langle 0|\otimes |e\rangle\langle e| + \alpha |1\rangle\langle 1|\otimes |f\rangle\langle f|,
\end{equation}
where
\begin{eqnarray}
D''_k&=&\diag\{(2-|\alpha|)|\lambda_{00}|^2+(1-|\alpha|)N_0+|\lambda_{10}|^2,\nonumber\\
&&\qquad(2-|\alpha|)|\lambda_{01}|^2+(1-|\alpha|)N_0+|\lambda_{11}|^2,\nonumber\\
&&\qquad(2-|\alpha|)|\lambda_{02}|^2+(1-|\alpha|)N_0+|\lambda_{12}|^2\}
\end{eqnarray}
and $X$ is defined as in the proof of Theorem \ref{thm:Darek.Andos.map1}. Now, for $\alpha\geq 0$
it is not difficult to see that the first matrix in (\ref{Rbis}) is separable. Thus, $R''(\rho)$ is separable and so is 
$\Lambda_0^{2\times 3}(\rho)$.

%
%
%
%
For $\alpha<0$, we have to modify the second step a little bit,  taking  $|e_i\rangle=\lambda_{0i}|1\rangle-\lambda_{1i}|0\rangle$. In this way we end up with the condition 
that  $\Lambda_0^{2\times 3}(\rho)$ is explicitly separable if 
\begin{widetext}
\begin{equation*}
 \left[\begin{array}{ccc} (2-|\alpha|)|\lambda_{00}|^2 +(1-|\alpha|)(N_0+|\lambda_{10}|^2)   & 0 & 0  \\ 
0 &  (2-|\alpha|)  |\lambda_{01}|^2    +(1-|\alpha|)(N_0+|\lambda_{11}|^2) & 0  \\ 
0 &  0 &  (2-|\alpha|)  |\lambda_{02}|^2   +(1-|\alpha|)(N_0+|\lambda_{12}|^2)  \end{array} \right]+\alpha |e\rangle\langle e|  \ge 0,
\end{equation*}
\end{widetext}
and the similar condition holds for the  $|f\rangle\langle f|$ part. 
The calculations using the results of Ref. \cite{Lewenstein+Sanpera.PRL.1998} are similar; we get that $|\alpha|$ must fulfill for all choices of $\lambda$'s
\[ \sum_{j=0}^{N-1}\frac{|\alpha||\lambda_{0j}|^2}{ (2-|\alpha|)|\lambda_{0j}|^2  +(1-|\alpha|)(N_0+|\lambda_{1j}|^2)}\le 1,\]
and the analogous conditions for  the  $|f\rangle\langle f|$ part. It is easy to see that the  bound on  $|\alpha|$ 
is  $-5/7\le \alpha$, or in general $-(2N-1)/(3N-2)\le \alpha$.  This kind of proof works indeed for arbitrary $k$ and leads in general to the condition  $-(2N-1-k)/(3N-2)\le \alpha$. Clearly, as $k$ approaches $N-1$ we expect that  a better proof and better estimates should be possible, but we have not found them. 
\end{proof}

\subsection{\label{Subsection:examples.2.N}Examples}
In this subsection we will present examples of separable states detected by our criteria. For the sake of comparison, we will consider other separability criteria from the literature. As such, the basic standard is to compare with the results from Ref.~\cite{Gurvits+Barnum.PRA.2002}. The strongest separability criterion derivable from Theorem~1 of Ref.~\cite{Gurvits+Barnum.PRA.2002} is the Corollary~2 (\emph{scaling} criterion) therein, which for a normalized $M\otimes N$ state $\rho$ becomes the \emph{purity condition}, \begin{equation}\label{Barnumball}
\tr(\rho^2)\leq 1/(MN-1)\Rightarrow \rho\in\Sigma.
\end{equation}
States satisfying Eq.~\eqref{Barnumball} are usually termed as members of the \emph{separable ball around identity}.
Since our criteria Eqs.~\eqref{MainResult:1}-\eqref{MainResult:3b} are mainly for $2\otimes N$ states, we use the relation $\tilde{\sigma}_A:=\sigma_2\sigma^{T_A}\sigma_2=\id\otimes\sigma_B-\sigma$ to simplify the equations a little bit. Then Eqs.~\eqref{MainResult:2a}-\eqref{MainResult:2b} become 
	\begin{subequations}
		\begin{align}
		\left(3\frac{\alpha}{2}-1\right)\sigma&\geq \left(\frac{\alpha}{2}-1\right)\id\otimes\sigma_B+\frac{1+\alpha/2}{2N+3\alpha/2-1}\id\label{Eq:Ex:2xN.geq.1},\\
		\left(3\frac{\beta}{2}-1\right)\sigma&\leq \beta\id\otimes\sigma_B-\frac{1+\beta/2}{2N+3\beta/2-1}\id\label{Eq:Ex:2xN.leq.1},
		\end{align}
	\end{subequations}
where $0\leq\alpha,\beta<\infty$. If we set $\alpha=2$ (so that $\beta=0$), Eq.~\eqref{Eq:Ex:2xN.geq.1} becomes the separability criterion mentioned in Corrolary~\ref{cor:barnum2}.

The first bound entangled state in the literature (and also in the least possible dimension, $2\otimes 4$) is given by \cite{Horodecki.PLA.1997}
\[\label{phbes24} \rho_a=\frac{7a}{7a+1}\rho_{\mathrm{ent}}+\frac{1}{7a+1}|\phi\ran\lan\phi|,\quad a\in[0,1],\] 
\begin{align*}
\phi\ran&= |1\ran\otimes\left( \sqrt{\frac{1+a}{2}}|0\ran+\sqrt{\frac{1-a}{2}}|2\ran\right),\\
\rho_{\mathrm{ent}}&=\frac{2}{7}\sum_{i=1}^3|\psi_i\ran\lan\psi_i|+\frac{1}{7}|03\ran\lan03|,\\
|\psi_i\ran&=\frac{1}{\sqrt{2}}\left(|0\ran |i-1\ran+|1\ran |i\ran\right),\quad i=1,2,3.
\end{align*}
The state $\rho_a$ remains PPT throughout $a\in[0,1]$ and for $0<a<1$, it is bound entangled.

Let us now consider the following class of states,
\begin{equation}
\label{Example:HM2x4.+.id}\rho_{a,p}=p\rho_a+\frac{1-p}{8}\id_8.
\end{equation}
One verifies (e.g., with a computer program) that for all $a,p$ in the range
\[0<a<\frac{1}{224} \left(\sqrt{3745}-49\right),\, \frac{1}{7} \sqrt{\frac{343 a^2+98 a+7}{47 a^2-6 a+7}}<p\leq \frac{1}{5},\]
$\rho_{a,p}$violates Eq.~\eqref{Barnumball} but satisfies Eq.~\eqref{Eq:Ex:2xN.geq.1} with $\alpha=2$. Hence, $\rho_{a,p}$ lies outside the separable ball, but still detected to be separable by 
the criterion mentioned in Corrolary~\ref{cor:barnum2}. Similar examples could be constructed with other values of $\alpha$. However, note that no state $\rho_{a,p}$ lying outside the separable ball could be detected 
by Eqs.~\eqref{MainResult:3a}-\eqref{MainResult:3b}.

\subsection{Non-invertible maps}

In this subsection we will consider (sufficient) separability criterion arising from some non-invertible maps. Due to non-invertibility of the maps involved, the derived criteria would not follow from the general theorem of this paper. It is quite challenging to generalize them to the invertible case.  

Our first example is about the map $\Phi_\alpha\colon\mathcal{B}\left(\CC^2\otimes\CC^N\right) \to \mathcal{B}\left(\CC^2\otimes\CC^N\right)$ defined as follows: any Hermitian $X \in \mathcal{B}\left(\CC^2\otimes\CC^N\right)$ may be written as
\[ X = \begin{pmatrix}
X_{11} & X_{12} \\ X^\dagger_{12} & X_{22}
\end{pmatrix},\]
where $X_{ij} \in\mathcal{B}\left(\CC^N\right)$. Let
\begin{equation}\label{}
  \Phi_\alpha(X) := \left( \begin{array}{cc} \mathbb{I}_N {\rm Tr}X_{11} & \alpha B  \\ \alpha B^\dagger & \mathbb{I}_N {\rm Tr} X_{22} \end{array} \right)
\end{equation}
with $ B= X_{12} + R_N(X_{12}^\dagger)$ and $R_N\colon \mathcal{B}\left(\CC^N\right)\rightarrow \mathcal{B}\left(\CC^N\right)$ being the reduction map. It was shown in Ref.~\cite{Chruscinski+2.PRA.2009} that $\Phi_{\alpha=1}$ defines an optimal positive map. Clearly, $\Phi_\alpha$ is positive for all $|\alpha| \leq 1$ (it is optimal for $|\alpha|=1$).

\begin{thm} \label{Thm:Sep.of.Phi.alpha}$\Phi_\alpha(|\psi\rangle\langle\psi|)$ is separable if $|\alpha| \leq 1$.
\end{thm}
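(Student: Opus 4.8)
The plan is to reduce the assertion to the two-qubit PPT criterion that already underlies the proof of Theorem~\ref{thm:barnum}. I would fix the unit vector and write $\ket\psi=\ket 0\ket a+\ket 1\ket b$ with $\ket a,\ket b\in\CC^N$, so that the blocks of $\ket\psi\bra\psi$ are $X_{11}=\ket a\bra a$, $X_{22}=\ket b\bra b$, $X_{12}=\ket a\bra b$. Using $R_N(Y)=\tr(Y)\id-Y$, a short computation then gives
\begin{equation*}
\Phi_\alpha(\ket\psi\bra\psi)=\begin{pmatrix}p\,\id & \alpha B\\ \alpha B^\dagger & q\,\id\end{pmatrix},\qquad B=\ket a\bra b-\ket b\bra a+c\,\id,
\end{equation*}
where $p=\|a\|^2$, $q=\|b\|^2$, and $c=\langle a|b\rangle$. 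I will call this matrix $M$ and note that its decisive feature is that $B$ is a skew-Hermitian rank-$\le2$ operator plus a multiple of the identity.

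The conceptual crux, and the step I regard as the real content, is the observation that $B$ is reduced by the two-dimensional subspace $W=\mathrm{span}\{\ket a,\ket b\}$: the skew part $\ket a\bra b-\ket b\bra a$ maps $W$ into $W$ and annihilates $W^\perp$, so $B|_{W^\perp}=c\,\id$. Since the diagonal blocks $p\,\id$, $q\,\id$ are scalar, $M$ splits as an orthogonal direct sum $M=M_W\oplus M_{W^\perp}$ relative to Bob's decomposition $\CC^N=W\oplus W^\perp$, and a block-diagonal operator is separable as soon as each summand is. On $W^\perp$ the summand is the product operator $\left(\begin{smallmatrix}p&\alpha c\\ \alpha\bar c&q\end{smallmatrix}\right)\otimes\id_{W^\perp}$, which is positive by Cauchy--Schwarz ($pq\ge|c|^2\ge\alpha^2|c|^2$) and manifestly separable. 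The point of this splitting is that although $M$ is a $2\otimes N$ state, where PPT does \emph{not} imply separability for $N\ge4$, its genuinely bipartite part lives on $\CC^2\otimes W$, an effective $2\otimes 2$ system.

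It then remains to show $M_W$ is separable. After discarding the trivial case $\ket a\parallel\ket b$ (where $\dim W\le1$ and everything is separable), $M_W$ is a $2\otimes 2$ state, so by the Horodecki criterion it suffices to prove it is PPT. The technical heart of the argument, which I expect to be the main work, is that $BB^\dagger$ acts as the scalar $pq$ on $W$: writing $BB^\dagger=q\ket a\bra a+p\ket b\bra b-c\ket a\bra b-\bar c\ket b\bra a+|c|^2\id$, one checks directly that $BB^\dagger\ket a=pq\ket a$ and $BB^\dagger\ket b=pq\ket b$, while $BB^\dagger=|c|^2\id$ on $W^\perp$; equivalently, $B$ is normal with $\|B\|^2=pq$ and $B^\dagger B=BB^\dagger$. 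Hence for $|\alpha|\le1$ we have $\alpha^2 BB^\dagger|_W\le pq\,\id_W$ and $\alpha^2 B^\dagger B|_W\le pq\,\id_W$, so the Schur-complement condition yields both $M_W\ge0$ and $M_W^{T_A}\ge0$. Thus $M_W$ is PPT, hence separable, and $M=M_W\oplus M_{W^\perp}$ is separable, as claimed. The only points demanding care are the degenerate cases $p=0$ or $q=0$ (where $M$ is already a product operator) and $\dim W\le1$, together with the routine verification that the scalar form of the diagonal blocks really produces the clean $W\oplus W^\perp$ splitting.
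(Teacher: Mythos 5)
Your proof is correct, but it takes a genuinely different route from the paper's. The paper disposes of this theorem in one line: it observes that $\Phi_\alpha(\ket{\psi}\!\bra{\psi})$ is locally equivalent to $\left(\begin{smallmatrix}\id_N & A\\ A^\dagger & \id_N\end{smallmatrix}\right)$ with $AA^\dagger\leq\id_N$, and invokes Proposition~1 of Gurvits and Barnum (Ref.~\cite{Gurvits+Barnum.PRA.2002}), which asserts that such block matrices with contractive off-diagonal block are separable. Note that the verification hidden in that one line is exactly your key computation: with $B=\ket{a}\bra{b}-\ket{b}\bra{a}+c\,\id$ one needs $\|B\|^2=pq$ (equivalently $BB^\dagger\leq pq\,\id$) so that, after the local filtering $\diag(p^{-1/2},q^{-1/2})\otimes\id_N$, the block $A=\alpha B/\sqrt{pq}$ is a contraction for $|\alpha|\leq 1$; your identities $BB^\dagger\ket{a}=pq\ket{a}$, $BB^\dagger\ket{b}=pq\ket{b}$, $BB^\dagger|_{W^\perp}=|c|^2\id$ together with Cauchy--Schwarz establish precisely this. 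Where you diverge is in what replaces the citation: instead of quoting the general Gurvits--Barnum separability result, you exploit the special structure of $B$ (normality, and the reduction by $W=\mathrm{span}\{\ket a,\ket b\}$) to split the state as $M_W\oplus M_{W^\perp}$, dispose of $M_{W^\perp}$ as a manifestly separable product, and settle $M_W$ by the two-qubit Horodecki PPT criterion. This buys a fully self-contained, elementary proof and explains transparently why PPT suffices here even though the ambient system is $2\otimes N$ with $N\geq 4$; the price is that your argument is tied to the rank-$\leq 2$, normal structure of $B$ and would not prove the Gurvits--Barnum proposition for a general contraction, whereas the paper's citation-based proof generalizes immediately (as the paper notes for the map $\Psi_\alpha$ on $\CC^4\otimes\CC^N$, whose proof is "essentially the same"). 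Your attention to the degenerate cases ($p=0$, $q=0$, $\dim W\leq 1$) is sound, and in fact in the parallel case $\ket a\parallel\ket b$ one even has $BB^\dagger=pq\,\id$ globally, so nothing breaks.
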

\begin{proof}
The proof follows directly from Proposition~1 of Ref.~\cite{Gurvits+Barnum.PRA.2002},  since $\Phi_\alpha(|\psi\rangle\langle\psi|)$ is locally equivalent to
\[\begin{pmatrix}
\mathbb{I}_N & A \\ A^\dagger & \mathbb{I}_N
\end{pmatrix} \]
with $AA^\dagger \leq \mathbb{I}_N$.
\end{proof}

We now consider a slightly generalized map $\Psi_\alpha\colon\mathcal{B}\left(\CC^4\otimes\CC^N\right) \to \mathcal{B}\left(\CC^4\otimes\CC^N\right)$ defined as follows: any Hermitian $X \in \mathcal{B}\left(\CC^4\otimes\CC^N\right)$ may be written as
\[X = \begin{pmatrix}
X_{11} & X_{12} \\ X^\dagger_{12} & X_{22} 
\end{pmatrix},\]
where $X_{ij} \in \mathcal{B}\left(\CC^2\otimes\CC^N\right)$. Let
\begin{equation}\label{}
  \Psi_\alpha(X) := \left( \begin{array}{cc} \mathbb{I}_{2N} {\rm Tr}X_{11} & \alpha C  \\ \alpha C^\dagger & \mathbb{I}_{2N} {\rm Tr} X_{22} \end{array} \right)
\end{equation}
with $C= X_{12} + U \overline{X_{12}} U^\dagger$, $U$ being an arbitrary antisymmetric $2N \times 2N$ unitary matrix. It is known \cite{Chruscinski+Pytel.PRA.2010} that $\Psi_{\alpha=1}$ defines an optimal positive map. It is also clear that $\Psi_\alpha$ is positive for all $|\alpha| \leq 1$ (it is optimal for $|\alpha|=1$). 

\begin{thm} $\Psi_\alpha(|\psi\rangle\langle\psi|)$ is separable if $|\alpha| \leq 1$.
\end{thm}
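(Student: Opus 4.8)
The plan is to mirror the argument for Theorem~\ref{Thm:Sep.of.Phi.alpha}, reducing separability of $\Psi_\alpha(|\psi\rangle\langle\psi|)$ to the canonical form covered by Proposition~1 of Ref.~\cite{Gurvits+Barnum.PRA.2002}. First I would observe that the operative structure of $\Psi_\alpha$ is identical to that of $\Phi_\alpha$: both send a $2\times 2$ block matrix $X$ to a matrix whose diagonal blocks are scalars (traces of $X_{11}$, $X_{22}$) times identity, and whose off-diagonal block is $\alpha$ times a symmetrized combination of $X_{12}$. The only genuine change is that the inner blocks $X_{ij}$ now live in $\mathcal{B}(\CC^2\otimes\CC^N)$ rather than $\mathcal{B}(\CC^N)$, and the symmetrizing operation $X_{12}\mapsto X_{12}+U\overline{X_{12}}U^\dagger$ uses an antisymmetric unitary $U$ in place of the reduction map $R_N$.

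The key computational step is to write $|\psi\rangle = |0\rangle\otimes|a\rangle + |1\rangle\otimes|b\rangle$ with $|a\rangle,|b\rangle\in\CC^2\otimes\CC^N$ (splitting off the first $\CC^2$ factor that indexes the outer $2\times 2$ block), so that $X_{11}=|a\rangle\langle a|$, $X_{22}=|b\rangle\langle b|$, and $X_{12}=|a\rangle\langle b|$. Then $\mathrm{Tr}X_{11}=\langle a|a\rangle$, $\mathrm{Tr}X_{22}=\langle b|b\rangle$, and after dividing by the (nonzero) normalization and rescaling the two halves by local operations, one is left with a matrix of the form $\left(\begin{smallmatrix}\mathbb{I}_{2N} & C'\\ C'^\dagger & \mathbb{I}_{2N}\end{smallmatrix}\right)$, where $C'$ is the normalized off-diagonal block. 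The point is that $C$ is manifestly of the form $M+U\overline{M}U^\dagger$ for $M=|a\rangle\langle b|$ a rank-one operator; I would establish $\|C'\|\le 1$, i.e.\ $C'C'^\dagger\le\mathbb{I}_{2N}$, exactly as in the $\Phi_\alpha$ case, which is precisely the hypothesis under which Proposition~1 of Ref.~\cite{Gurvits+Barnum.PRA.2002} guarantees separability. Local unitary equivalence does not affect separability, so this finishes the argument.

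The main obstacle I expect is verifying the operator-norm bound $C'C'^\dagger\le\mathbb{I}_{2N}$ for the symmetrized off-diagonal block in the antisymmetric-unitary case. For $\Phi_\alpha$ this bound rested on properties of the reduction map $R_N$; here the symmetrization is $X_{12}\mapsto X_{12}+U\overline{X_{12}}U^\dagger$ with $U$ antisymmetric unitary, so I would need to check that this Breuer--Hall--type twist still contracts the relevant norm. Concretely, with $X_{12}=|a\rangle\langle b|$ one has $\overline{X_{12}}=|a^*\rangle\langle b^*|$ and $U\overline{X_{12}}U^\dagger = (U|a^*\rangle)(\langle b^*|U^\dagger)$; the antisymmetry $U^T=-U$ guarantees $\langle a|U|a^*\rangle=0$ and $\langle b|U|b^*\rangle=0$ (the Breuer--Hall orthogonality $U|a^*\rangle\perp|a\rangle$), so the two rank-one contributions are ``orthogonally dressed'' and their sum has norm controlled by the Cauchy--Schwarz estimate $\||a\rangle\|\,\||b\rangle\|\le\tfrac12(\langle a|a\rangle+\langle b|b\rangle)$, matching the normalization of the diagonal blocks. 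Once this orthogonality is exploited the bound $\|C'\|\le 1$ follows exactly as before, and the factor $|\alpha|\le 1$ only improves the inequality. Thus the proof reduces to this single norm check, after which the citation to Ref.~\cite{Gurvits+Barnum.PRA.2002} closes it.
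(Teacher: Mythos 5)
Your proposal is correct and follows exactly the paper's route: the paper proves this theorem by declaring it ``essentially the same'' as the $\Phi_\alpha$ case, i.e., local equivalence of $\Psi_\alpha(|\psi\rangle\langle\psi|)$ to $\left(\begin{smallmatrix}\mathbb{I}_{2N} & A\\ A^\dagger & \mathbb{I}_{2N}\end{smallmatrix}\right)$ with $AA^\dagger\le\mathbb{I}_{2N}$, followed by Proposition~1 of Gurvits--Barnum. Your explicit verification of the contraction bound via the antisymmetry identities $\langle a|U|a^*\rangle=\langle b|U|b^*\rangle=0$, which give $CC^\dagger=\|b\|^2\left(|a\rangle\langle a|+U|a^*\rangle\langle a^*|U^\dagger\right)$ and hence $\|C\|=\|a\|\,\|b\|$, correctly supplies the norm check the paper leaves implicit.
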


The proof is essentially the same as that of Theorem~\ref{Thm:Sep.of.Phi.alpha}.

\section{\label{Sec:MxN}Main Results for $\CC^M\otimes\CC^N$}

\subsection{\label{Subsection:M.N.Reduction-Breuer}Reduction- and Breuer-Hall-like maps}
There is a  general and simple generalization of the Theorem
\ref{thm:general}, summarized here as the Corollary.

\begin{cor}

\label{corr:lewenstein-witness} Let ${\bf p} =\{\alpha, \beta,
\gamma, \delta\}$ and $\Lambda_{\bf p}(\rho)= {\rm Tr}(\rho)\id +
\alpha \rho + \beta \tilde\rho_A + \gamma \tilde\rho_B  + \delta
\tilde\rho$ be a family of maps with the parameters fulfilling the
conditions such that $\rho\ge 0$ $\Rightarrow$ $\Lambda_{\bf p}(\rho) \in \Sigma$.
Then, if $w=P+Q^{T_A}$ is a decomposable pre-witness with
$P, Q\ge 0$, $\Lambda_{\bf p}(w)=:\sigma \in \Sigma$,
i.e. $\sigma$ is also separable.
\end{cor}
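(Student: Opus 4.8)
The plan is to exploit two facts: the linearity of $\Lambda_{\bf p}$, and the invariance of separability under partial transposition. By linearity,
$\Lambda_{\bf p}(w)=\Lambda_{\bf p}(P)+\Lambda_{\bf p}(Q^{T_A})$.
Since $P\ge 0$, the hypothesis of the corollary (namely $\rho\ge 0\Rightarrow\Lambda_{\bf p}(\rho)\in\Sigma$) gives $\Lambda_{\bf p}(P)\in\Sigma$ at once. The entire problem thus reduces to showing that $\Lambda_{\bf p}(Q^{T_A})$ is separable whenever $Q\ge 0$, even though $Q^{T_A}$ itself need not be positive.

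The key structural claim I would establish is that $\Lambda_{\bf p}$ intertwines with the partial transposition on Alice, i.e.\ $\Lambda_{\bf p}(Q^{T_A})=\left[\Lambda_{\bf p}(Q)\right]^{T_A}$, equivalently $\left[\Lambda_{\bf p}(Q^{T_A})\right]^{T_A}=\Lambda_{\bf p}(Q)$. To prove this I would transpose each of the five terms of $\Lambda_{\bf p}(Q^{T_A})$ separately. The trace term is unchanged because $\tr(Q^{T_A})=\tr(Q)$ and $\id^{T_A}=\id$, and the $\alpha$ term gives $(\alpha Q^{T_A})^{T_A}=\alpha Q$. For the tilde terms I would use that $(Q^{T_A})^{T_A}=Q$, that the partial transposes on the two subsystems commute (so $(Q^{T_A})^{T_B}=Q^T$ and $(Q^{T_A})^{T}=Q^{T_B}$), together with two commutation rules: $V$ acts only on Bob's space and hence commutes with $T_A$, while $\sigma_2$ acts only on Alice's space and obeys $\sigma_2^T=-\sigma_2$. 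The crucial computational identity is $(\sigma_2 X\sigma_2)^{T_A}=\sigma_2 X^{T_A}\sigma_2$, where the two sign flips coming from $\sigma_2^T=-\sigma_2$ cancel. Feeding these in, I would obtain $\left[\widetilde{(Q^{T_A})}_A\right]^{T_A}=\tilde Q_A$, $\left[\widetilde{(Q^{T_A})}_B\right]^{T_A}=\tilde Q_B$, and $\left[\widetilde{(Q^{T_A})}\right]^{T_A}=\tilde Q$, so that the transposed expression reassembles exactly into $\Lambda_{\bf p}(Q)$.

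With the intertwining in hand the conclusion is immediate. Since $Q\ge 0$, the hypothesis gives $\Lambda_{\bf p}(Q)\in\Sigma$, and the partial transpose of a separable operator is again separable: if $\Lambda_{\bf p}(Q)=\sum_k p_k |e_k\rangle\langle e_k|\otimes|f_k\rangle\langle f_k|$, then $\left[\Lambda_{\bf p}(Q)\right]^{T_A}=\sum_k p_k |e_k^*\rangle\langle e_k^*|\otimes|f_k\rangle\langle f_k|\in\Sigma$. Hence $\Lambda_{\bf p}(Q^{T_A})=\left[\Lambda_{\bf p}(Q)\right]^{T_A}\in\Sigma$. Finally, because the separable cone $\Sigma$ is closed under addition, $\Lambda_{\bf p}(w)=\Lambda_{\bf p}(P)+\Lambda_{\bf p}(Q^{T_A})$ is separable, which is the assertion.

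The main obstacle is the intertwining step of the second paragraph. Establishing $\left[\Lambda_{\bf p}(Q^{T_A})\right]^{T_A}=\Lambda_{\bf p}(Q)$ demands care in tracking how each tilde operation transforms under $T_A$: one must correctly use the commutativity of the two partial transposes, keep straight which factor touches which subsystem, and handle the antisymmetry of $\sigma_2$. A single sign error or a misplaced subsystem would spoil the clean cancellation and yield a map with permuted or mismatched parameters instead of $\Lambda_{\bf p}(Q)$ itself. Everything else is routine once this identity is verified.
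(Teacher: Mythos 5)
Your proof is correct and takes essentially the same route as the paper: the paper's entire argument is the one-line observation that $\Lambda_{\bf p}(w^{T_A})=\Lambda_{\bf p}(w)^{T_A}$, which is exactly the intertwining identity you verify term by term (including the sign cancellation from $\sigma_2^T=-\sigma_2$, which is the only delicate point and which you handle correctly). The remaining steps you spell out---applying the positivity hypothesis to $P$ and $Q$, invariance of $\Sigma$ under $T_A$, and closure of $\Sigma$ under sums---are precisely what the paper leaves implicit in ``follows easily.''
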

\begin{proof}
The proof follows easily from the fact that $\Lambda_{\bf p}(w^{T_A})=\Lambda_{\bf
p}(w)^{T_A}$. We present this corollary in this section since it holds in 
$\CC^M\otimes\CC^N$ equally well as in $\CC^2\otimes\CC^N$, provided appropriate 
generalizations of Breuer-Hall unitary operators are used.
\end{proof}

Let us now present our own proof of the Theorem~\ref{thm:barnum}, i.e., separability of
$\Lambda_\alpha(\rho)= {\rm Tr}(\rho)\id + \alpha \rho$ for arbitrary $M\otimes N$ $\rho$.

\begin{proof}
We discuss in detail the cases $M=3,4$ -- the generalizations to
arbitrary $M\le N$ is the straightforward. As before, it  is
enough to prove for pure states, $\rho=|\Psi\rangle\langle \Psi|$.
For $M=3$, $|\Psi\rangle$ has maximally Schmidt rank 3, and,
without loosing generality, we can assume that $|\Psi\rangle
=\lambda_0 |0\rangle\otimes |0\rangle + \lambda_1|1\rangle\otimes
|1\rangle + \lambda_2|2\rangle\otimes |2\rangle$. It is enough to
check then positivity and separability of
$\Lambda(|\Psi\rangle\langle \Psi|)$ on a $3\otimes 3$ space
spanned by $|0\rangle$, $|1\rangle$  and $|2\rangle$ in both
Alice's and Bob's spaces. This is achieved by observing first that
if $|\Psi\rangle$ is a normalized  state, than obviously positivity of
$\Lambda_\alpha(|\Psi\rangle\langle \Psi|)$ requires $\alpha\ge
-1$. Direct inspection shows that for $\alpha=-1$, $\Lambda_\alpha(|\Psi\rangle\langle \Psi|)$  is a sum of three 
positive matrices that effectively act on two-qubit Hilbert spaces and are given by 
%
%
\begin{equation}
\sigma_{ij}=\ket{\varphi_{ij}}\!\bra{\varphi_{ij}}+\ket{ij}\!\bra{ij}+\ket{ji}\!\bra{ji},
\end{equation}
where $\ket{\varphi_{ij}}=\lambda_j\ket{i}-\lambda_i\ket{j}$ with $i<j=0,1,2$.
%
%
These matrices are PPT and thus separable. 

For $\alpha\ge 0$, we decompose $\Lambda_\alpha(|\Psi\rangle\langle \Psi|)
=\sigma^{T_A} + D$, where $\sigma^{T_A}$ is separable by construction,
and $D$ is positive diagonal in the computational product basis,
{\it ergo} separable by direct inspection. We consider a family of
product vectors
$|p(\phi,\psi)\rangle=A(1,ae^{i\phi},be^{i\psi})^{\otimes 2}$ and
the separable state
\[\sigma= \int d\phi/2\pi \int d\psi/2\pi |p(\phi,\psi)\rangle \langle
p(\phi,\psi)|.\] The parameters can be chosen indeed in such a way
that $\Lambda_\alpha(|\Psi\rangle\langle \Psi|) -\sigma^{T_A}$ is
diagonal. To this aim we choose $A^2=\alpha\lambda_0^2$,
$a^2=\lambda_1/\lambda_0$, $b^2=\lambda_2/\lambda_0$. Checking the
explicit conditions that $d$ is positive implies that
$1-A^2a^2=1-\alpha\lambda_0\lambda_1\ge 0$, i.e. $\alpha\le 2$
since $\lambda_0$ and $\lambda_1$ are the highest and the second
highest Schmidt coefficients, respectively. 
The case $-1<\alpha<0$ follows from convexity.

The proof for the case $M=4$ is analogous. We have now four
Schmidt coefficients, $\lambda_k$, $k=0, \ldots, 3$, and take
$|p(\phi,\psi)\rangle=A(1,ae^{i\phi},be^{i\psi},
ce^{i\theta})^{\otimes 2}$, set $A^2=\alpha\lambda_0^2$,
$a^2=\lambda_1/\lambda_0$, $b^2=\lambda_2/\lambda_0$, and
$c^2=\lambda_3/\lambda_0$. Generalization for $M>4$ follows the same pattern.
\end{proof}

The above theorem can be generalized to the sufficient conditions
for states with Schmidt number $n\le M$,  $\sigma_n\in \Sigma_n$.
Similar  results were obtained earlier  by L. Clarisse
\cite{Clarisse.JPA.2006}, but our proofs are different, so we present them
here.
\begin{thm}
\label{thm:clarisse} (see also \cite{Clarisse.JPA.2006}) Let $\Lambda_\alpha(\rho)=
{\rm Tr}(\rho)\id + \alpha \rho $ be the family of maps, and
$-1\le\alpha \le n+1$. Then $\rho\ge 0$ $\Rightarrow$
$\Lambda_\alpha(\rho)=\sigma_n\in \Sigma_n$, i.e. $\sigma_n$ has the
Schmidt number $\le n$.
\end{thm}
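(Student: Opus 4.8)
The plan is to follow the pattern of the re-proof of Theorem~\ref{thm:barnum} just given, now tracking Schmidt rank instead of separability. First I would use that $\Lambda_\alpha$ is linear and that $\Sigma_n$ is convex and closed to reduce to pure states $\rho=|\Psi\rangle\langle\Psi|$, and then put $|\Psi\rangle$ in Schmidt form $|\Psi\rangle=\sum_{k=0}^{M-1}\lambda_k|kk\rangle$ with $\lambda_0\ge\lambda_1\ge\dots\ge0$. In this basis $\Lambda_\alpha(|\Psi\rangle\langle\Psi|)=\id+\alpha|\Psi\rangle\langle\Psi|$ splits as a positive diagonal on the complement of the ``coherent'' subspace $\mathrm{span}\{|kk\rangle\}$ (manifestly separable, hence in $\Sigma_n$) plus the operator $\sum_k|kk\rangle\langle kk|+\alpha\sum_{k,l}\lambda_k\lambda_l|kk\rangle\langle ll|$ carrying all the entanglement, so that only the latter must be analyzed.

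A clean simplification is that it suffices to treat the single endpoint $\alpha=n+1$. Indeed $\Lambda_\alpha(\rho)$ is affine in $\alpha$, so $\Lambda_\alpha=\tfrac{(n+1)-\alpha}{n+2}\,\Lambda_{-1}+\tfrac{\alpha+1}{n+2}\,\Lambda_{n+1}$; since $\Lambda_{-1}(\rho)=\tr(\rho)\id-\rho$ is separable by Theorem~\ref{thm:barnum} (hence in $\Sigma_n$) and $\Sigma_n$ is convex, establishing $\Lambda_{n+1}(\rho)\in\Sigma_n$ would give $\Lambda_\alpha(\rho)\in\Sigma_n$ for every $-1\le\alpha\le n+1$. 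Thus the whole content is the claim that $\id+(n+1)|\Psi\rangle\langle\Psi|$ has Schmidt number $\le n$.

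To prove that claim I would generalize the $\sigma^{T_A}+D$ construction: keep the coherences $\lambda_k\lambda_l|kk\rangle\langle ll|$ internal to blocks of $n+1$ Schmidt directions, where the nearly flat operator $\id+(n+1)(\cdot)$ restricted to an $(n+1)$-dimensional coherent block is still below full Schmidt rank (i.e.\ Schmidt number $\le n$), while the remaining cross-block coherences are supplied by a separable operator through the partial transpose $\sigma^{T_A}$, whose separability---unlike higher Schmidt number---is preserved by $T_A$; the residue is then a nonnegative diagonal, hence separable, and the extremal spectrum $\lambda_0^2=\dots=\lambda_n^2=1/(n+1)$ is what pins the admissible bound at $\alpha=n+1$. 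The hard part, and the step I expect to be the main obstacle, is the $(n+1)$-dimensional base case: showing that $\id_{(n+1)^2}+(n+1)|\phi\rangle\langle\phi|$ on $\CC^{n+1}\otimes\CC^{n+1}$ always has Schmidt number $\le n$ (a Schmidt-number ``ball around the identity'' statement, approachable by comparison with the isotropic state through the Terhal--Horodecki criterion after symmetrization, or by an explicit decomposition into Schmidt-rank-$\le n$ projectors), together with the global bookkeeping that matches every coherence while keeping the diagonal remainder nonnegative uniformly in $M$, $N$ and in the Schmidt spectrum.
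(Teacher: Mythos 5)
Your reductions are fine and match the paper's: restriction to pure states in Schmidt form, and the affine interpolation $\Lambda_\alpha=\frac{(n+1)-\alpha}{n+2}\Lambda_{-1}+\frac{\alpha+1}{n+2}\Lambda_{n+1}$, together with separability of $\Lambda_{-1}(\rho)$ (Theorem~\ref{thm:barnum}) and convexity of $\Sigma_n$, legitimately reduces everything to the endpoint (the paper itself disposes of $-1<\alpha<0$ by convexity). The genuine gap is that your whole argument then rests on the $(n+1)$-dimensional base case---that $\id+(n+1)|\phi\rangle\langle\phi|$ on $\CC^{n+1}\otimes\CC^{n+1}$ has Schmidt number at most $n$ for \emph{every} $|\phi\rangle$---which you explicitly leave unproven, and neither of your suggested routes closes it. The twirling/Terhal--Horodecki route fails for a structural reason: $U\otimes U^*$ symmetrization writes the isotropic state as a mixture of local-unitary images of the original, so it only yields that the Schmidt number of the \emph{twirled} state is bounded by that of the original; certifying Schmidt number $\le n$ for the isotropic state via $F\le n/d$ \cite{Terhal+Horodecki.PRAR.2000} therefore says nothing about a non-maximally-entangled $|\phi\rangle$---the inequality points the wrong way. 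And an ``explicit decomposition into Schmidt-rank-$\le n$ projectors'' of exactly this operator is the content of Clarisse's result (Theorem~\ref{Schmidt} with $d=n+1$), i.e., the very statement at issue, so invoking it is circular short of simply citing \cite{Clarisse.JPA.2006}.

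The paper's proof shows why no such base case is needed: it confines the surviving coherences to $n$, not $n+1$, Schmidt directions. With the phase-averaged separable state built from $|p(\phi,\psi)\rangle=A(1,ae^{i\phi},be^{i\psi},\dots)^{\otimes 2}$ and the asymmetric choice $A^2=\epsilon\alpha\lambda_0^2$, $a^2=\lambda_1/\lambda_0$, $b^2=\lambda_2/(\lambda_0\epsilon),\dots$ (the factor $1/\epsilon$ attached to the directions from $n$ onward), the subtracted $\sigma^{T_A}$ cancels \emph{exactly} every coherence $\alpha\lambda_k\lambda_l$ involving a Schmidt direction $\ge n$, and only partially those among the first $n$ directions. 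The remainder is a nonnegative diagonal plus a positive block supported on $\CC^n\otimes\CC^n$ (the span of the first $n$ Schmidt directions on each side), and \emph{any} positive operator supported there has Schmidt number $\le n$ trivially---no ``Schmidt-number ball'' statement is ever invoked. The bound $\alpha\le n+1$ then emerges purely from positivity of the diagonal remainder, $1-A^2b^2=1-\alpha\lambda_0\lambda_n\ge 0$, pinned at the flat spectrum $\lambda_0=\dots=\lambda_n=1/\sqrt{n+1}$. If you shrink your coherent blocks from $n+1$ to $n$ directions and let positivity alone carry the constraint, your sketch essentially becomes the paper's argument; as written, the central claim is missing and one of the two tools proposed for it cannot work.
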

\begin{proof}
We discuss in detail the case $M=3$ and $n=2$ -- the
generalization to arbitrary $M\le N$ is the straightforward.  As
before it is enough to prove for pure states,
$\rho=|\Psi\rangle\langle \Psi|$. For $M=3$, $|\Psi\rangle$ has
maximally Schmidt rank 3, and, without loosing generality, we can
assume that $|\Psi\rangle =\lambda_0 |0\rangle\otimes |0\rangle +
\lambda_1|1\rangle\otimes |1\rangle + \lambda_2|2\rangle\otimes
|2\rangle$. It is enough to check then positivity and Schmidt number
of $\Lambda(|\Psi\rangle\langle \Psi|)$ on a $3\otimes 3$ space
spanned by $|0\rangle$, $|1\rangle$  and $|2\rangle$ in both
Alice's and Bob's spaces. This is achieved by observing first that
if $|\Psi\rangle$ is a normalized state, than obviously positivity of
$\Lambda_\alpha(|\Psi\rangle\langle \Psi|)$ requires $\alpha\ge
-1$. Moreover, we know that for $\alpha=-1$, $\Lambda_\alpha(|\Psi\rangle\langle \Psi|)$ is separable, {\it ergo} it obviously has the Schmidt number $1\le n=2$. For $\alpha\ge 0$, we decompose directly $\Lambda_\alpha(|\Psi\rangle\langle \Psi|)
=\sigma^{T_A}+ D_2^{T_A}$, where $\sigma$ is separable by construction, and $D_2$ has a form  that explicitly implies that
its Schmidt number is smaller equal 2.
As before, we consider a
family of product vectors
$|p(\phi,\psi)\rangle=A(1,ae^{i\phi},be^{i\psi})^{\otimes 2}$ and
the separable state
\[\sigma= \int d\phi/2\pi \int d\psi/2\pi |p(\phi,\psi)\rangle \langle
p(\phi,\psi)|.\] The parameters can be chosen indeed in such a way
that $\Lambda_\alpha(|\Psi\rangle\langle \Psi|) -\sigma^T_A$ has
only two off--diagonal elements, corresponding to the subspace
spanned by $|0\rangle$ and $|1\rangle$. To this aim we choose
$A^2=\epsilon\alpha\lambda_0^2$, $a^2=\lambda_1/\lambda_0$,
$b^2=\lambda_2/\lambda_0\epsilon$, with some $\epsilon \ge 0$.
Checking the explicit conditions that $d$ is positive implies,
among others,  that $1-A^2b^2=1-\alpha\lambda_0\lambda_2\ge 0$,
i.e. $\alpha\le 3$, since $\lambda_0$ and $\lambda_2$ are
the highest and the third highest Schmidt coefficients,
respectively. Other conditions, such as  $1-A^2a^2=1-\epsilon\alpha\lambda_0\lambda_1\ge 0$, or  $1+\alpha\lambda_2^2-A^2b^4=1+\alpha\lambda_2^2(1-1/\epsilon)\ge 0$, can be fulfilled with
$\epsilon=2/3$, for instance. 

Generalization for $M\ge 4$
and $n=2$ is  straightforward: we set $A^2=\epsilon\alpha\lambda_0^2$, $a^2=\lambda_1/\lambda_0$,
$b^2=\lambda_2/\lambda_0\epsilon$, $c^2=\lambda_3/\lambda_0\epsilon$, ...  with some $\epsilon =2/3$. For  $M\ge 4$
and $n=3$ we set  $A^2=\epsilon\alpha\lambda_0^2$, $a^2=\lambda_1/\lambda_0$,
$b^2=\lambda_2/\lambda_0$, $c^2=\lambda_3/\lambda_0\epsilon$, etc.  with  $\epsilon =1/2$. 
\end{proof}

Much stronger result than that  of Theorem \ref{thm:clarisse} was proven in Ref.~\cite{Clarisse.JPA.2006} for $M=N=d$.

\begin{thm}\label{Schmidt}
Let $\Lambda_\alpha(\rho)=
{\rm Tr}(\rho)\id + \alpha \rho $ be a family of maps with
$-1\le\alpha \le 2(dn-1)/(d-n)$. Then, if $\rho\ge 0$ $\Rightarrow$
$\sigma = \Lambda_\alpha(\rho)\in \Sigma_n$, i.e. $\sigma$ has the
Schmidt number $\le n$.
\end{thm}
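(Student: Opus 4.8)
The plan is to follow the explicit-decomposition strategy already used for Theorems~\ref{thm:barnum} and \ref{thm:clarisse}, and to sharpen the bookkeeping so as to extract the larger constant. By linearity of $\Lambda_\alpha$ and convexity of the class $\Sigma_n$ it suffices to treat a pure input $\rho=|\Psi\rangle\langle\Psi|$; by local unitaries, which preserve Schmidt number, I may take $|\Psi\rangle=\sum_{i=0}^{d-1}\lambda_i|ii\rangle$ with $\lambda_0\ge\cdots\ge\lambda_{d-1}\ge0$. The endpoint $\alpha=-1$ is settled exactly as in the reproof of Theorem~\ref{thm:barnum}: $\Lambda_{-1}(|\Psi\rangle\langle\Psi|)$ is a sum of the $2\times2$-supported, PPT hence separable blocks $\sigma_{ij}$, so its Schmidt number is $1\le n$; the range $-1<\alpha<0$ then follows by convexity. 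Thus the substantive work is confined to $0\le\alpha\le 2(dn-1)/(d-n)$.

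For $\alpha\ge0$ I would seek a decomposition $\Lambda_\alpha(|\Psi\rangle\langle\Psi|)=\sigma^{T_A}+D+\Delta$, in which $\sigma$ is an explicitly separable state produced by integrating a family of product vectors $|p(\vec\phi)\rangle=A\,(1,a_1e^{i\phi_1},\dots,a_{d-1}e^{i\phi_{d-1}})^{\otimes2}$ over the phases $\vec\phi$ (the phase average annihilates every coherence except the matched ones, precisely as in the $\CC^M\otimes\CC^N$ proof of Theorem~\ref{thm:barnum}), $D$ retains only the coherences of a single $n$-dimensional core and so has Schmidt number $\le n$ by inspection, and $\Delta\ge0$ is diagonal in the product basis, hence separable. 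Writing $A^2=\epsilon\alpha\lambda_0^2$ and $a_i^2=\lambda_i/(\lambda_0\epsilon_i)$ with scale factors $\epsilon_i$ adapted to whether direction $i$ belongs to the core or to a paired block, positivity of $\Delta$ reduces to a finite list of inequalities: pairwise constraints $1-\alpha\lambda_i\lambda_j\ge0$ for the paired directions, together with an isotropic-core constraint for the coherent block.

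The core constraint is exactly where the known characterization of the Schmidt number of isotropic states enters: since the maximal overlap of a Schmidt-rank-$\le n$ vector with a maximally entangled vector is $n/d$, the coherent $n$-dimensional block is admissible up to the corresponding fidelity threshold. Carrying only the isotropic bookkeeping gives roughly $\alpha\le(dn-1)/(d-n)$, which is a factor of two short of the claim; the missing factor is supplied by the partial-transpose structure of the paired $2\times2$ blocks, whose PPT (hence separability) condition reads $\alpha\lambda_i\lambda_j\le1$. This is the same mechanism that, for $n=1$, turns the isotropic value into the sharp separability threshold $\alpha\le2$ via the worst-case product $\lambda_0\lambda_1=1/2$. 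Balancing the isotropic-core threshold against the pairwise PPT thresholds and optimizing the $\epsilon_i$ should then yield the full range $\alpha\le2(dn-1)/(d-n)$.

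I expect the main obstacle to be this joint optimization, over the Schmidt spectrum $(\lambda_i)$ and over the allocation of directions between the coherent core and the paired blocks: one must exhibit a single parameter choice making $\Delta$ positive for every admissible $(\lambda_i)$ down to the stated endpoint. The maximally entangled input is deliberately \emph{not} extremal here---it tolerates the far larger $\alpha=d(dn-1)/(d-n)$ coming straight from the fidelity bound $n/d$---so the binding configuration is a genuinely non-uniform spectrum, and ruling out any $(\lambda_i)$ that would force $\alpha$ below $2(dn-1)/(d-n)$ is the technical heart of the argument, and the reason the elementary version in Theorem~\ref{thm:clarisse} stopped at $\alpha\le n+1$.
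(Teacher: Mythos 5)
There is a genuine gap: your text is a research program, not a proof, and it stops precisely at the step that carries the entire content of the theorem. The reductions you do carry out are fine --- restriction to pure states in Schmidt form by linearity and convexity of $\Sigma_n$, the endpoint $\alpha=-1$ via the separable $2\times 2$ blocks, and $-1<\alpha<0$ by convexity --- but these are the easy parts, shared with Theorems~\ref{thm:barnum} and~\ref{thm:clarisse}. For $\alpha\ge 0$, what must actually be established is the spectrum-dependent threshold
\begin{equation*}
\alpha\leq \frac{dn-1}{(d-n)\,\lambda_0\lambda_1},
\end{equation*}
from which the uniform bound $2(dn-1)/(d-n)$ follows at once via $\lambda_0\lambda_1\le 1/2$. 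Your ansatz $\Lambda_\alpha(\ket{\Psi}\!\bra{\Psi})=\sigma^{T_A}+D+\Delta$ is never shown to reach this: you do not exhibit the allocation of directions between the ``core'' and the paired blocks, nor the scale factors $\epsilon_i$, nor a verification that $\Delta\ge0$ for every admissible spectrum; the decisive ``balancing'' and ``joint optimization'' are explicitly deferred (``should then yield the full range''), and your own closing paragraph concedes that ruling out bad spectra is the unfinished technical heart. The claim that the $2\times2$ PPT blocks ``supply the missing factor of two'' is asserted, not derived, and note that the extremal spectra for the target bound ($\lambda_0\lambda_1\to1/2$ with a small but nonvanishing tail) are exactly those where your pairwise and core constraints interact nontrivially --- there is no evidence this construction can beat the $\alpha\le n+1$ of Theorem~\ref{thm:clarisse}, which was obtained by precisely this technique.

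The paper's own proof is entirely different and does none of this work: it simply quotes the result of Ref.~\cite{Clarisse.JPA.2006} on the Schmidt robustness of pure states --- that $\id+\alpha\ket{\psi}\!\bra{\psi}$ has Schmidt number at most $n$ whenever $\alpha\leq (dn-1)/[(d-n)\lambda_0\lambda_1]$ --- and then applies the elementary bound $\lambda_0\lambda_1\le1/2$. So either cite that threshold, as the theorem's attribution invites, or actually prove it; your sketch supplies no mechanism for doing the latter.
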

\begin{proof}Let $\ket{\psi}$ be a two-qudit state with Schmidt coefficients $\lambda_0\geq\lambda_1\geq \ldots$. 
It was shown in Ref. \cite{Clarisse.JPA.2006} that the (unnormalized)
mixture of $\ket{\psi}$ and the maximally mixed state, 
%
$\mathbbm{1}+\alpha\ket{\psi}\!\bra{\psi}$,
has Schmidt number $n$ if 
\begin{equation}
\alpha\leq \frac{dn-1}{(d-n)\lambda_0\lambda_1}.
\end{equation}
This implies that $\mathbbm{1}+\alpha\ket{\psi}\!\bra{\psi}$ has Schmidt number at most 
$n$ for any $\ket{\psi}$ if 
\begin{equation}
\alpha\leq \frac{2(dn-1)}{d-n}.
\end{equation}
%
\end{proof}
The value $2(dn-1)/(d-n)$ is larger than $n+1$ and it recovers the bound $\alpha\leq 2$ for separability. Still, the proof of Theorem \ref{thm:clarisse} is relatively simple and we expect that it  could be generalized to other situations, such as the multipartite case etc.  That is why we decided to present it here.

Let us finally notice that Theorem \ref{Schmidt} implies the following sufficient 
criterion for being an element of $\Sigma_n$, which generalizes Theorem \ref{cor:barnum1}. 
\begin{cor}If for some $\alpha\in[-1,2(dn-1)/(d-n)]$, 
$\Lambda^{-1}_\alpha(\sigma):= [\sigma -{\rm
Tr}(\sigma)\id/(2N+\alpha)]/\alpha \ge 0$, then 
$\sigma\in\Sigma_n$, i.e., the Schmidt number of $\rho$
is not larger than $n$.
\end{cor}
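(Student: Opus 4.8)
The plan is to read this off as an immediate instance of the general Theorem~\ref{thm:general}, with Theorem~\ref{Schmidt} supplying the required mapping property. I would take $S$ to be the cone of positive semidefinite operators on $\CC^d\otimes\CC^d$, set $S'=\Sigma_n$, and use the family $\Lambda_\alpha(\rho)={\rm Tr}(\rho)\id+\alpha\rho$. Theorem~\ref{Schmidt} asserts precisely that for $\alpha\in[-1,2(dn-1)/(d-n)]$ one has $\rho\ge 0\Rightarrow\Lambda_\alpha(\rho)\in\Sigma_n$, which is exactly the hypothesis ``$w\in S\Rightarrow\Lambda_\alpha(w)\in S'$'' needed to invoke Theorem~\ref{thm:general}.

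First I would pin down the explicit form of $\Lambda^{-1}_\alpha$ and confirm invertibility. Writing $\sigma=\Lambda_\alpha(\rho)={\rm Tr}(\rho)\id+\alpha\rho$ and taking the trace gives ${\rm Tr}(\sigma)=\left({\rm Tr}(\id)+\alpha\right){\rm Tr}(\rho)$, so that ${\rm Tr}(\rho)={\rm Tr}(\sigma)/({\rm Tr}(\id)+\alpha)$; here ${\rm Tr}(\id)$ is the total Hilbert-space dimension, which reproduces the normalization in the denominator of the stated inverse. Substituting back and solving for $\rho$ yields
\[
\rho=\frac{1}{\alpha}\left[\sigma-\frac{{\rm Tr}(\sigma)}{{\rm Tr}(\id)+\alpha}\,\id\right]=\Lambda^{-1}_\alpha(\sigma),
\]
which is well defined precisely when $\alpha\ne 0$ and ${\rm Tr}(\id)+\alpha\ne 0$. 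These fail only at isolated points, so $\Lambda_\alpha$ is invertible for almost all $\alpha$ in the admissible interval, as Theorem~\ref{thm:general} requires.

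With these ingredients the conclusion is immediate: if $\Lambda^{-1}_\alpha(\sigma)\ge 0$, then $\rho:=\Lambda^{-1}_\alpha(\sigma)\in S$, and Theorem~\ref{thm:general} gives $\sigma=\Lambda_\alpha(\rho)\in S'=\Sigma_n$, i.e.\ $\sigma$ has Schmidt number at most $n$. I would close by checking the announced generalization: at $n=1$ the bound $2(dn-1)/(d-n)$ collapses to $2(d-1)/(d-1)=2$ and $\Sigma_1=\Sigma$, so the criterion reduces to the separability statement of Theorem~\ref{cor:barnum1}.

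There is no real obstacle here, since all the analytic content is carried by Theorem~\ref{Schmidt}, which is assumed. The only points needing care are the bookkeeping of the invertibility range---in particular the degenerate value $\alpha=0$, where $\Lambda_\alpha$ reduces to the completely depolarizing map and the criterion becomes vacuous---and the correct identification of the denominator ${\rm Tr}(\id)+\alpha$ with the dimensional normalization factor written in the statement.
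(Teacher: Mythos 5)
Your proposal is correct and follows precisely the route the paper intends: the paper presents this corollary as an immediate consequence of Theorem~\ref{Schmidt} combined with the general Theorem~\ref{thm:general}, with the explicit inverse $\Lambda^{-1}_\alpha(\sigma)=\frac{1}{\alpha}\left[\sigma-\frac{{\rm Tr}(\sigma)}{{\rm Tr}(\id)+\alpha}\,\id\right]$ obtained exactly as you derive it. Your identification of the denominator as ${\rm Tr}(\id)+\alpha$ is in fact the careful reading: since Theorem~\ref{Schmidt} concerns $d\otimes d$ systems, the normalization should be $d^2+\alpha$, and the ``$2N+\alpha$'' in the corollary's statement is carried over verbatim from the $2\otimes N$ setting of Theorem~\ref{cor:barnum1}, so your general form quietly corrects a notational slip in the paper.
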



\subsection{\label{SubSec:Ando.arbitary.M.N}Ando-like maps for $M,N\ge3$}

Let us now consider the case $k=-1$ from the Section~\ref{Subsection:Andolike.2.N}, which can be easily generalized to the  
$\CC^M\otimes\CC^N$ case. Namely, for $\rho\in\mathcal{B}\left(\CC^M\otimes\CC^N\right)$, we define 
\[\epsilon(X):=\sum_{i=0,\,j=0}^{M-1,\,N-1}|i,j\rangle\langle i,j|X|i,j\rangle\langle i,j|,\]
 and 
\begin{equation}
\Lambda^{M\times N}(\rho):=(M-1)N\,\epsilon(\rho)+\sum_{m=0}^{N-1}\epsilon\left(S_B^m\rho S_B^{m\dagger}\right)
+\alpha\rho,
\end{equation}

where $S_B$ is the unitary shift operator in the Bob space.  

In this subsection we consider first the case of qutrits ($M=N=3$) and prove the following result.
\begin{thm}
The map
\begin{equation}\label{Sep.Criterion.3x3.Ando-type}
  \Lambda^{3\times 3}(\rho) = 6 \epsilon(\rho) + \sum_{m=0}^{2} \epsilon(S_B^m \rho S_B^{m\dagger}) + \alpha \rho
\end{equation}
is separable for $|\alpha|\leq 1/2$.
\end{thm}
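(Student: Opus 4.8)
The plan is to follow the template of the proof of Theorem~\ref{thm:Darek.Andos.map1}, now with both subsystems three-dimensional. First I would reduce to pure inputs $\rho=|\psi\rangle\langle\psi|$ and write $|\psi\rangle=\sum_{i=0}^{2}|i\rangle\otimes|e_i\rangle$ with $|e_i\rangle=\sum_{j=0}^{2}\lambda_{ij}|j\rangle$ and $N_i=\langle e_i|e_i\rangle$, $\sum_i N_i=1$. The two non-$\alpha$ terms are harmless: $6\,\epsilon(\rho)$ is six times the product-basis diagonal of $\rho$, and a short computation gives $\sum_{m=0}^{2}\epsilon(S_B^m\rho S_B^{m\dagger})=\diag(\rho_A)\otimes\id_B=\sum_i N_i\,|i\rangle\langle i|\otimes\id_B$, both diagonal in the product basis and hence manifestly separable. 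Thus all off-diagonal structure comes from $\alpha\rho$, and
\[
\Lambda^{3\times3}(\rho)=\tilde D+\alpha|\psi\rangle\langle\psi|,\qquad \tilde D=\sum_{i,j}\bigl(6|\lambda_{ij}|^2+N_i\bigr)|i,j\rangle\langle i,j|.
\]

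It is convenient to classify the entries of $\alpha|\psi\rangle\langle\psi|$ according to whether the Alice and Bob indices coincide: same-Alice/cross-Bob (SC), cross-Alice/same-Bob (CS), and cross-Alice/cross-Bob (CC). For $\alpha\ge0$ I would use the exact identity
\[
\Lambda^{3\times3}(\rho)=\hat D+\alpha\sum_i|i\rangle\langle i|\otimes|e_i\rangle\langle e_i|+\alpha\sum_j|g_j\rangle\langle g_j|\otimes|j\rangle\langle j|+(\mathrm{CC}),
\]
where $|g_j\rangle=\sum_i\lambda_{ij}|i\rangle$ and $\hat D_{ij}=(6-\alpha)|\lambda_{ij}|^2+N_i$; the two middle product sums are separable for $\alpha\ge0$. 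The CC part I would cancel by subtracting, for each pair of Alice indices and each pair of Bob indices, an explicitly separable state supported on a $2\otimes2$ block of exactly the form \eqref{sigmaij}. Summing the diagonal cost of these blocks, each $|i,j\rangle$ is charged $2|\alpha|(|\lambda_{ij}|^2+N_i)$, so the leftover diagonal $\hat D-(\text{cost})$ equals $(6-3\alpha)|\lambda_{ij}|^2+(1-2\alpha)N_i$; demanding this be nonnegative for every Schmidt distribution yields precisely $\alpha\le1/2$, at which point the remainder is a sum of the two separable product terms and a nonnegative diagonal, hence separable.

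For $\alpha<0$ the two product sums are no longer separable, so instead I would keep the same-Alice sum inside an Alice-block-diagonal target and cancel only the cross-Alice entries (CS and CC). As in Theorem~\ref{thm:Darek.Andos.map1}, the CC entries are removed by $2\otimes2$ states of the form \eqref{sigmaij}, and the CS entries $\alpha\lambda_{ij}\lambda_{i'j}^*$ by separable states $|\alpha|\,|v^{(i,i')}_j\rangle\langle v^{(i,i')}_j|\otimes|j\rangle\langle j|$ supported on $\{|i,j\rangle,|i',j\rangle\}$ with $|v^{(i,i')}_j\rangle=\lambda_{ij}|i\rangle-\lambda_{i'j}|i'\rangle$, the sign flip producing exactly the phase required for $\alpha<0$. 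After these subtractions $\Lambda^{3\times3}(\rho)$ is reduced to the Alice-block-diagonal form $\sum_i|i\rangle\langle i|\otimes B_i$ with $B_i=\diag\{c_{ij}\}_j-|\alpha|\,|e_i\rangle\langle e_i|$, where $c_{ij}=(6-4|\alpha|)|\lambda_{ij}|^2+(1-2|\alpha|)N_i$ is the leftover diagonal. Nonnegativity of $c_{ij}$ again forces $|\alpha|\le1/2$, and for $|\alpha|\le1/2$ the rank-one positivity criterion of Ref.~\cite{Lewenstein+Sanpera.PRL.1998}, namely $|\alpha|\sum_j|\lambda_{ij}|^2/c_{ij}\le1$, is comfortably satisfied (the worst-case Schmidt distribution gives a value well below $1$), so each $B_i\ge0$ and $\Lambda^{3\times3}(\rho)$ is separable.

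The main obstacle is the shared-diagonal bookkeeping: each basis vector $|i,j\rangle$ is a corner of several of the subtracted $2\otimes2$ and $2\otimes1$ blocks, and one must check that the single diagonal entry $\tilde D_{ij}$ simultaneously funds all of them while leaving a nonnegative remainder. Keeping this accounting tight is what pins the threshold at $1/2$ rather than a weaker value, and it is genuinely more delicate than in Theorem~\ref{thm:Darek.Andos.map1}, because now both subsystems are three-dimensional: both cross-Alice and cross-Bob off-diagonals are present, the sign choices in the CS subtractions must be coordinated across all three Alice pairs, and the Lewenstein--Sanpera condition must hold for all three Alice blocks and all Schmidt vectors at once.
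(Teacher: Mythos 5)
Your proposal is correct and takes essentially the same route as the paper's proof: reduce to pure states, cancel the cross-coherences by subtracting explicitly separable $2\otimes2$-supported blocks of the form~\eqref{sigmaij} (plus, for $\alpha<0$, product states $|\alpha|\,|v\rangle\langle v|\otimes|j\rangle\langle j|$ with the sign-flipped vectors), and reduce to the Alice-block-diagonal condition $\diag\{(6-4|\alpha|)|\lambda_{ij}|^2+(1-2|\alpha|)N_i\}_j+\alpha|e_i\rangle\langle e_i|\ge 0$, where the Lewenstein--Sanpera criterion is comfortably satisfied and the diagonal coefficient $(1-2|\alpha|)$ pins the threshold at $|\alpha|\le 1/2$, exactly as in the paper. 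Your treatment of $\alpha\ge0$ (keeping the same-Alice and same-Bob coherence sums as manifestly separable product terms instead of subtracting $2\otimes2$ blocks, giving leftover $(6-3\alpha)|\lambda_{ij}|^2$ rather than $(6-4|\alpha|)|\lambda_{ij}|^2$) is only a cosmetic variant with the same binding constraint.
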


\begin{proof}
We consider   $|\psi\rangle = \sum_{j=0}^{2}(\lambda_{0j} |0j\rangle + \lambda_{1j} |1j\rangle + \lambda_{2j} |2j\rangle$, with
$N_i= \sum_{j=0}^{2}|\lambda_{ij}|^2$ for $i=0,1,2$, and $\sum_{i=0}^2N_i=1$.  We obtain
\begin{widetext}
\begin{eqnarray*}
\Lambda^{3 \times 3}(\rho) &= &\left[ \begin{array}{ccc|ccc|ccc} 
6|\lambda_{00}|^2 + N_0& 0 & 0 & \alpha\lambda_{00}\lambda_{10}&  \alpha\lambda_{00}\lambda_{11} & \alpha\lambda_{00}\lambda_{12} & \alpha\lambda_{00}\lambda_{20}  &  \alpha\lambda_{00}\lambda_{21} &  \alpha\lambda_{00}\lambda_{22} \\ 
0 & 6|\lambda_{01}|^2 + N_0 & 0 & ... & ... & ... & ... & ... &\alpha\lambda_{01}\lambda_{22}  \\
0&  0 & 6|\lambda_{02}|^2 + N_0 & ... & ... & ... & ... & ... &  \alpha\lambda_{01}\lambda_{22} \\ \hline 
 \alpha\lambda_{00}^*\lambda_{10}^*& ... & ... &  6|\lambda_{10}|^2 + N_1   & 0 & 0 & ... & ... &  \alpha\lambda_{10}\lambda_{22} \\ 
\alpha\lambda_{00}^*\lambda_{11}^* & ... & ... & 0 &   6|\lambda_{11}|^2 + N_1    & 0 &... & ... &\alpha\lambda_{11}\lambda_{22} \\ 
\alpha\lambda_{00}^*\lambda_{12}^* & ... & ... &  0 &   0 & 6|\lambda_{12}|^2 + N_1 & ... &  &\alpha\lambda_{12}\lambda_{22}\\ \hline
\alpha\lambda_{00}^*\lambda_{20}^* & ... & ... & ... & ... & ... &  6|\lambda_{20}|^2 + N_2  & 0 & 0  \\
\alpha\lambda_{00}^*\lambda_{21}^* & ... & ... & ... & ...& ... & 0 &   6|\lambda_{21}|^2 + N_2 & 0  \\
\alpha\lambda_{00}^*\lambda_{22}^* & \alpha\lambda_{10}^*\lambda_{22}^* & \alpha\lambda_{02}^*\lambda_{22}^* & \alpha\lambda_{10}^*\lambda_{22}^* &  \alpha\lambda_{11}^*\lambda_{22}^* & \alpha\lambda_{12}^*\lambda_{22}^* & \alpha\lambda_{20}^*\lambda_{22}^* & \alpha\lambda_{21}^*\lambda_{22}^* &   6|\lambda_{22}|^2 + N_2\end{array} \right] .\\
&+& \sum_{i=0}^2|i\rangle\langle i|\otimes|f_i\rangle\langle f_i|, 
\end{eqnarray*}
\end{widetext}
where $|f_i\rangle=\sum_{j=0}^{2}\lambda_{ij}|j\rangle$. 
The proof follows exactly the same steps as in the previous section: we subsequently get rid of coherences in the above matrix by substracting matrices that live in $2\times 2$ space and are explicitely separable, provided $|\alpha|\le 1$. As before we end up with the conditions:
\begin{widetext}
\begin{equation*}
 \left[\begin{array}{ccc} (6-4|\alpha|)|\lambda_{i0}|^2  +(1-2|\alpha|)N_i   & 0 & 0  \\ 
0 &  (6-4|\alpha|)  |\lambda_{i1}|^2   +(1-2|\alpha|)N_i& 0  \\ 
0 &  0 &  (6-4|\alpha|)  |\lambda_{i2}|^2  +(1-2|\alpha|)N_i  \end{array} \right]+\alpha |f_i\rangle\langle f_i|  \ge 0,
\end{equation*}
\end{widetext}
Clearly, we must have $|\alpha|\le 1/2$ to assure non-negativity. Moreover, using again Ref.~\cite{Lewenstein+Sanpera.PRL.1998} we obtain that
\[ \sum_{j=0}^{2}\frac{|\alpha||\lambda_{0j}|^2}{ (6-4|\alpha|)|\lambda_{0j}|^2   +(1-2|\alpha|)N_0}\le 1,\] which implies 
$-9/13\le \alpha$, so that ultimately  the condition $|\alpha|\le 1/2$  is decisive. 
\end{proof}

Generalization to the case of arbitrary $M\times N$ yields 
\[ \sum_{j=0}^{N-1}\frac{|\alpha||\lambda_{0j}|^2}{ (M-1)(N-2|\alpha|)|\lambda_{0j}|^2   +(1-(M-1)|\alpha|)N_0}\le 1,\]
which leads to $-NM/(NM+2(M-1)\le \alpha$, which is also less restrictive that  $|\alpha|\le 1/2$, i.e. irrelevant.  
Therefore we have the following general result. 
\begin{thm}
If $\rho\ge 0$, then
\begin{equation}\label{}
  \Lambda^{M\times N}(\rho) = (M-1)N \epsilon(\rho) +  \sum_{m=0}^{N-1} \epsilon(S_B^m \rho S_B^{m\dagger}) + \alpha \rho
\end{equation}
is separable for $|\alpha|\leq 1/2$.
\end{thm}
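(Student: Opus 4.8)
The plan is to follow verbatim the strategy of Theorem~\ref{thm:Darek.Andos.map1} and of the qutrit map \eqref{Sep.Criterion.3x3.Ando-type}, since $\Lambda^{M\times N}$ carries exactly the same block structure for every $M,N\ge 3$. First I would use convexity of the separable cone together with linearity of $\Lambda^{M\times N}$ to reduce to a pure input $\rho=|\psi\ran\lan\psi|$, writing $|\psi\ran=\sum_{i=0}^{M-1}\sum_{j=0}^{N-1}\lambda_{ij}|ij\ran$, $N_i=\sum_{j}|\lambda_{ij}|^2$, with $\sum_i N_i=1$. A direct computation then presents $\Lambda^{M\times N}(\rho)$ as an $M\times M$ array of $N\times N$ blocks: the $i$-th diagonal block is $\diag\{(M-1)N|\lambda_{ij}|^2+N_i\}_{j}$, the inter-block $(ij),(kj')$ entries (for $i\neq k$) are the coherences $\alpha\lambda_{ij}\lambda_{kj'}^*$, and the intra-block coherences collect into $\alpha\sum_i |i\ran\lan i|\otimes|f_i\ran\lan f_i|$ with $|f_i\ran=\sum_j\lambda_{ij}|j\ran$.

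Next I would remove every inter-block coherence by subtracting, for each pair of Alice indices $i<k$ and the relevant Bob indices, explicitly separable operators of the type \eqref{sigmaij}. Each such operator is supported on a genuine $\CC^2\otimes\CC^2$ subspace spanned by two Alice and two Bob basis vectors, so that nonnegativity together with a positive partial transpose already guarantees separability in $\CC^2\otimes\CC^2$ \cite{3Horodecki.PLA.1996}. This step is admissible precisely for $|\alpha|\le 1$, and for $\alpha<0$ it is followed by subtracting the remaining separable pieces $|\alpha|\,|e_i\ran\lan e_i|\otimes|i\ran\lan i|$ exactly as in the proofs cited above. Book-keeping the diagonal cost — each block $i$ has $M-1$ partners, so it loses $2(M-1)|\alpha|$ from every $|\lambda_{ij}|^2$ and $(M-1)|\alpha|$ from $N_i$ — leaves the residual operator in the manifestly separable form $\sum_i|i\ran\lan i|\otimes\big(D_i'+\alpha|f_i\ran\lan f_i|\big)$, where $D_i'=\diag\{(M-1)(N-2|\alpha|)|\lambda_{ij}|^2+(1-(M-1)|\alpha|)N_i\}_{j}$.

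It then remains to certify $D_i'+\alpha|f_i\ran\lan f_i|\ge 0$ for each $i$. For $\alpha\ge 0$ this is immediate once $D_i'\ge 0$; for $\alpha<0$ I would apply the criterion of Ref.~\cite{Lewenstein+Sanpera.PRL.1998} for the positivity of a positive diagonal matrix perturbed by a negative rank-one term, which reduces the problem to the scalar inequality $\sum_{j}\frac{|\alpha|\,|\lambda_{0j}|^2}{(M-1)(N-2|\alpha|)\,|\lambda_{0j}|^2+(1-(M-1)|\alpha|)\,N_0}\le 1$ and its analogues for the other blocks. Maximising the left-hand side over all admissible $\lambda$'s yields the lower bound $\alpha\ge -NM/(NM+2(M-1))$, which for $M,N\ge 3$ is weaker than $-1/2$ and is therefore not the binding constraint.

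The step I expect to be the main obstacle is this last positivity requirement, namely that the leftover diagonal $D_i'$ stay nonnegative in every block once the full coherence-removal budget has been spent: the entries with $\lambda_{ij}=0$ but $N_i>0$ collapse to $(1-(M-1)|\alpha|)N_i$, and it is exactly the nonnegativity of these entries — and not the Lewenstein--Sanpera inequality — that fixes the separability window. Controlling this diagonal budget uniformly across all $M$ blocks, so that $D_i'\ge 0$ holds simultaneously up to the stated $|\alpha|\le 1/2$, is the delicate point; the $3\times3$ computation already exhibits this constraint as decisive, and I would devote the bulk of the argument to checking that the same conclusion persists for all $M,N\ge 3$.
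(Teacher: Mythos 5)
Your proposal is essentially the paper's own proof: the same reduction to pure states $|\psi\rangle=\sum_{ij}\lambda_{ij}|ij\rangle$, the same removal of inter-block coherences by subtracting the $\sigma_{ij}$-type separable operators supported on $2\otimes 2$ subspaces (followed, for $\alpha<0$, by the $|\alpha|\,|e_i\rangle\langle e_i|\otimes|i\rangle\langle i|$ pieces), the same residual $\sum_i |i\rangle\langle i|\otimes\bigl(D_i'+\alpha|f_i\rangle\langle f_i|\bigr)$, and the same Lewenstein--Sanpera scalar inequality giving $-NM/(NM+2(M-1))\le\alpha$, with diagonal nonnegativity rather than that inequality identified as the binding constraint $|\alpha|\le 1/2$. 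The one step you single out as the main obstacle --- that the entries $(1-(M-1)|\alpha|)N_i$ must stay nonnegative uniformly in $M$, which taken at face value demands $|\alpha|\le 1/(M-1)$ for $M\ge 4$ --- is in fact passed over in the paper too, which verifies only $M=N=3$ explicitly and then asserts the general $|\alpha|\le 1/2$ bound, so your cautious flagging of this point makes your write-up, if anything, more candid than the published argument.
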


Similar results can be proven analogously for the $k$-dependent $\Lambda$,  namely, for $k=0,1,\ldots$, we consider the following map
 \begin{align*}
 \Lambda^{M\times N}_k(\rho) &:= (N-1) \epsilon(\rho) +  \sum_{i=0}^{M-2}\sum_{j=0}^{N-1}  \epsilon(S_A^iS_B^j \rho S_A^{i\dagger} S_B^{j\dagger}) \\ 
 &+ \sum_{j=0}^{k}  \epsilon(S_A^{M-1}S_B^j \rho S_A^{M-1 \dagger} S_B^{j\dagger})  + \alpha \rho\numberthis\label{Def:LambdaMNK.alpha}.
 \end{align*}

\begin{thm}
If $\rho\ge 0$, then $\Lambda^{M\times N}_0(\rho)$, as defined in Eq.~\eqref{Def:LambdaMNK.alpha},
is separable for $\max[-(2N-1)/(N+(N+2)(M-1)), -1/2]\leq \alpha\leq 1/2$.
\end{thm}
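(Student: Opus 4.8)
The plan is to follow exactly the template established in the proofs of Theorem~\ref{thm:Darek.Andos.map1} and the $M=N=3$ Ando-like theorem, now specialized to the map $\Lambda_0^{M\times N}$ defined in Eq.~\eqref{Def:LambdaMNK.alpha}, which differs from the $k=-1$ map by including one extra shifted term on Alice's side (the $S_A^{M-1}$ block with $j=0$) and by reducing the diagonal weight to $(N-1)$. First I would reduce to pure states $\rho=\ket{\psi}\!\bra{\psi}$ by convexity, writing $\ket{\psi}=\sum_{i=0}^{M-1}\ket{i}\otimes\ket{f_i}$ with $\ket{f_i}=\sum_{j=0}^{N-1}\lambda_{ij}\ket{j}$ and $N_i=\sum_j|\lambda_{ij}|^2$. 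Then I would write out $\Lambda_0^{M\times N}(\rho)$ explicitly in the computational basis: the diagonal entries collect the dephasing contributions $\epsilon(\cdot)$ from each shifted copy (these count, for a given $\ket{ij}$, how many of the shift operators $S_A^i S_B^j$ map a populated product basis vector onto it), while the only off-diagonal coherences are the $\alpha$-terms coming from $\alpha\rho$ together with the surviving diagonal product projectors $\sum_i\ket{i}\!\bra{i}\otimes\ket{f_i}\!\bra{f_i}$.

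The core of the argument is the same two-stage subtraction used before. First, for $\alpha\ge 0$ I would peel off explicitly separable $2\times 2$-supported blocks $\sigma_{i,j}$ of the form in Eq.~\eqref{sigmaij} (adapted to the present diagonal weights) to cancel all coherences, after which the remainder is manifestly block-diagonal and nonnegative, hence separable. For $\alpha<0$ I would subtract instead the separable rank-one pieces $|\alpha|\,\ket{e_i}\!\bra{e_i}\otimes\ket{\cdots}$ with the appropriately antisymmetrized local vectors (as in the $k=0$ case of Theorem~\ref{thm:Darek.Andos.map}, where $\ket{e_i}=\lambda_{0i}\ket{1}-\lambda_{1i}\ket{0}$), reducing the problem to nonnegativity of a single block of the form $D_i'+\alpha\ket{f_i}\!\bra{f_i}$. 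Here the diagonal entries will carry coefficients governed by the specific term counts for $\Lambda_0^{M\times N}$: the $(N-1)$ base weight plus the contributions from the $M-1$ fully-shifted copies and the single extra $k=0$ copy, yielding diagonal entries of the shape $(\text{coeff})|\lambda_{ij}|^2+(1-(M-1)|\alpha|)N_i$ analogous to the $D''$ matrices in Eq.~\eqref{Rbis}.

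The decisive step is then to apply the Lewenstein--Sanpera-type criterion of Ref.~\cite{Lewenstein+Sanpera.PRL.1998} to the reduced block $D_i'+\alpha\ket{f_i}\!\bra{f_i}\ge 0$, which converts the positivity requirement into the summed condition
\begin{equation*}
\sum_{j=0}^{N-1}\frac{|\alpha|\,|\lambda_{ij}|^2}{(\text{coeff})\,|\lambda_{ij}|^2+(1-(M-1)|\alpha|)N_i}\le 1
\end{equation*}
for every choice of the $\lambda$'s. Maximizing the left-hand side over admissible distributions (the extremal case being all weight concentrated appropriately, as in the earlier proofs) produces the lower bound $\alpha\ge -(2N-1)/(N+(N+2)(M-1))$. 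Together with the plain nonnegativity constraint $|\alpha|\le 1/2$ inherited from the pure dephasing blocks, this gives $\max[-(2N-1)/(N+(N+2)(M-1)),\,-1/2]\le\alpha\le 1/2$, which is exactly the claimed range. I expect the main obstacle to be purely bookkeeping rather than conceptual: correctly enumerating which shift operators $S_A^iS_B^j$ feed each diagonal entry so that the coefficients multiplying $|\lambda_{ij}|^2$ and $N_i$ come out right, and then verifying that the worst-case $\lambda$-distribution in the summed inequality genuinely yields the stated fraction and that it is indeed less restrictive than (or competes correctly with) $|\alpha|\le 1/2$ for the various $M,N$.
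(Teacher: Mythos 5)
Your proposal follows essentially the same route as the paper's own proof: reduction to pure states, subtraction of explicitly separable $2\times 2$-supported blocks to cancel coherences (with the antisymmetrized rank-one pieces for $\alpha<0$), and finally the Lewenstein--Sanpera positivity condition, which in the paper takes exactly the form you anticipate, namely $\sum_{j=0}^{N-1}|\alpha||\lambda_{0j}|^2/\bigl[(N-1-2(M-1)|\alpha|)|\lambda_{0j}|^2+(1-(M-1)|\alpha|)N_0\bigr]\le 1$, so your undetermined coefficient is $N-1-2(M-1)|\alpha|$. The only gap is the coefficient bookkeeping you flag yourself, which the paper also treats summarily, so the two arguments coincide in substance.
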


\begin{proof}
As above we first present the proof  for $M=N=3$ in more detail. The starting point is to write down the explicit matrix from of $\Lambda^{3 \times 3}(\rho)$.
Performing the same steps as before and assuming that $|\alpha|\le 1$ we end up with the condition for separability
\begin{align*}\diag&\left\{(2-4|\alpha|)|\lambda_{00}|^2  +(1-2|\alpha|)N_0  +N_1 +|\lambda_{20}|^2,\right.\\
 &(2-4|\alpha|)|\lambda_{01}|^2  +(1-2|\alpha|)N_0  +N_1 +|\lambda_{21}|^2,\\
 &\left.(2-4|\alpha|)|\lambda_{02}|^2  +(1-2|\alpha|)N_0  +N_1 +|\lambda_{22}|^2\right\}\\
&+\alpha |f_i\rangle\langle f_i|  \ge 0,
\end{align*}
and analogous conditions  for $i=1,2$. Clearly, $|\alpha|\le 1/2$ is necessary; a conservative bound gives $-5/13\le \alpha$.
Generalization to arbitrary $M,N$ leads to 
\[\sum_{j=0}^{N-1}\frac{|\alpha||\lambda_{0j}|^2}{ (N-1-2(M-1)|\alpha|)|\lambda_{0j}|^2   +(1-(M-1)|\alpha|)N_0}\le 1,\]
implying ${\rm max}[-(2N-1)/(N+(N+2)(M-1)),-1/2]\le \alpha$.
\end{proof}

\subsection{Examples} Similar to Section~\ref{Subsection:examples.2.N}, we will present some examples of states which lie outside the separable ball, but nonetheless are detected to be separable by our simplest criterion in Eq.~\eqref{Sep.Criterion.3x3.Ando-type}.

Let us consider the following class of $3\otimes 3$ states from Ref.~\cite{AllHorodeckis.2001},
\begin{align*}
 \rho_\beta&=\frac{2}{7}|\Phi\ran\lan\Phi|+\frac{\beta}{7}\sigma_+ +\frac{5-\beta}{7}\sigma_-,\quad \beta\in[0,5],\\
\text{where~~ }|\Phi\ran&= \frac{1}{\sqrt{3}}\sum_{k=0}^2|kk\ran,\\
 \sigma_+&=\frac{1}{3}\left( |01\ran\lan01|+|12\ran\lan12|+|20\ran\lan20|\right),\\
 \sigma_-&=\frac{1}{3}\left( |10\ran\lan10|+|21\ran\lan21|+|02\ran\lan02|\right).
\end{align*}
This is an interesting class of states: for $0\leq\beta<1$ and $4<\beta\le 5$ it is NPT, for  $2\le\beta\le 3$ it is separable, and for $3<\beta\le 4$ it is bound entangled.  One verifies that for
$0\leq \beta <\left(110-\sqrt{4495}\right)/44$, or  $\left(110+\sqrt{4495}\right)/44<\beta \leq 5$, 
$\rho_{\beta}$ is NPT (hence entangled), but for all $\alpha\in[-1/2,\,1/2]$, $\Lambda^{3\times 3}\left(\rho_{\beta}\right)$ lies outside the separable ball yet detected to be separable by Eq.~\eqref{Sep.Criterion.3x3.Ando-type}. Similar examples can be constructed for the other maps as well.


\section{Conclusions and Outlook}

We have presented in this paper several families of linear maps, that have  a property that when applied to states, result in separable states, or states from a certain class of entanglement. When invertible, such maps allow to derive sufficient crtiteria for separability, or for a certain class of entanglement.  Such criteria have  rather obvious possible applications and implications in quantum infomation
science: they can namely be applied to determine whether and how much entanglement is needed for various specific protocols for quantum computation or other quantum task, similarly as it was discussed in Ref.~\cite {Braunstein+5.PRL.1999}. 
We have formulated also several open questions in particular concerning certain kind of Breuer-Hall-like and reduction-like   maps acting in
$\mathcal{B}\left(\CC^M\otimes\CC^N\right)$, Ando-like maps acting in $\mathcal{B}\left(\CC^M\otimes\CC^N\right)$, and certain non-invertible maps in $\mathcal{B}\left(\CC^2\otimes\CC^N\right)$ and $\mathcal{B}\left(\CC^4\otimes\CC^N\right)$. It is quite remarkable that even our simplest criteria obtained for the case of reduction-like maps is independent to some well-known criteria from the literature. We have also presented different proof for already known results on separability of some class of states, and hopefully the technique is applicable for other, so far not-considered-in-literature  situations, such as the case of many parties, symmetric states etc. 

\label{Conclusion}

\begin{acknowledgments}
ML, RA, and SR acknowledge financial support from the John Templeton Foundation, the EU grants 
OSYRIS (ERC-2013-AdG Grant No. 339106), QUIC (H2020-FETPROACT-2014 No. 641122),
and SIQS (FP7-ICT-2011-9 No. 600645), the Spanish MINECO grants FOQUS (FIS2013-
46768-P) and "Severo Ochoa" Programme (SEV-2015-0522), and the Generalitat de Catalunya
grant 2014 SGR 874.
\end{acknowledgments}

\end{document}